\newtheorem{theorem}{Theorem}
\newtheorem{assump}{Assumption} 
\begin{document}
\title{Connecting extended Wigner’s friend arguments and noncontextuality}

\author{Laurens Walleghem}
%\thanks{LW and YY are co-first authors of this paper. \newline Email: \href{laurens.walleghem@york.ac.uk}{laurens.walleghem@york.ac.uk}, \href{yying@pitp.ca}{yying@pitp.ca}}
\affiliation{Department of Mathematics, University of York, Heslington, York YO10 5DD, United Kingdom}
\affiliation{International Iberian Nanotechnology Laboratory (INL), Av. Mestre Jos\'{e} Veiga, 4715-330 Braga, Portugal}
\author{Y{\`i}l{\`e} Y{\=\i}ng}
%\thanks{LW and YY are co-first authors of this paper. \newline Email: \href{laurens.walleghem@york.ac.uk}{laurens.walleghem@york.ac.uk}, \href{yying@pitp.ca}{yying@pitp.ca}}
\affiliation{Perimeter Institute for Theoretical Physics, Waterloo, Ontario, Canada, N2L 2Y5}
\affiliation{Department of Physics and Astronomy, University of Waterloo, Waterloo, Ontario, Canada, N2L 3G1}
\author{Rafael Wagner}
\affiliation{Institute of Theoretical Physics, Ulm University, Albert-Einstein-Allee 11 89081, Ulm, Germany}
\affiliation{International Iberian Nanotechnology Laboratory (INL), Av. Mestre Jos\'{e} Veiga, 4715-330 Braga, Portugal}
\affiliation{Centro de F\'{i}sica, Universidade do Minho, Braga 4710-057, Portugal}
\affiliation{Department of Physics ``E. Fermi'', University of Pisa, Largo B. Pontecorvo 3, 56127 Pisa, Italy}
\author{David Schmid}
\affiliation{International Centre for Theory of Quantum Technologies, University of Gda{\'n}sk, 80-309 Gda\'nsk, Poland}

\begin{abstract}
The Local Friendliness argument is an extended Wigner's friend no-go theorem that provides strong constraints on the nature of reality---stronger even than those imposed by Bell's theorem or by noncontextuality arguments. In this work, we prove a variety of connections between Local Friendliness scenarios and Kochen-Specker noncontextuality. Specifically, we first show how one can derive new Local Friendliness inequalities using known tools and results from the literature on Kochen-Specker noncontextuality. In doing so, we provide a new derivation for some of the facets of the Local Friendliness polytope, and we prove that this polytope is equal to the Bell polytope in a wide range of extended Wigner's friend scenarios with multipartite agents and sequential measurements.  We then show how any possibilistic Kochen-Specker argument can be mathematically translated into a related proof of the Local Friendliness no-go theorem. In particular, we construct a novel kind of Local Friendliness scenario where a friend implements several compatible measurements (or joint measurements of these) in between the superobserver's operations on them.  We illustrate this with the well-known 5-cycle and Peres-Mermin contextuality arguments.
\end{abstract}

\maketitle
\begingroup
\renewcommand\thefootnote{}  % removes number
\footnotetext{LW and YY are co-first authors of this paper. \newline
Email: \href{mailto:laurens.walleghem@york.ac.uk}{laurens.walleghem@york.ac.uk}, 
\href{mailto:yying@pitp.ca}{yying@pitp.ca}}
\addtocounter{footnote}{-1}  % don't increment footnote counter
\endgroup

\tableofcontents

\section{Introduction} \label{sec:intro}

Recently, extensions~\cite{schmid2023review,brukner2017quantum,frauchiger2018quantum,bong2020strong} 
of the Wigner's friend thought experiment~\cite{Wigner1995} have led to novel no-go theorems for the foundations of quantum theory. These theorems illustrate that quantum theory does not provide a clear, consistent, and interpretation-free description of observers as quantum systems.

One of the most striking of these no-go theorems is the Local Friendliness (LF) no-go theorem~\cite{bong2020strong,cavalcanti2021implications,wiseman2023thoughtful}, which demonstrates an inconsistency between the assumptions of (i) absoluteness of observed outcomes, (ii) a notion of locality (that is weaker than Bell's local causality), and (iii) that closed physical systems (even those with macroscopic observers) constitute valid quantum systems that evolve unitarily.
Because the notion of locality used in this argument is strictly weaker than local causality, the resulting no-go theorem can be viewed as a stronger constraint on physical theories than Bell's theorem~\cite{bong2020strong,cavalcanti2021implications,yile2023}. This cements its place as one of the most significant known no-go theorems in quantum foundations. 

Relative to other important no-go theorems like those due to Bell~\cite{bell1964einstein,brunner2014bell}, Spekkens~\cite{spekkens2005contextuality}, Kochen-Specker~\cite{kochen1990onthe,budroni2022kochenspecker}, and Leggett-Garg~\cite{legget1985quantum,emary2013leggett,vitagliano2023leggett,schmid2024reviewreformulation}, the Local Friendliness no-go theorem has not yet been analyzed in significant detail. How wide a variety of scenarios support such a no-go argument? Given any such scenario, what are the correlations consistent with the Local Friendliness assumptions, and can quantum theory violate these? How do such arguments relate to other fundamental notions such as locality, noncontextuality, macroscopic realism, and so on?
Although some work has been done on these questions~\cite{bong2020strong,utreras2022extended,utreras2023allowing,yile2023,wiseman2023thoughtful,haddara2022possibilistic,cavalcanti2021implications,schmid2023review,haddara2024local}, much remains to be discovered.

In this work, we make progress on all three of these sorts of questions, with the unifying thread that all of our results are motivated by elucidating the connections between Local Friendliness and Kochen-Specker noncontextuality.\footnote{The relevance of noncontextuality in some other extended Wigner's friend scenarios has been studied, e.g., in Refs.~\cite{walleghem2023extended,nurgalieva2023multi,walleghem2024strong,szangolies2020quantum,montanhano2023contextuality}.} First, after reviewing the necessary background on KS noncontextuality and Local Friendliness in \cref{sec:background}, we provide methods for deriving new Local Friendliness inequalities in \cref{sec:LFasKSNC} using tools and results from the literature on Kochen-Specker noncontextuality. This allows us to provide a new derivation of and perspective on some (although not all) Local Friendliness inequalities. 
We then use these ideas to prove that the set of correlations consistent with Bell's notion of local causality coincides with the set of correlations consistent with Local Friendliness in a wide range of scenarios. Finally, in \cref{sec:KSNC_to_LF}, we show how one can translate any possibilistic Kochen-Specker argument into a corresponding proof of the failure of Local Friendliness. Two examples of this are worked out in detail, beginning with the 5-cycle~\cite{cabello2013simple,santos2021conditions} and Peres-Mermin~\cite{peres1997quantum} proofs of the failure of Kochen-Specker noncontextuality.
Sections \ref{sec:LFasKSNC} and \ref{sec:KSNC_to_LF} can be read independently of each other.

\section{Background} \label{sec:background}
\subsection{Review of Kochen-Specker noncontextuality}
\label{sec:recap_KSNC}

We first briefly review some concepts regarding Kochen-Specker noncontextuality (KSNC) that will be useful later. For a comprehensive introduction, we refer the reader to Refs.~\cite{amaral2018graph,budroni2022kochenspecker}.

Let $\mathcal{M}$ be a finite set of measurements, of which some subsets are compatible, that is, jointly measurable. 
These subsets are called \emph{contexts}, and a given measurement may appear in more than one context. A context is said to be maximal in a scenario if it is not a subset of any other context. For example, if we have 3 measurements labeled by $M_1,M_2,M_3$ in the same context (denoted as a set $\{M_1,M_2,M_3\}$), we can jointly measure them, obtaining a joint outcome $(m_1,m_2,m_3)$. A \emph{measurement scenario} 
$(\mathcal{M},\mathcal{C},O)$ is defined by a finite set of measurements $\mathcal{M}$, a finite set of maximal contexts $\mathcal{C}$, and a set of possible outcomes $O$ for each measurement.  

One can represent such a measurement scenario by a compatibility hypergraph---a hypergraph which associates a measurement to each node of the graph and a hyperedge to each context~\cite{amaral2018graph}. In specific measurement scenarios, it is sufficient to consider only on the 2-skeleton of the compatibility hypergraph, known as the \emph{compatibility graph}~\cite[Def. 2.2, pg. 15]{amaral2018graph} (see also Ref.~\cite{xu2019necessary}), in which case contexts are denoted by cliques in the graph. 
One can find an example of this in Figure~\ref{fig:NaBa} of Section~\ref{sec:compatibility_graph}. 

A behavior in such a measurement scenario is a collection of \emph{empirical correlations}, one for each context $C$, over the outcomes for the measurements in that context. An empirical correlation is said to satisfy \emph{no-disturbance} if, for any measurement appearing in more than one context, the marginal probabilities over its outcomes are independent of context.
For example, if $M_1$ is in the context with $C_1\coloneqq\{M_1,M_2\}$ and the context $C_2\coloneqq\{M_1,M_3\}$, namely, $M_1$ is compatible with $M_2$ and $M_3$, then the empirical correlation of the first context, $\wp_{C_1}(m_1,m_2)$, and the empirical correlation of the second context, $\wp_{C_2}(m_1,m_3)$, agree on their overlap\footnote{We say that distributions \emph{agree on their overlap} if they have the same marginals on the intersection of their domains.}, namely the probability for $m_1$:
\begin{align}
\sum_{m_2}\wp_{C_1}(m_1,m_2)=\sum_{C_2}\wp(m_1,m_3).
\end{align}
We use $\wp$ to denote empirical correlations---i.e., those that are in principle experimentally accessible, with respect to some physical theory. We use $P$ to denote generic probability distributions that are not necessarily experimentally accessible.

A behavior is said to be Kochen-Specker noncontextual (KSNC) if there is a single probability distribution $P_{NC}(\cdot)$ over all measurements in $\mathcal{M}$ whose marginals reproduce the empirical correlations for each context. Note that any KSNC behavior automatically satisfies no-disturbance, so it only makes sense to consider KSNC for no-disturbing behaviors.

If for some particular compatibility hypergraph, there is a behavior (or correlations) with no such joint probability distribution, then one calls such a behavior KS contextual, and it provides a proof of the KSNC no-go theorem.
One moreover says that the correlations are logically contextual or possibilistically contextual if the proof that no such joint probability distribution exists can be given while only keeping track of whether or not a given combination of measurement outcomes is possible or not, i.e., by looking at the support of the probability distributions in the argument~\cite{abramsky2016possibilities,abramsky2011sheaf}.

The set of all empirical correlations satisfying these assumptions in a given scenario is known as the Kochen-Specker noncontextuality polytope and is a function of the compatibility hypergraph for the scenario. A special case of a measurement scenario is a Bell scenario, in which the measurement contexts arise from a party structure where each party can choose from a certain set of measurements to perform on the system they own. In this case, the Kochen-Specker noncontextuality polytope is well known to coincide with the Bell polytope~\cite{abramsky2016possibilities,choudhary2024lifting,abramsky2011sheaf,barbosa2023closing,abramsky2012cohomology,barbosa2022continuous,araujo2013all,wagner2024inequalities}---that is, the polytope of correlations consistent with Bell's local causality.

As a final remark, we note that KSNC applies only to sharp measurements and requires an assumption of outcome determinism for such measurements~\cite{spekkens2014status,KunjwalKS2015}. (In quantum theory, these correspond to projective measurements.)
One can motivate the assumption of outcome determinism for sharp measurements---and more importantly, drop the assumption of sharpness---by appealing to the generalized notion of noncontextuality introduced by Spekkens~\cite{spekkens2005contextuality}, which is in any case motivated by similar considerations, namely, an appeal to a form of Leibniz's principle~\cite{spekkens2019ontologicalidentityempiricalindiscernibles,sep-identity-indiscernible}. KSNC can also be motivated (also without assuming outcome determinism) from no-fine-tuning of a classical causal model~\cite{cavalcantiClassical2018,Pearl_2021}.

\subsection{Local Friendliness scenarios} \label{sec:recap_LF_scenario}

We now introduce the Local Friendliness scenario, sketched in \Cref{fig:EWFS}, following Ref.~\cite{bong2020strong} but with minor modifications that will be convenient later on.  We briefly comment on the difference between our presentation and the usual presentation, as well as some other extended Wigner's friend scenarios in Appendix~\ref{app:compare}.

Consider four agents named Alice, Bob, Charlie, and Debbie. 
At the beginning of the experiment, Charlie and Debbie share a bipartite system $ST$, and each resides in a closed lab. Charlie performs a measurement (denoted $A_1$) on his share of the bipartite system $S$, obtaining outcome $a_1$. Debbie performs a measurement (denoted $B_1$) on her share $T$, obtaining outcome $b_1$. Charlie and Debbie do not have measurement choices.

Alice and Bob are so-called \emph{superobservers}, since Alice is assumed to be able to perform any operations (including coherent quantum transformations) on Charlie's lab, including on Charlie and $S$, and Bob is assumed to be able to perform any operations on Debbie's lab, including on Debbie and $T$. After Charlie and Debbie have performed their measurements, Alice and Bob make choices.

Alice's choices are labeled by $x\in \{1,2,\dots,N_A\}$, and Bob's by $y\in\{1,2,\dots,N_B\}$. Depending on the cardinalities of their choices, such a specific LF scenario is also referred to as the \emph{$N_A,N_B$-setting LF scenario}. 
If Alice chooses $x=1$, Alice opens Charlie's lab and reveals his outcome $a_1$ to the world. 
Otherwise, i.e., for $x \neq 1$, Alice performs a different operation on Charlie’s lab. One particular operation that she could perform (that is often used in proving the no-go theorem~\cite{bong2020strong}) is to first reverse the evolution of Charlie’s lab due to Charlie's measurement (restoring the state of $S$ to its initial state) and to then instruct Charlie to perform another measurement on $S$, whose outcome will be revealed subsequently. We will sometimes refer to such reversal operation as `Alice \emph{undoes} Charlie's measurement'. Charlie's new measurement is denoted $A_x$, and his respective outcome is denoted $a_x$. For instance, if $x=2$, the new measurement performed by Charlie is $A_2$ with outcome $a_2$. After Charlie has performed his new measurement, Alice reveals its outcome to the world. 

Similarly, when Bob chooses $y=1$, Bob opens Debbie's lab and reveals her outcome $b_1$ to the world. For $y \neq 1$, Bob performs a different operation on Charlie's lab. For instance, one can consider the case where Bob reverses the evolution of Debbie's lab due to Debbie's measurement, restoring the state of $T$ to its initial state, and then instructs Debbie to perform a new measurement $B_y$ on $T$, obtaining outcome $b_y$; this again is often used in proving the no-go theorem. After Debbie has performed her new measurement, Bob reveals her outcome to the world.

At the end of the experiment, we collect statistics for $a_x$, $b_y$, $x$, and $y$, which constitute the empirical probabilities $\wp(a_x,b_y|x,y)$. 
These probabilities constitute the empirical correlations (or behaviors) in this LF scenario. For simplicity, we also denote these empirical probabilities as $\wp(a_x,b_y)$ without the explicit conditionals since the conditionals are actually redundant --- the empirical correlation between $a_x$ and $b_y$ are only defined in the runs with the corresponding choices $x$ and $y$.

\begin{figure}[t]
         \centering
\includegraphics[width=0.9\columnwidth]{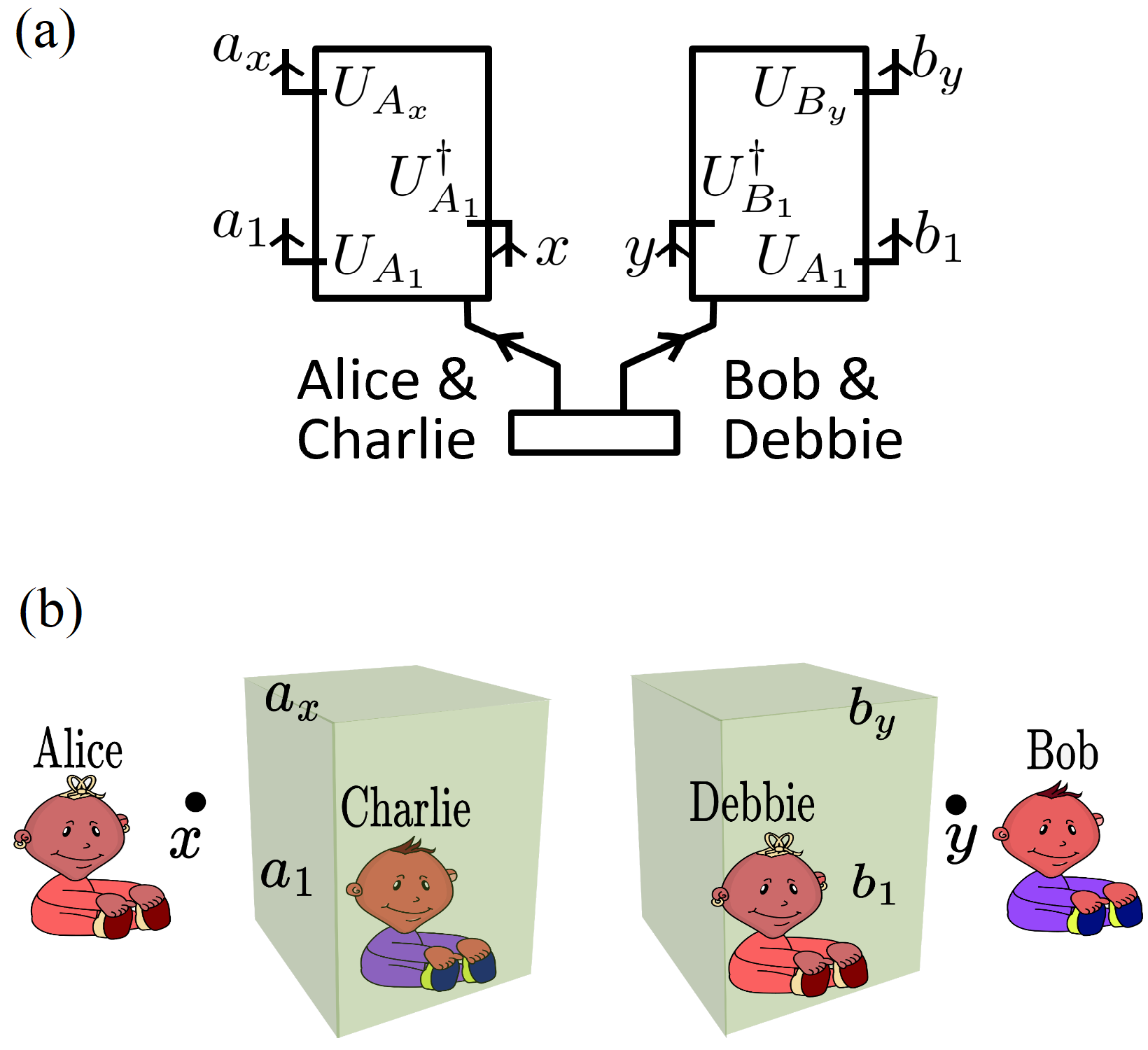}
         \caption{(a) a schematic representation of the LF scenario introduced in \cref{sec:recap_LF_scenario}; (b) a corresponding cartoon representation. Charlie and Debbie share an entangled state and each performs a measurement on their part, obtaining outcomes $a_1$ and $b_1$, respectively. Then, the superobservers Alice and Bob make their respective choices $x$ and $y$. For $x=1$ Alice reveals $a_1$, whereas for $x \in \{2,3,\ldots,N_A\}$ she reverses Charlie's first measurement, instructs Charlie to perform a measurement $A_x$ and reveals the outcome $a_x$. The protocol for Bob's choice $y$ is analogous.}
         \label{fig:EWFS}
\end{figure}

\subsection{The Local Friendliness no-go theorem}
\label{sec:LFnogo}

Within this scenario, one can prove the Local Friendliness no-go theorem. It contains two crucial assumptions called Absoluteness of Observed Events (AOE) and Local Agency (LA)~\cite{bong2020strong,cavalcanti2021implications,yile2023,schmid2023review},\footnote{Instead of Local Agency, one could also use the assumptions, No-Superdeterminism and Locality, as was done in Ref.~\cite{bong2020strong}.} which taken together are termed the \emph{Local Friendliness} assumptions. 

We now formally state these two assumptions and spell out their consequences for the LF scenario.  Due to the modifications we made to the LF scenario, these consequences are described slightly differently than usual, but in a way that better sets the stage for our purposes.  

\begin{assump}[Absoluteness of Observed Events (AOE)] \label{assumption:AOE}
An observed event is an absolute single event, not relative to anything or anyone. 
\end{assump}
In each individual run of the LF scenario, either two, three, or four measurements are performed, depending on Alice's and Bob's choices. Specifically, when $x=1,y=1$, only two measurements are performed while when  $x=2,y=2$, all four measurements are performed in a single run. Thus, depending on the choices, in each run, either two, three, or four outcomes are observed by the agents. AOE implies that for a given value of $x$ and a given value of $y$, there exists a joint distribution over the outcomes observed in the corresponding runs. For example, for $x=i,y=1$, measurements $A_1,A_i,B_1$ are performed and their respective outcomes exist in a single run. By AOE, for these runs, one can assign a well-defined joint distribution over their outcomes $a_1,a_i,b_1$, which we denote as $P(a_1,a_i,b_1|x{=}i,y{=}1)$.\footnote{Here we follow the common abuse of notation in the literature that the same letter such as $a_1$ is used to denote both the random variable representing the outcome and the value of the random variable. If one wishes to write the distribution in a more explicit way, one could use $\mathcal{A}_1, \mathcal{A}_i,\mathcal{B}_1$ for the random variables denoting the respective outcomes and use the lowercase letters for their values; then, the joint distribution can be written as $P(\mathcal{A}_1=a_1,\mathcal{A}_i=a_i,\mathcal{B}_1=b_1|x{=}i,y{=}1)$. } By the experimental setup, the empirical correlation $\wp(a_i,b_1)$ is its marginal, i.e., $P(a_1,a_i,b_1|x{=}i,y{=}1) = \wp(a_i,b_1)$. 
For all given pairs of values $x,y$ of choices, we have the following respective joint distributions:

\begin{subequations}
\label{eq:AOEdis}
\begin{align}
    P(a_1,b_1| x=1,y=1),\\
    P(a_1,a_i,b_1|x=i,y=1),\\
    P(a_1,b_1,b_j|x=1,y=j), \\
    P(a_1,a_i,b_1,b_j|x=i,y=j), \label{eq:AOEij}
\end{align}
\end{subequations}
for $i\in\{2,3,\dots,N_A\}$ and $j\in\{2,3,\dots,N_B\}$.

While these joint probability distributions are not empirically accessible, according to the experimental setup they must satisfy
\begin{subequations}
\label{eq_emp}
\begin{align}
    P(a_1,b_1|x=1,y=1)=\wp(a_1,b_1), \label{eq:AOEem} \\
  \sum_{a_1}P(a_1,a_i,b_1|x=i,y=1)=\wp(a_i,b_1),  \\
  \sum_{b_1}P(a_1,b_1,b_j|x=1,y=j)=\wp(a_1,b_j), \label{eq:empa1bj}\\
 \sum_{a_1,b_1}P(a_1,a_i,b_1,b_j|x=i,y=j)=\wp(a_i,b_j). \label{eq:marg11}
\end{align}
\end{subequations}
\begin{assump}[Local Agency] \label{assumption:Local_Agency}
    If a measurement setting is freely chosen, then it is uncorrelated with any set of relevant events not in its future-light-cone.
\end{assump}
Since in LF scenarios, we assume that none of $a_i$, $a_1$ or $b_1$ is in the future light cone of Bob's measurement setting $y$, and none of $b_j$, $a_1$ or $b_1$ is in the future light cone of Alice's measurement setting $x$, Local Agency implies that 
\begin{subequations}
    \label{eq:LA}
\begin{align}
P(a_1,b_1|x,y)&=P(a_1,b_1) \\
P(a_1,a_i,b_1|x=i,y) &= P(a_1,a_i,b_1|x=i), \label{eq:LAaib1}\\
P(a_1,b_1,b_j|x,y=j)&=P(a_1,b_1,b_j|y=j).\label{eq:LAa1bj}
\end{align}
\end{subequations}

The existence of the distributions in \cref{eq:AOEdis} and the conditions in \cref{eq_emp,eq:LA} constitute the \emph{LF constraints} on the
empirical correlations in the LF scenario, leading to the so-called \emph{LF inequalities}.
The set of empirical correlations consistent with all LF inequalities is termed the \emph{LF polytope}. In \cref{sec:LFasKSNC}, we will explore the details of LF inequalities for various LF scenarios, including generalizations to the ones defined in \cref{sec:recap_LF_scenario}. 

In Refs.~\cite{bong2020strong,wiseman2023thoughtful}, it is shown that some LF inequalities can be violated in a proposed quantum realization of the LF scenario;  hence, one has the following no-go theorem~\cite{bong2020strong}. 
\begin{theorem}[Local Friendliness no-go theorem]
If a superobserver can perform arbitrary quantum operations on an observer and its environment, then no physical theory can satisfy Local Friendliness.
\end{theorem}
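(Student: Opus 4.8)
The plan is to prove the theorem in two steps. First, I would extract an explicit \emph{LF inequality} purely from the LF constraints \cref{eq:AOEdis,eq_emp,eq:LA}. Second, I would exhibit a quantum realization of the LF scenario---which is legitimate precisely because of the theorem's hypothesis that a superobserver may apply arbitrary quantum operations to an observer together with its environment---whose predicted empirical correlations $\wp(a_x,b_y)$ violate that inequality. Since any physical theory describing this experiment must reproduce the quantum predictions, no such theory can also satisfy AOE and LA, which is the assertion of the no-go theorem.

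For the first step I would work in the minimal nontrivial case, the $2,2$-setting scenario. Starting from the four-outcome distribution $P(a_1,a_2,b_1,b_2\mid x{=}2,y{=}2)$ supplied by AOE, I would chain the marginal identities \cref{eq_emp} with the Local-Agency identities \cref{eq:LA}---which state that the relevant sub-marginals of the distributions in \cref{eq:AOEdis} do not depend on the distant setting---to show that this single distribution has each of $\wp(a_1,b_1)$, $\wp(a_2,b_1)$, $\wp(a_1,b_2)$, $\wp(a_2,b_2)$ as a two-variable marginal (and conversely, that any such global distribution yields admissible LF data). It follows that the CHSH inequalities are valid LF inequalities here---indeed the LF polytope in this scenario coincides with the Bell polytope for two dichotomic settings per party, a fact developed more generally in \cref{sec:LFasKSNC}---so it suffices to retain one facet, say $E_{11}+E_{12}+E_{21}-E_{22}\le 2$ with $E_{xy}\coloneqq\sum_{a_x,b_y\in\{0,1\}}(-1)^{a_x+b_y}\wp(a_x,b_y)$.

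For the second step I would take $ST$ to be a maximally entangled two-qubit state and let $A_1,A_2$ and $B_1,B_2$ be spin measurements along the CHSH-optimal directions; Alice's $x{=}2$ operation (and, symmetrically, Bob's $y{=}2$ operation) is the unitary that reverses Charlie's $A_1$ interaction, restoring $S$ to its half of the entangled state, and then triggers the measurement $A_2$, exactly as in \cref{sec:recap_LF_scenario}. A short Born-rule computation then shows that each empirical correlation $\wp(a_x,b_y)$ equals the probability of obtaining $(a_x,b_y)$ upon measuring $A_x$ on one half and $B_y$ on the other half of the maximally entangled state, so the data reach $E_{11}+E_{12}+E_{21}-E_{22}=2\sqrt{2}>2$, violating the inequality from the first step.

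I expect the step needing the most care to be the verification in the second step that the reversal operation genuinely decouples the first friend measurement from the later statistics: one must check that applying the unitary inverse of the $A_1$ (resp.\ $B_1$) interaction, with no decoherence introduced in the interim by the superobserver, returns the lab to a state on which $A_2$ (resp.\ $B_2$) exhibits exactly the standard quantum statistics, and that this holds jointly on both wings. The deeper conceptual point---what keeps the argument from being a triviality---is that AOE forces one to treat the \emph{erased} outcome $a_1$ as a genuine event that must appear in every joint distribution of \cref{eq:AOEdis}; once this is granted and the quantum model is fixed, the violation itself reduces to the textbook CHSH computation.
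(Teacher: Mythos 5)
Your proposal is correct and follows essentially the same route as the paper: derive an LF inequality (the paper does exactly this for the 2,2-setting in \cref{sec:22}, showing via \cref{sec:equi} that AOE and Local Agency force a single joint distribution reproducing all four pairwise correlations, hence CHSH), and then exhibit a quantum reverse-and-remeasure realization that violates it, which is the content of Appendix~\ref{app:violate} together with the Universality of Unitarity discussion in Appendix~\ref{app:compare}. The only remark worth adding is that your 2,2-setting instantiation proves the theorem as literally stated but not its advertised strength over Bell's theorem; for that one needs the 3,3-setting inequalities that are not Bell facets, which is why the cited references work there.
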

 See Appendix~\ref{app:compare} for more details on those proposed quantum realizations and the LF theorem.

\section{Deriving Local Friendliness inequalities from KSNC constraints on subgraphs}
\label{sec:LFasKSNC}

In this section, we show how techniques from the study of Kochen-Specker noncontextuality can be used to derive many Local Friendliness inequalities.
We do so by considering Kochen-Specker inequalities relative to {\em subgraphs} of the compatibility graph naturally associated to the Local Friendliness scenario in question. Then, we show how the LF polytope coincides with a Bell polytope in a wide range of Local Friendliness scenarios (including multipartite systems and sequential measurements).

\subsection{The compatibility graph for a $N_A,N_B$-setting LF scenario}
\label{sec:compatibility_graph}

We start our results by applying the notion of a compatibility graph introduced in \cref{sec:recap_KSNC} to the $N_A,N_B$-setting LF scenario defined in \cref{sec:recap_LF_scenario}. In a $N_A,N_B$-setting LF scenario, all measurements done on Alice's side, i.e., $A_1,\dots,A_{N_{A}}$, are compatible with all measurements done on Bob's side, i.e., $B_1,\dots,B_{N_{A}}$. However, $A_1,\dots,A_{N_{A}}$ measurements are not necessarily compatible with each other, and similarly, $B_1,\dots,B_{N_{B}}$ measurements are not necessarily compatible with each other. Thus, the corresponding compatibility graph for the $N_A,N_B$-setting LF scenario is as shown in \cref{fig:NaBa} where each of $A_1,\dots,A_{N_{A}}$ is connected with each of $B_1,\dots,B_{N_{B}}$, but no connections among  $A_1,\dots,A_{N_{A}}$ or among $B_1,\dots,B_{N_{B}}$.

\begin{figure}[h!]
\includegraphics[width=0.15\textwidth]{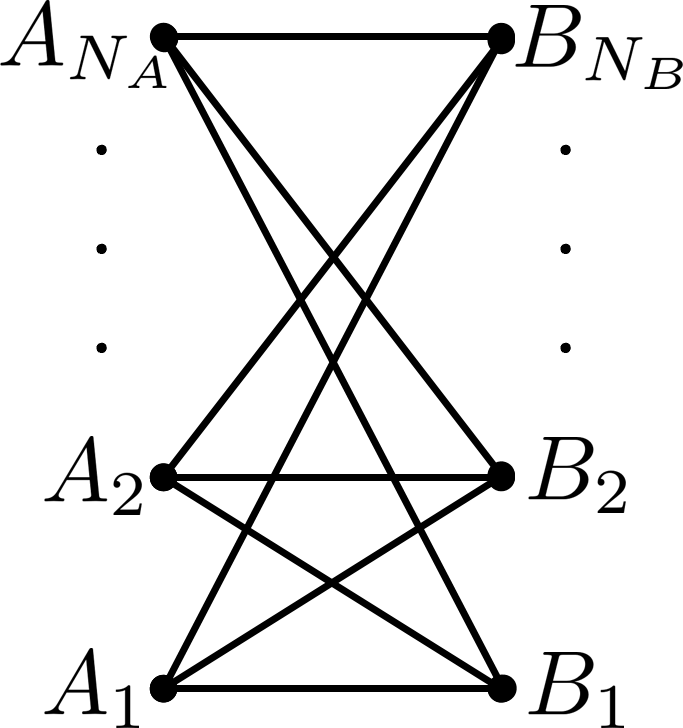}  
\caption{Compatibility graph for a $N_A,N_B$-setting LF scenario. Each node is associated with a measurement. The measurements connected by an edge are compatible and form a context. Every node on one side is connected by an edge with every node on the other side, forming a complete bipartite graph.} 
\label{fig:NaBa} 
\end{figure}

Such a compatibility graph is equivalent to a compatibility graph of a bipartite Bell scenario where in each run, Alice implements one of the measurements $A_1,\dots,A_{N_{A}}$ while Bob implements one of the measurements $B_1,\dots,B_{N_{B}}$.  We refer to such a bipartite Bell scenario as \emph{the corresponding Bell scenario for the $N_A,N_B$-setting LF scenario}. 

Note that the constraints coming from Local Friendliness, presented in \cref{sec:LFnogo}, imply the no-disturbance constraints (cf.~\cref{sec:recap_KSNC}) on the empirical  correlations in a $N_A,N_B$-setting LF scenario. These no-disturbance constraints are
\begin{subequations}
\label{eq:NS}
\begin{align}
    \sum_{b_l}\wp(a_k,b_l) = \sum_{b_{l'}}\wp(a_k,b_{l'}) \\
    \sum_{a_k}\wp(a_k,b_l) = \sum_{a_{k'}}\wp(a_{k'},b_{l}),
\end{align}
\end{subequations}
for all $k,k'\in\{1,2,\dots,N_A\}$ and $l,l'\in\{1,2,\dots,N_B\}$.
They are obtained by taking the appropriate marginals of distributions in \cref{eq:LA} and then connecting them with empirical correlations via \cref{eq_emp}. 
Furthermore, all (expected) empirical correlations in a quantum experimental realization of an LF scenario satisfy these no-disturbance relations (which can also be seen as no-superluminal-signaling here).

As the LF constraints imply no-disturbance, it makes sense to further ask if LF constraints can be understood as KSNC constraints in some sense, which is what we do next in this section. In particular, we will show that the LF constraints on the empirical correlations in the LF scenario sometimes (and we will see when) coincide with the KSNC constraints on the corresponding compatibility graph for some $N_A,N_B$ and always imply KSNC constraints on subgraphs of that graph. In the cases where they coincide (cf. \cref{sec:22,sec:32}), the KSNC constraints on the corresponding compatibility graph provide all LF inequalities, and the LF polytope coincides with the Bell polytope of the corresponding Bell scenario. In other cases (cf. \cref{sec:33}), the LF constraints on the empirical correlations in the LF scenario are \emph{strictly weaker} than the KSNC constraints on the corresponding compatibility graph. In such scenarios, some (not all) LF inequalities can be derived via demanding KSNC constraints on \emph{subgraphs} of the corresponding graph and the LF polytope strictly contains the Bell polytope. Finally, we generalize the LF scenario defined in \cref{sec:recap_LF_scenario} to multipartite sequential LF scenarios and we characterize a wide class of such scenarios where all the LF inequalities are KSNC constraints on the corresponding graph, meaning that the LF polytope coincides with the Bell polytope for the corresponding multipartite Bell scenario. 

Before deriving these results, we will first provide an alternative expression in \cref{sec:equi} to the LF constraints defined in \cref{sec:LFnogo}, which will aid in our investigation of the relationship between LF constraints and KSNC constraints.

\subsection{An alternative expression of the LF constraints}
\label{sec:equi}

We saw in the \cref{sec:LFnogo} that the constraints implied by LF in the $N_A,N_B$-setting LF scenario are the existence of probability distributions as in \cref{eq:AOEdis} and satisfying \cref{eq_emp,eq:LA}.
Next, we introduce an alternative expression of these constraints; one which will later be useful for making the connection to KSNC. Two of its immediate consequences are shown in \cref{sec:1stLFasKSNC} and \cref{sec:Bell_inside_LF}.

This alternative expression is the following. There must exist a joint distribution $P_{ij}(a_1, b_1, a_i, b_j)$ for any $i \in \{2,\ldots,N_A\},$ and $j \in \{2,\ldots,N_B\}$ with the following two properties: 1) each $P_{ij}(a_1,a_i,b_1, b_j)$ has as its marginals the four pairwise empirical correlations $\wp(a_i,b_j)$, $\wp(a_1,b_j)$, $\wp(a_i,b_1)$ and $\wp(a_1,b_1)$; 2) they agree on their overlaps, i.e., 
\begin{subequations}
\label{eq:agree}
\begin{align}
\sum_{a_i,b_j}P_{ij}(a_1,a_i,b_1,b_j)&=\sum_{a_{i'},b_{j'}}P_{ij}(a_1,a_{i'},b_1,b_{j'}), \label{eq:agreeab}\\
\sum_{b_j}P_{ij}(a_1,a_i,b_1,b_j)&= \sum_{b_{j'}}P_{ij'}(a_1,a_i,b_1,b_{j'}), \label{eq:agreeb}\\
\sum_{a_i}P_{ij}(a_1,a_i,b_1,b_j)&=\sum_{a_{i'}}P_{i'j}(a_1,a_{i'},b_1,b_j). \label{eq:agreea}
\end{align}
\end{subequations}

We will now prove that this is equivalent to the LF constraints defined in \cref{sec:LFnogo}. 

First, we will show that the LF constraints imply the existence of the distributions $\{P_{ij}(a_1, b_1, a_i, b_j)\}_{i \in \{2,\ldots,N_A\},j \in \{2,\ldots,N_B\}}$ with the two properties mentioned above.

The LF constraints include the existence of $P(a_1,a_i,b_1,b_j|x=i,y=j)$ for any $i\in\{2,3,\dots,N_A\}$ and $j\in\{2,3,\dots,N_B\}$ as per \cref{eq:AOEij}, a consequence of the assumption of AOE in the $N_A,N_B$-setting LF scenario. Now, let 
\begin{align}
\label{eq:defPij}
    P_{ij}(a_1,a_i,b_1,b_j) = P(a_1,a_i,b_1,b_j|x=i,y=j).
\end{align}
Then, \cref{eq_emp,eq:LA} together imply that
\begin{subequations}
    \label{eq:LAij}
\begin{align}
     \sum_{a_1,b_1} P_{ij}(a_1,a_i,b_1,b_j) 
    =\wp(a_i,b_j),  \label{eq:LA0}\\
     \sum_{a_i,b_1} P_{ij}(a_1,a_i,b_1,b_j) 
    =\wp(a_1,b_j),  \label{eq:LA1}\\
     \sum_{a_1,b_j} P_{ij}(a_1,a_i,b_1,b_j) 
    =\wp(a_i,b_1), \label{eq:LA2} \\
     \sum_{a_i,b_j} P_{ij}(a_1,a_i,b_1,b_j) 
    =\wp(a_1,b_1). \label{eq:LA3}
\end{align}
\end{subequations}
Here, the first equation \cref{eq:LA0} is equivalent to \cref{eq:marg11} using \cref{eq:defPij}. The second equation \cref{eq:LA1} comes from the fact that
\begin{align}
    &\sum_{a_i,b_1} P_{ij}(a_1,a_i,b_1,b_j) \nonumber \\
    = &\sum_{a_i,b_1} P(a_1,a_i,b_1,b_j|x=i,y=j) \nonumber \\
    =& P(a_1,b_j|x=i,y=j) \nonumber \\
    = & P(a_1,b_j|x=1,y=j) \nonumber \\
    = & \wp(a_1,b_j), 
\end{align}
where we used \cref{eq:LAa1bj} for the third equality and \cref{eq:empa1bj} for the fourth equality; the proof for \cref{eq:LA2} or \cref{eq:LA3} is similar.

Eqs.~\eqref{eq:LAij} mean that the joint distribution $P_{ij}(a_1,a_i,b_1,b_j)$ has the four pairwise empirical correlations $\wp(a_i,b_j)$, $\wp(a_1,b_j)$, $\wp(a_i,b_1)$ and $\wp(a_1,b_1)$ as its marginals, proving the first property.

Furthermore, the LF constraints due to the assumption of Local Agency in Eqs.~\eqref{eq:LA} directly imply Eqs.~\eqref{eq:agree} by \cref{eq:defPij}, meaning that the distributions in $\{P_{ij}(a_1, b_1, a_i, b_j)\}_{i \in \{2,\ldots,N_A\},j \in \{2,\ldots,N_B\}}$ agree on their overlaps, proving the second property

\vspace{0.5cm}

Now, we will prove the converse direction: that the existence of $\{P_{ij}(a_1, b_1, a_i, b_j)\}_{i \in \{2,\ldots,N_A\},j \in \{2,\ldots,N_B\}}$ with the above mentioned two properties implies all LF constraints.

For any $i \in \{2,\ldots,N_A\}$ and $j \in \{2,\ldots,N_B\}$, define 
\begin{subequations}
    \label{eq:AOEconverse}
\begin{align}
    &P(a_1,b_1|x=1,y=1)\coloneqq \sum_{a_2,b_2} P_{22}(a_1, a_2,b_1,b_2) \\
    &P(a_1, a_i,b_1|x=i,y=1)\coloneqq \sum_{b_2} P_{i2}(a_1, a_i,b_1,b_2) \\
    &P(a_1, b_1,b_j|x=1,y=j) \coloneqq  \sum_{a_2} P_{2j}(a_1, a_2,b_1,b_j)\\
    &P(a_1, a_i,b_1,b_j|x=i,y=j) \coloneqq P_{ij}(a_1, a_i,b_1,b_j)
\end{align}
\end{subequations}

These are the distributions required to exist by AOE in Eqs.~\eqref{eq:AOEdis}. They reproduce the empirical correlations as in Eqs.~\eqref{eq_emp} due to the first property of $\{P_{ij}(a_1, b_1, a_i, b_j)\}_{i \in \{2,\ldots,N_A\},j \in \{2,\ldots,N_B\}}$ as in Eqs.~\eqref{eq:LAij}. 

Furthermore, the LF conditions in \cref{eq:LA} are also satisfied due to the second property of $\{P_{ij}(a_1, b_1, a_i, b_j)\}_{i \in \{2,\ldots,N_A\},j \in \{2,\ldots,N_B\}}$ that these distributions agree on their overlaps. For example, to prove \cref{eq:LAaib1}, it suffices to note that for any $i \in \{2,\ldots,N_A\}$, $j \in \{2,\ldots,N_B\}$ and $j'\in\{1,2,\dots,N_B\}$, we have
\begin{align}
    &P(a_1,a_i,b_1 |x=i,y=j) \nonumber\\
    = &\sum_{b_j} P_{ij}(a_1,a_i,b_1,b_j) \nonumber\\
    = &\sum_{b_{j'}} P_{ij'}(a_1,a_i,b_1,b_j') \nonumber\\
    = &P(a_1,a_i,b_1|x=i,y=j'), 
\end{align}
where the second equality is due to the fact that $P_{ij}(a_1,a_i,b_1,b_j)$ and $P_{ij'}(a_1,a_i,b_1,b_j')$ agree on their overlap.

By expressing the LF constraints in terms of the existence of distributions in $\{P_{ij}(a_1, b_1, a_i, b_j)\}_{i \in \{2,\ldots,N_A\},j \in \{2,\ldots,N_B\}}$ and the two properties they have, we can easily obtain two insights into the features of the LF inequalities, as we now show in \cref{sec:1stLFasKSNC} and \cref{sec:Bell_inside_LF}.

\subsubsection{A class of tight LF inequalities that are KSNC constraints on subgraphs}
\label{sec:1stLFasKSNC}

The first insight that immediately follows from the alternative expression just derived is the existence of a class of LF inequalities that are KSNC constraints on subgraphs of the compatibility graph for a $N_A,N_B$-setting LF scenario, i.e.,  \cref{fig:NaBa}. Specifically, all the LF inequalities in this class are KSNC constraints on graphs of the form shown in \cref{fig:a1aib1bj}.

\begin{figure}[h!]
\includegraphics[width=0.15\textwidth]{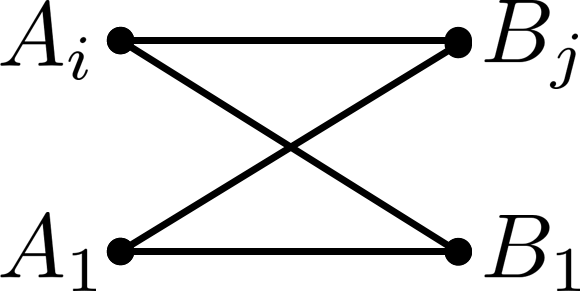}  
\caption{The compatibility graph for $A_1$, $A_i$, $B_1$ and $B_j$ with $i\in \{2,3,\dots,N_A\}$ and $j\in\{2,3,\dots,N_B\}$. It is a subgraph of \cref{fig:NaBa}.} 
\label{fig:a1aib1bj} 
\end{figure}

To see this, first recall from \cref{sec:recap_KSNC} that demanding KSNC constraints on a compatibility graph is equivalent to demanding the existence of a joint probability distribution over the outcomes of all measurements associated with vertices of the graph, such that the distribution has the empirical correlation associated with each edge as its marginals.
For the compatibility graph in \cref{fig:a1aib1bj}, from \cref{eq:LAij}, we see that $P_{ij}(a_1,a_i,b_1,b_j)$ is a joint distribution over the outcomes of the four measurements in the graph and has as its marginals the four empirical correlations for the four respective edges, namely, $\wp(a_i,b_j)$, $\wp(a_1,b_j)$, $\wp(a_i,b_1)$ and $\wp(a_1,b_1)$. 

Thus, the LF constraints require the KSNC constraints on graphs in the form of the one in \cref{fig:a1aib1bj}, which are subgraphs of \cref{fig:NaBa}, the full compatibility graph for a $N_A,N_B$-setting LF scenario. Furthermore, since \cref{fig:NaBa} corresponds to a bipartite Bell scenario where both Alice and Bob have binary measurement settings, the KSNC constraints on 
\cref{fig:NaBa} are CHSH inequalities. Therefore, for each $i\in \{2,3,\dots,N_A\}$ and $j\in\{2,3,\dots,N_B\}$, we can derive LF inequalities that are CHSH inequalities on $\wp(a_i,b_j)$, $\wp(a_1,b_j)$, $\wp(a_i,b_1)$ and $\wp(a_1,b_1)$. 

Furthermore, these LF inequalities are in fact \emph{tight}, meaning that they can be saturated by correlations that satisfy the LF assumptions.  This is because of the second immediate insight we gain from the alternative expression of the LF constraints, which, as we will explain next, is that the LF inequalities are not stricter than the Bell inequalities for the corresponding Bell scenario, or in other words, the LF polytope always contains the corresponding Bell polytope. This fact has been previously shown in Refs.~\cite{bong2020strong,cavalcanti2021implications,yile2023,haddara2024local} (and indirectly also in Ref.~\cite{woodhead2014}.)

\subsubsection{The LF polytope contains the Bell polytope} \label{sec:Bell_inside_LF}

Here we give an alternative proof of the fact the LF polytope contains the Bell polytope in the corresponding Bell scenario to the ones in previous literature such as Refs.~\cite{bong2020strong,yile2023}.

The Bell polytope in the Bell scenario corresponding to our LF scenario is precisely characterized by the KSNC constraints on the compatibility graph of the LF scenario as in \cref{sec:compatibility_graph}. That is, it comes from the requirement that there must exist a global distribution $P(a_1,b_1,a_2,b_2,\ldots,a_{N_A},b_{N_B})$ over outcome of all measurements that reproduces all empirical correlations, namely, all $\wp(a_x,a_y)$ with $ x\in\{1,2,\dots,N_A\}$ and $y\in\{1,2,\dots,N_A\}$.

When such a $P(a_1,b_1,a_2,b_2,\ldots,a_{N_A},b_{N_B})$ exists, it further implies the existence all distributions in $\{P_{ij}(a_1, b_1, a_i, b_j)\}_{i \in \{2,\ldots,N_A\},j \in \{2,\ldots,N_B\}}$ with the two properties described at the beginning of \cref{sec:equi}. Specifically, we can define $P_{ij}(a_1, b_1, a_i, b_j)$ simply by marginalising $P(a_1,b_1,a_2,b_2,\ldots,a_{N_A},b_{N_B})$. These distributions evidently reproduce the four empirical correlations $\wp(a_i,b_j)$, $\wp(a_1,b_j)$, $\wp(a_i,b_1)$ and $\wp(a_1,b_1)$; they also agree on their overlap by construction (as they all stem from the same global distribution). 

Thus, the KSNC constraints (which are also Bell inequalities) for the corresponding Bell scenario imply the respective LF constraints, which also means that the LF polytope always contains the corresponding Bell polytope.

\subsection{The 2,2-setting LF scenario}
\label{sec:22}
In this scenario, Alice and Bob have binary measurement settings. Thus, in total, we have four measurements $A_1$, $A_2$, $B_1$ and $B_2$ with measurement outcomes $a_1$, $a_2$, $b_1$ and $b_2$. The compatibility graph for the 2,2-setting LF scenario is shown in \cref{fig:22}.

\begin{figure}[h!]
\includegraphics[width=0.15\textwidth]{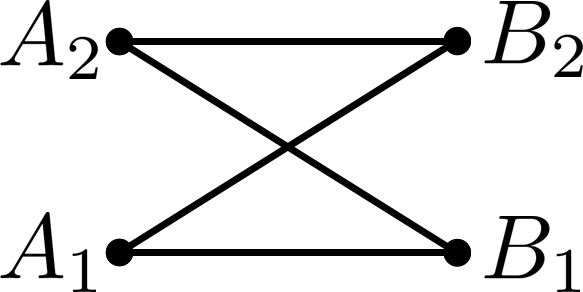}  
\caption{The compatibility graph for the 2,2-setting LF scenario, which is a special case of the graph in \cref{fig:NaBa}.} 
\label{fig:22} 
\end{figure}

From the alternative expression described in \cref{sec:22} of the LF constraints, we know that there exists a joint distribution $P_{22}(a_1,a_2,b_1,b_2)$ such that it has as its marginals $\wp(a_2 ,b_2)$, $\wp(a_1,b_2)$, $\wp(a_2,b_1)$ and $\wp(a_1,b_1)$, which are all the empirical correlations in the 2,2-setting LF scenario.
Thus, as discussed in \cref{sec:1stLFasKSNC}, the four pairwise correlations $\wp(a_x, b_y)$ for $x=1,2$ and $y=1,2$ must obey KSNC constraints on the compatibility graph in \cref{fig:22}, which give rise to the CHSH inequalities on these four correlations. Furthermore, since the LF polytope must contain the corresponding Bell polytope, as was proven in \Cref{sec:Bell_inside_LF}, in this 2,2-setting scenario, the LF polytope coincides with the Bell polytope whose nontrivial facets are characterized by these CHSH inequalities.

\subsection{The 3,2-setting LF scenario}
\label{sec:32}

Now consider the case where Alice has 3 settings and Bob has 2 settings. There are now six empirical pairwise correlations, namely $\wp(a_x, b_y)$ for $x=1,2,3$ and $y=1,2$.

Assuming Local Friendliness, following \cref{sec:equi}, we have that
\begin{itemize}
    \item $P_{22}(a_1,a_2,b_1,b_2)$ has as its marginals $\wp(a_2, b_2)$, $\wp(a_1,b_2)$, $\wp(a_2,b_1)$ and $\wp(a_1,b_1)$. Thus, the four pairwise correlations $\wp(a_x, b_y)$ for $x=1,2$ and $y=1,2$ must obey KSNC constraints on the compatibility graph in \cref{fig:a12b120}. 
    \item  $P_{32}(a_1,a_3,b_1,b_2)$ has as its marginals $\wp(a_3,b_2)$, $\wp(a_1,b_2)$, $\wp(a_3,b_1)$ and $\wp(a_1,b_1)$. Thus, the four pairwise correlations $\wp(a_x, b_y)$ for $x=1,3$ and $y=1,2$ must obey KSNC constraints on the compatibility graph in \cref{fig:a13b120}. 
\end{itemize}

\begin{figure}[h!]
    \centering
        \begin{subfigure}[b]{0.11\textwidth}
         \centering
         \includegraphics[width=\textwidth]{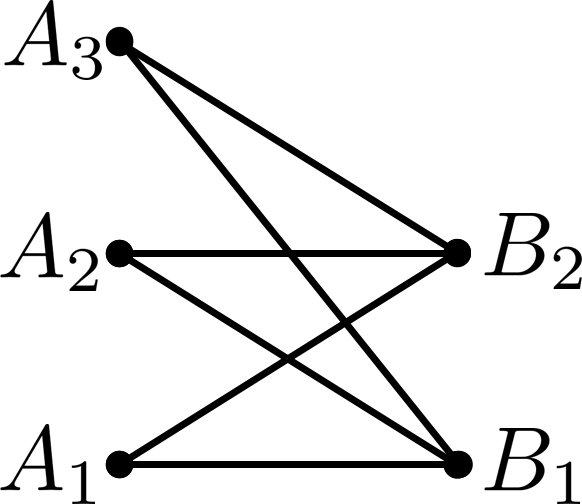}
         \caption{}
         \label{fig:a123b12m}
     \end{subfigure}
    \hspace{0.4mm}
    \begin{subfigure}[b]{0.11\textwidth}
         \centering
         \includegraphics[width=\textwidth]{figures/a12b12.png}
         \caption{}
    \label{fig:a12b120}
     \end{subfigure}
    \hspace{0.4mm}
    \begin{subfigure}[b]{0.11\textwidth}
         \centering
         \includegraphics[width=\textwidth]{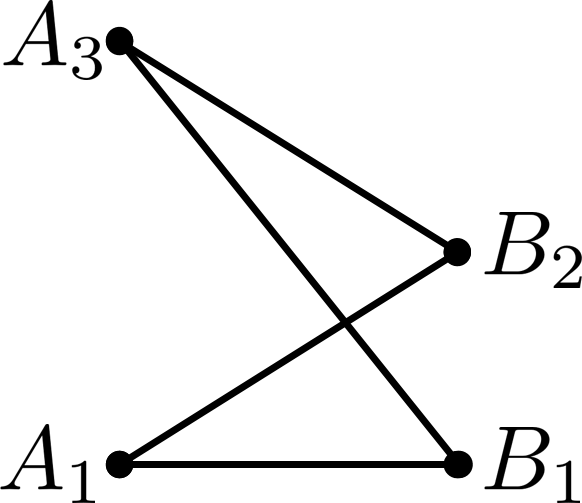}
         \caption{}
         \label{fig:a13b120}
     \end{subfigure}
\caption{(a): The compatibility graph of the 3,2-setting LF scenario. (b) and (c): Subgraphs of the graph in (a) that are special cases of the graph in \cref{fig:a1aib1bj}. In \cref{sec:glue} we explain how LF assumptions allow us to `glue' the distributions for (b) and (c) together to form a joint distribution for (a), using the fact that LF requires the distributions for (b) and (c) to agree on their overlap, i.e., have the same marginals on the intersection of their domain.}
\end{figure}

So far, we have \emph{not} shown that the four pairwise correlations $\wp(a_x,b_y)$ for $x=2,3$ and $y=1,2$ must satisfy KSNC constraints on a corresponding compatibility graph, and in fact one cannot do so by an argument analogous to the above. 

Nevertheless, we will now show that the Local Friendliness assumptions do require KSNC constraints for these correlations as well, and in fact, that the Local Friendliness assumptions imply KSNC constraints on the compatibility graph for the whole 3,2-LF scenario as shown in \cref{fig:a123b12m}, meaning that all six pairwise correlations $\wp(a_x,b_y)$ for $x=1,2,3$ and $y=1,2,3$ are marginals of a single joint probability distribution.

\subsubsection{The `gluing' trick}
\label{sec:glue}

The aforementioned joint distribution of which all six pairwise correlations $\wp(a_x,b_y)$ for $x=1,2,3$ and $y=1,2,3$ are marginals can be constructed using $P_{22}(a_1,a_2,b_1,b_2)$ and $P_{32}(a_1,a_3,b_1,b_2)$ via 
\begin{align}
\label{eq:2232}
    & P_{22,32} \\
    \coloneqq & \frac{P_{22}(a_1,a_2,b_1,b_2)P_{32}(a_1,a_3,b_1,b_2)}{\sum_{a_3} P_{32}(a_1,a_3,b_1,b_2)}.\nonumber
\end{align}
We call the trick used in \cref{eq:2232} for constructing $P_{22,32}$ the \emph{gluing} operation\footnote{A similar technique was used in, e.g., Refs.~\cite{fine1982joint,haddara2024local}.}, which constructs a larger joint probability distribution (i.e., $P_{22,32}$)  from smaller distributions (i.e., $P_{22}(a_1,a_2,b_1,b_2)$ and $P_{32}(a_1,a_3,b_1,b_2)$) in such a way that (as we will prove soon) the larger one has as its marginals all empirical pairwise correlations that are marginals of the smaller ones. We call it `gluing' since in terms of the compatibility graphs upon which the KSNC constraints are demanded by the respective joint distributions, such a procedure `glues' together two smaller graphs (i.e., \cref{fig:a12b120} and \cref{fig:a13b120}) to a bigger one (i.e, \cref{fig:a123b12m}); in terms of the probability distributions, it `glues' together two marginal distributions to create a bigger joint distribution.

 The proof for $P_{22,32}$ to recover all six pairwise correlations uses the fact that 
%\ldel{To show that $P_{22,32}$ can recover all six pairwise empirical correlations as its marginals, we will use the fact that} 
$P_{22}(a_1,a_2,b_1,b_2)$ and $P_{32}(a_1,a_3,b_1,b_2)$ agree on their overlap:
\begin{align}
\label{eq:bcd2232}
\sum_{a_2} P_{22}(a_1,a_2,b_1,b_2)
    =  \sum_{a_3} P_{32}(a_1,a_3,b_1,b_2), 
\end{align}
which is a special case of \cref{eq:agreea}.  See Appendix~\ref{app:glu} for the explicit proof.

As such, all six pairwise empirical correlations in the 3,2-setting LF scenario can be recovered as marginals of a single joint distribution $P_{22,32}$. Thus, these six pairwise correlations must satisfy the KSNC constraints on the compatibility graph in \cref{fig:a123b12m}. As this compatibility graph also represents a Bell scenario, it follows that these constraints give rise to Bell inequalities. Furthermore, these Bell inequalities are tight LF inequalities since the LF polytope must contain the Bell polytope as proven in \cref{sec:Bell_inside_LF}. Since these six pairwise correlations are all the observable correlations in the 3,2-setting LF scenario, then, the Local Friendliness polytope coincides with the Bell polytope in this case.

A trivial corollary of the above is that the four pairwise correlations $\wp(a_x,b_y)$ for $x=2,3$ and $y=1,2$ can be recovered as marginals of a single joint distribution, namely, $P_{22,32}$, and so must satisfy CHSH inequalities, and equivalently, the constraints implied by KSNC for any four projective measurements satisfying the compatibility relations in \cref{fig:a1aib1bj} (with the appropriate relabeling). 

Following an analogous proof to the one we just did here for the 3,2-setting LF scenario, one can further show that a $(N_A,2)$-setting or a $(2,N_B)$-setting LF scenario for any integer $N_A,N_B\geq 2$ has its LF constraints being equivalent to the KSNC constraints on the compatibility graph for that LF scenario, which give rise to the Bell inequalities for the corresponding Bell scenario. 

\subsubsection{The minimal LF scenario}
\label{sec:minimal}
An analogous argument can be constructed to explain why the LF polytope for the so-called `minimal LF scenario' studied in Ref.~\cite{wiseman2023thoughtful,yile2023} also coincides with the corresponding Bell polytope. 

In the minimal LF scenario, there is no Debbie. Alice has a binary measurement setting. She and Charlie proceed as usual. Bob, however, now chooses between two possible measurements $B_y$ on the other half of the bipartite system depending on his binary setting choice $y$. The empirical correlations in this scenario are the four pairwise correlations: $\wp(a_2, b_2)$, $\wp(a_1,b_2)$, $\wp(a_2,b_1)$ and $\wp(a_1,b_1)$.

Here, AOE demands the existence of $P(a_1,b_1|x=1,y=1)$, $P(a_1a_2b_1|x=2,y=1)$ and $P(a_1a_2b_2|x=2,y=2)$ for the corresponding runs of the experiment.  As shown in Appendix~\ref{app:glu}, by applying the gluing trick to $P(a_1a_2b_1|x=2,y=1)$ and $P(a_1a_2b_2|x=2,y=2)$, we obtain a joint probability distributions that recovers all four pairwise correlations, and thus explains why the LF polytope coincides with the corresponding Bell polytope in the minimal scenario.

Furthermore, one can generalize the minimal LF scenario and show that if Bob's measurement choice is not binary but instead with higher cardinality for any integer $N_B\geq 2$, it is still true that all LF constraints in this scenario are equivalent to the KSNC constraints (or equivalently the Bell inequalities) on a corresponding Bell scenario. 

\subsection{The 3,3-setting LF scenario}
\label{sec:33}
Now consider the case where both Alice and Bob have 3 settings. In this case, the Local Friendliness polytope does {\em not} coincide with the Bell polytope, as was shown in Ref.~\cite{bong2020strong}, which used a software to derive the LF polytope for binary outcomes explicitly. We will give a more algebraic derivation and explanation of this fact (and without the restriction to binary outcomes).

Assuming Local Friendliness, following \cref{sec:equi}, we have that $P_{22}(a_1,a_2,b_1,b_2)$ has as its marginals $\wp(a_2,b_2)$, $\wp(a_1,b_2)$, $\wp(a_2,b_1)$ and $\wp(a_1,b_1)$. Thus, the four pairwise correlations $\wp(a_x,b_y)$ for $x=1,2$ and $y=1,2$ must obey the KSNC constraints on the compatibility relations in \cref{fig:a12b12}. Since this compatibility graph in \cref{fig:a12b12} corresponds to a Bell scenario, these KSNC constraints give rise to Bell inequalities. Similar reasoning holds for $P_{32}(a_1,a_3,b_1,b_2)$, $P_{23}(a_1,a_2,b_1,b_3)$ and $P_{33}(a_1,a_3,b_1,b_3)$, depicted in \Cref{fig:a13b12,fig:a12b13,fig:a13b13}.

\begin{figure}[!h]
\captionsetup[subfigure]{aboveskip=-2pt,belowskip=-1pt}
    \centering
    \begin{subfigure}[b]{0.11\textwidth}
         \centering
         \includegraphics[width=\textwidth]{figures/a12b12.png}
         \caption{}
    \label{fig:a12b12}
     \end{subfigure}
    \hspace{0.4mm}
    \begin{subfigure}[b]{0.11\textwidth}
         \centering
         \includegraphics[width=\textwidth]{figures/a13b12.png}
         \caption{}
         \label{fig:a13b12}
     \end{subfigure}
     \hspace{0.4mm}
    \begin{subfigure}[b]{0.11\textwidth}
         \centering
         \includegraphics[width=\textwidth]{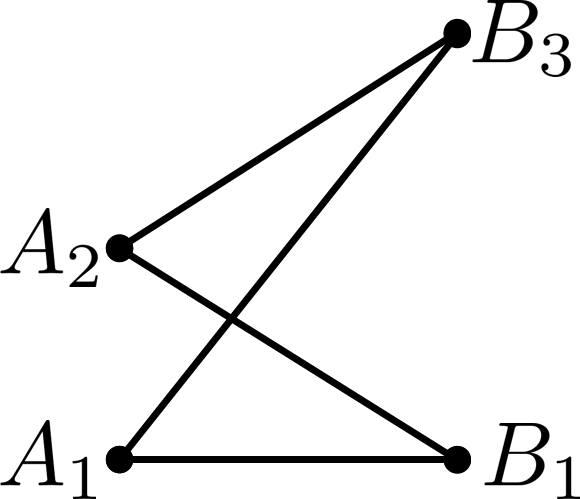}
         \caption{}
         \label{fig:a12b13}
     \end{subfigure}
    \hspace{0.4mm}
    \begin{subfigure}[b]{0.11\textwidth}
         \centering
         \includegraphics[width=\textwidth]{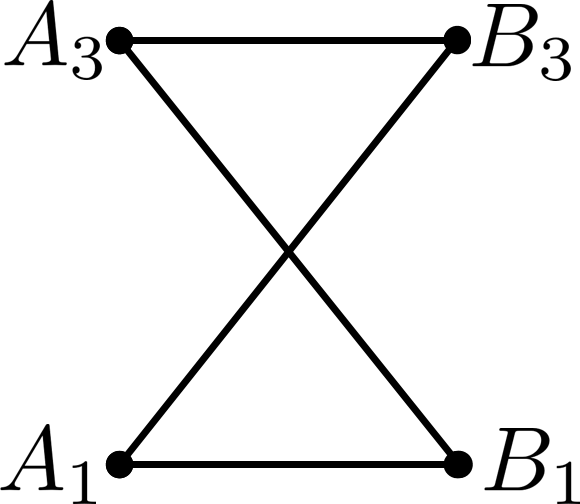}
         \caption{}
         \label{fig:a13b13}
     \end{subfigure}
    \caption{Special cases of the compatibility graph in \cref{fig:NaBa} that are subgraphs of the compatibility graph of the 3,3-setting LF scenario. }
    \label{fig:4pair}
\end{figure}

Furthermore, as proven in \cref{sec:equi}, these four distributions, $P_{22},P_{23},P_{32},P_{33}$ (where we omit their arguments for compactness), agree on their overlaps. 
Then, by gluing various three out of the four distributions together, following an argument like that for \cref{eq:2232}, we obtain the following. Gluing together $P_{22},P_{32},P_{23}$ we find that $P_{22,32,23} \! \coloneqq \! 
    \frac{P_{22}P_{32}P_{23}}{\sum_{a_3} P_{32} \sum_{b_3} P_{23}}$ has as its marginals all pairwise correlations $\wp(a_x,b_y)$ for $x=1,2,3$ and $y=1,2,3$ \emph{except} for $\wp(a_3,b_3)$, and thus, we also denote $P_{22,32,23}$ as $P_{\lnot a_3b_3}$. As such, these pairwise correlations must obey the KSNC constraints on the compatibility graph in \cref{fig:noa3b3}. The same reasoning holds for $P_{\lnot a_2b_2}, P_{\lnot a_2b_3}, P_{\lnot a_3b_2}$ with compatibility graphs in \Cref{fig:noa2b2,fig:noa2b3,fig:noa3b2}.

\begin{figure}[!h]
\captionsetup[subfigure]{aboveskip=-2pt,belowskip=-1pt}
    \centering
    \begin{subfigure}[b]{0.11\textwidth}
         \centering
         \includegraphics[width=\textwidth]{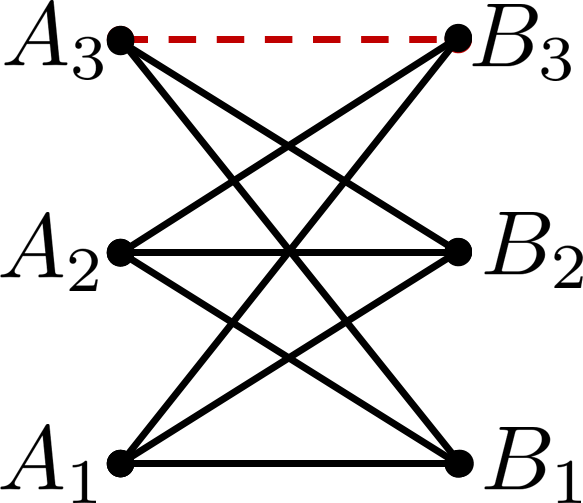}
         \caption{}
    \label{fig:noa3b3}
     \end{subfigure}
    \hspace{0.4mm}
    \begin{subfigure}[b]{0.11\textwidth}
         \centering
         \includegraphics[width=\textwidth]{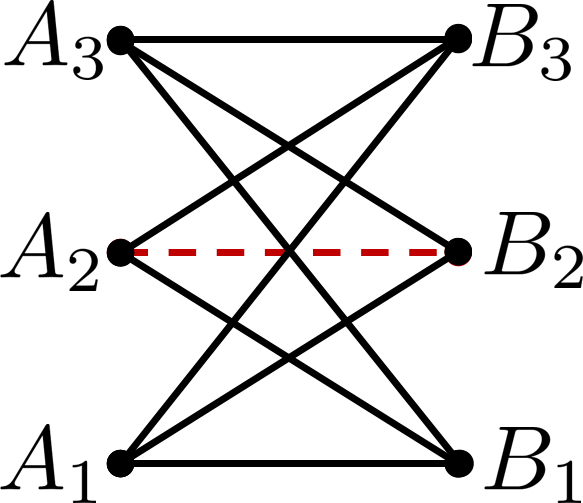}
         \caption{}
         \label{fig:noa2b2}
     \end{subfigure}
     \hspace{0.4mm}
    \begin{subfigure}[b]{0.11\textwidth}
         \centering
         \includegraphics[width=\textwidth]{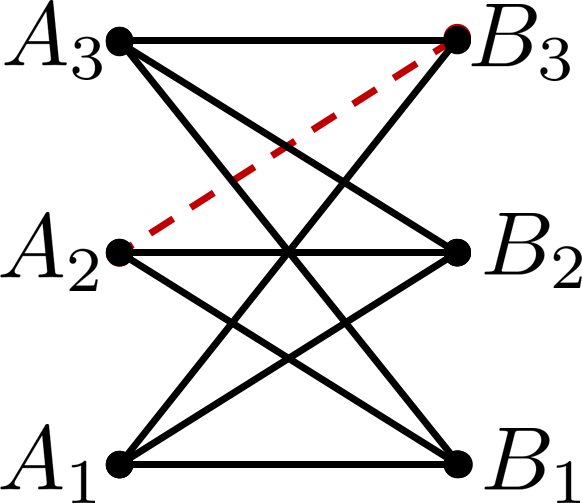}
         \caption{}
         \label{fig:noa2b3}
     \end{subfigure}
    \hspace{0.4mm}
    \begin{subfigure}[b]{0.11\textwidth}
         \centering
         \includegraphics[width=\textwidth]{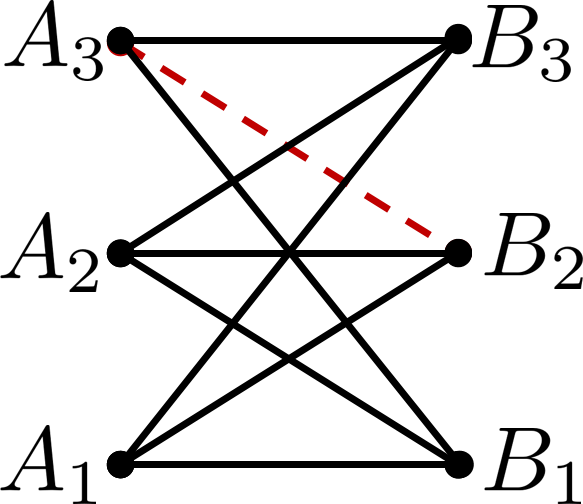}
         \caption{}
         \label{fig:noa3b2}
     \end{subfigure}
    \caption{Subgraphs of the compatibility graph of the 3,3-setting LF scenario. Each of them has 8 edges, i.e., 8 pairwise compatibility relations. The red dashed line indicates the missing compatibility relation for deriving Bell inequalities on all six measurements in each graph.}
    \label{fig:8pair}
\end{figure}

Unlike all of the previous cases, these compatibility graphs in \cref{fig:8pair} are {\em not} the complete compatibility graphs of any Bell scenario, because of the missing edges indicated by the red dashed lines in \cref{fig:8pair}; rather, they are subgraphs. 
Consequently,  the inequalities derived from KSNC constraints in this case will not generally be tight Bell inequalities. 
Nevertheless, some of these KSNC inequalities coincide with tight Bell inequalities because the compatibility graph in \cref{fig:8pair} includes subgraphs that are the complete compatibility graphs for some other Bell scenarios. 
For example, \cref{fig:noa3b3} includes the graphs in \cref{fig:a12b12,fig:a13b12,fig:a12b13} and \cref{fig:a123b12m} as subgraphs, and they are complete compatibility graphs for some Bell scenarios; in addition, it includes a compatibility subgraph for $\{A_2,A_3,B_1,B_3\}$, one for $\{A_1,A_3,B_2,B_3\}$, one for $\{A_2,A_3,B_1,B_1\}$, one for $\{A_1,A_2,B_2,B_3\}$ and one for $\{A_2,A_3,B_1,B_2,B_3\}$, of which each represents a complete compatibility graph for a Bell scenario. 
Thus, the KSNC constraints on the compatibility graphs in  \cref{fig:8pair} include Bell inequalities for these compatibility subgraphs.

The compatibility graphs in \cref{fig:8pair} are also \emph{not} the compatibility graph of the 3,3-setting LF scenario because of the missing edges depicted as red dashed lines. In fact, the Local Friendliness assumptions do not imply the existence of a joint probability distribution that can have as its marginals all the 9 pairwise empirical correlations $\wp(a_x,b_y)$ for $x=1,2,3$ and $y=1,2,3$ in the 3,3-setting LF scenario, and hence, the LF polytope here is strictly bigger than the Bell polytope, as pointed out in Ref.~\cite{bong2020strong}.  In Appendix~\ref{app:glu}, we provide further intuitions on why this is the case by considering how the gluing trick introduced in \cref{sec:glue} works.

Since any $(N_A,N_B)$-setting LF scenario with $N_A,N_B\geq 3$ has the 3,3-setting LF scenario as a subscenario, we can further see that a $(N_A,N_B)$-setting LF scenario for any integer $N_A,N_B\geq 3$ will have its LF constraints being strictly weaker than the KSNC constraints on the compatibility graph for that LF scenario, meaning that the LF polytope is strictly bigger than the corresponding Bell polytope. 

Note that the KSNC constraints on the four graphs in \cref{fig:8pair} do not have all constraints given by Local Agency. In particular, Local Agency further demands certain constraints on the relationship between  $P_{\lnot a_3b_3}$, $P_{\lnot a_2b_2}$, $P_{\lnot a_2b_3}$ and $P_{\lnot a_3b_2}$, due to the fact that any two of the building blocks of the gluing trick (namely, $P_{22}$, $P_{23}$, $P_{32}$ and $P_{33}$) must agree on their overlap.
Thus, the inequalities we have derived here are not necessarily tight.
This can also be seen by comparing to the results in Ref.~\cite{bong2020strong} as we explain more in Appendix~\ref{app:connect}.

Although the inequalities we have just derived are not necessarily tight (such as some of the ones in \cref{sec:33}), and so do not fully characterize the set of correlations satisfying LF assumptions, they are nonetheless tight enough to admit violations within unitary quantum theory. Every LF inequality (whether it is tight or not) is also a valid Bell inequality, as a consequence of the fact that the LF polytope always contains the Bell polytope for the same setting and outcome variable cardinalities. 
If such a Bell inequality can be violated by quantum theory in a Bell scenario, then it, as a LF inequality, can also be violated by quantum theory.
This is because one can translate any quantum proposal for a $N_A,N_B$-setting Bell scenario violating a Bell inequality to a quantum proposal for the $N_A,N_B$-setting Local Friendliness scenario that violates the corresponding LF inequality. We show this in detail in Appendix~\ref{app:violate}. 

\vspace{0.5cm}

\subsection{Generalizations to multipartite sequential LF scenarios}
\label{sec:generalisation_LF_Bell}

So far, we have considered bipartite LF scenarios and characterized the relationship between the LF polytope in these scenarios and their corresponding KSNC or Bell polytopes. This analysis can be extended to more general LF scenarios involving multiple superobserver–observer pairs performing measurements sequentially (with arbitrarily long sequences). Furthermore, we can characterize the conditions under which all LF constraints in such a scenario can also be interpreted as KSNC constraints, and in particular, as Bell inequalities. We refer interested readers to Appendix~\ref{app:generalisation_LF_Bell} for more details.

\section{Constructing new Local Friendliness scenarios from KS noncontextuality arguments}
\label{sec:KSNC_to_LF}

In the first half of this paper, we focused on showing how, given a specific LF scenario, one can derive many LF inequalities using known tools and results from the literature on KSNC, and on clarifying the relation between Bell local causality and LF. In the rest of the paper, we show how given a possibilistic proof of the failure of KSNC, one can construct an extended Wigner's friend argument---a no-go theorem against AOE together with either Local Agency (in this section) or Commutation Irrelevance~\cite{walleghem2023extended} (in Appendix~\ref{app:CI}).

\subsection{The 5-cycle example} \label{sec:5-cycle}
 In Ref.~\cite{walleghem2023extended}, it was shown that the 5-cycle noncontextuality scenario~\cite{cabello2013simple,klyachko2008simple,santos2021conditions} can be used to construct an extended Wigner's friend argument with the assumption of Commutation Irrelevance, which is not one of the LF assumptions. Here, we show that the protocol in Ref.~\cite{walleghem2023extended} can in fact be adapted to construct a novel scenario where one can prove the LF no-go theorem, that is, a no-go theorem based on the LF assumptions without using Commutation Irrelevance. 

\subsubsection{The 5-cycle noncontextuality argument} \label{sec:5-cycle_recap}

We first recap the 5-cycle noncontextuality argument from Ref.~\cite{cabello2013simple,klyachko2008simple,santos2021conditions}, following the presentation in Ref.~\cite{walleghem2023extended}. 
%Recall that a Kochen-Specker noncontextual assignment~\cite{budroni2022kochenspecker} for a set of measurements is a deterministic assignment of outcomes to those measurements, such that the assignment for each measurement is independent of which other compatible measurements are performed jointly with it. (The requirement of determinism can in turn be motivated by the more general assumption of generalized noncontextuality introduced by Spekkens~\cite{spekkens2005contextuality,spekkens2005contextuality}, which is defined and used later in this work.)
Consider five binary-outcome measurements $\{A_i\}_{i\in\{1,2,...,5\}}$,  forming the compatibility graph in \cref{fig:NC_5cycle}. 
Imagine that the system is prepared such that the observations for these five joint measurements satisfy
\begin{subequations} \label{eq:5_cycle_12+23_34+45}
\begin{align}
 P(1,1|1,2)=0, \label{eq:5_cycle_12} \\
 P(0,0|2,3)=0, \label{eq:5_cycle_23} \\
    P(1,1|3,4) =0, \label{eq:5_cycle_34} \\
    P(0,0|4,5)=0, \label{eq:5_cycle_45}
    \end{align}
\end{subequations}
and
\begin{align}   
\label{eq:01_box_not0}
            P(0,1|5,1) \neq 0.
    \end{align}

Denoting the deterministic assignment of the outcome of measurement $A_i$ as $a_i$, Eq.~\eqref{eq:5_cycle_12+23_34+45} implies that 
\begin{equation}
\label{eq:mcontr}
   a_1=1 \Rightarrow a_2=0 \Rightarrow a_3=1 \Rightarrow a_4=0 \Rightarrow a_5=1.
\end{equation}
Therefore, in every run of the experiment where the outcome of $A_1$ is $1$, the outcome of $A_5$ must be $0$, contradicting \cref{eq:01_box_not0}. Thus, there is no Kochen-Specker noncontextual assignment for these runs. 

The set of correlations in Eq.~\eqref{eq:5_cycle_12+23_34+45} and Eq.~\eqref{eq:01_box_not0} has a quantum realization with a qutrit prepared in the state \begin{equation}\label{eq:eta}
    |\eta \rangle := \sqrt{\frac{1}{3}} (|0\rangle+\vert 1\rangle+\vert 2\rangle)\equiv \sqrt{\frac{1}{3}} (1,1,1)^T,
\end{equation} where $T$ denotes transposition, and measurements 
\begin{equation}
\label{eq:M}
A_i:=\{\vert v_i\rangle \langle v_i \vert ,\mathbf{1}-\vert v_i\rangle \langle v_i \vert \},
\end{equation}
with the states $|v_i \rangle$ defined as \begin{align}\label{eq: KCBS measurements}
       & |v_1 \rangle = \sqrt{\frac{1}{3}} (1,-1,1)^T, \,\, 
        |v_2 \rangle = \sqrt{\frac{1}{2}} (1,1,0)^T, \\
      &   |v_3 \rangle = (0,0,1)^T, 
        \,\, |v_4 \rangle = (1,0,0)^T, \,\,
        |v_5 \rangle = \sqrt{\frac{1}{2}} (0,1,1)^T. \nonumber
\end{align} 
We take outcomes $0,1$ of a given measurement to correspond to $\mathbf{1}-\vert v_i\rangle \langle v_i \vert, \vert v_i \rangle \langle v_i \vert $, respectively. In this case, Eqs.~\eqref{eq:5_cycle_12}-\eqref{eq:5_cycle_45} and Eq.~\eqref{eq:01_box_not0} are satisfied. Hence, quantum theory is contextual.

\begin{flushleft}  
\begin{figure}[]
\includestandalone[width=0.2\textwidth]{figures/5-cycle_LF_vs_noncontextuality}  
\caption{The compatibility graph of the 5-cycle scenario. The pairs $\{(1,2),(2,3),(3,4),$ $(4,5),(5, 1 )\}$ are jointly measurable.
} \label{fig:NC_5cycle} 
\end{figure}
\end{flushleft}

\subsubsection{The LF 5-cycle protocol}
\label{sec:5cycleprotocol}

\begin{figure}[htb!]
\centering
\includegraphics[width=0.35\textwidth]{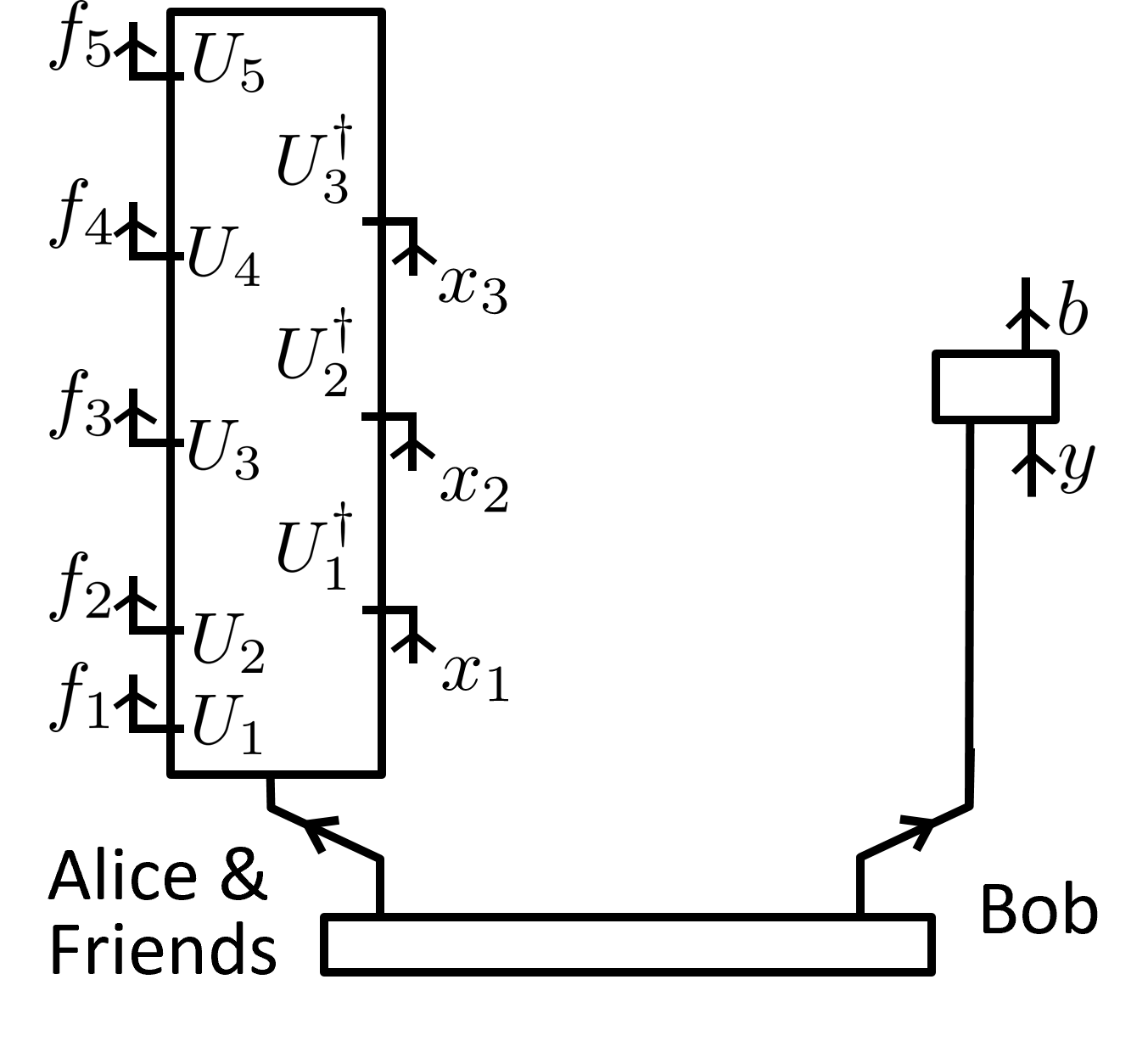}
\caption{The schematic representation of the LF 5-cycle scenario, see also \Cref{fig_cartoon5cycle}.}
\label{fig_setusp5cycle}
\end{figure}

\begin{figure}[htb!]
\centering
\includegraphics[width=0.45\textwidth]{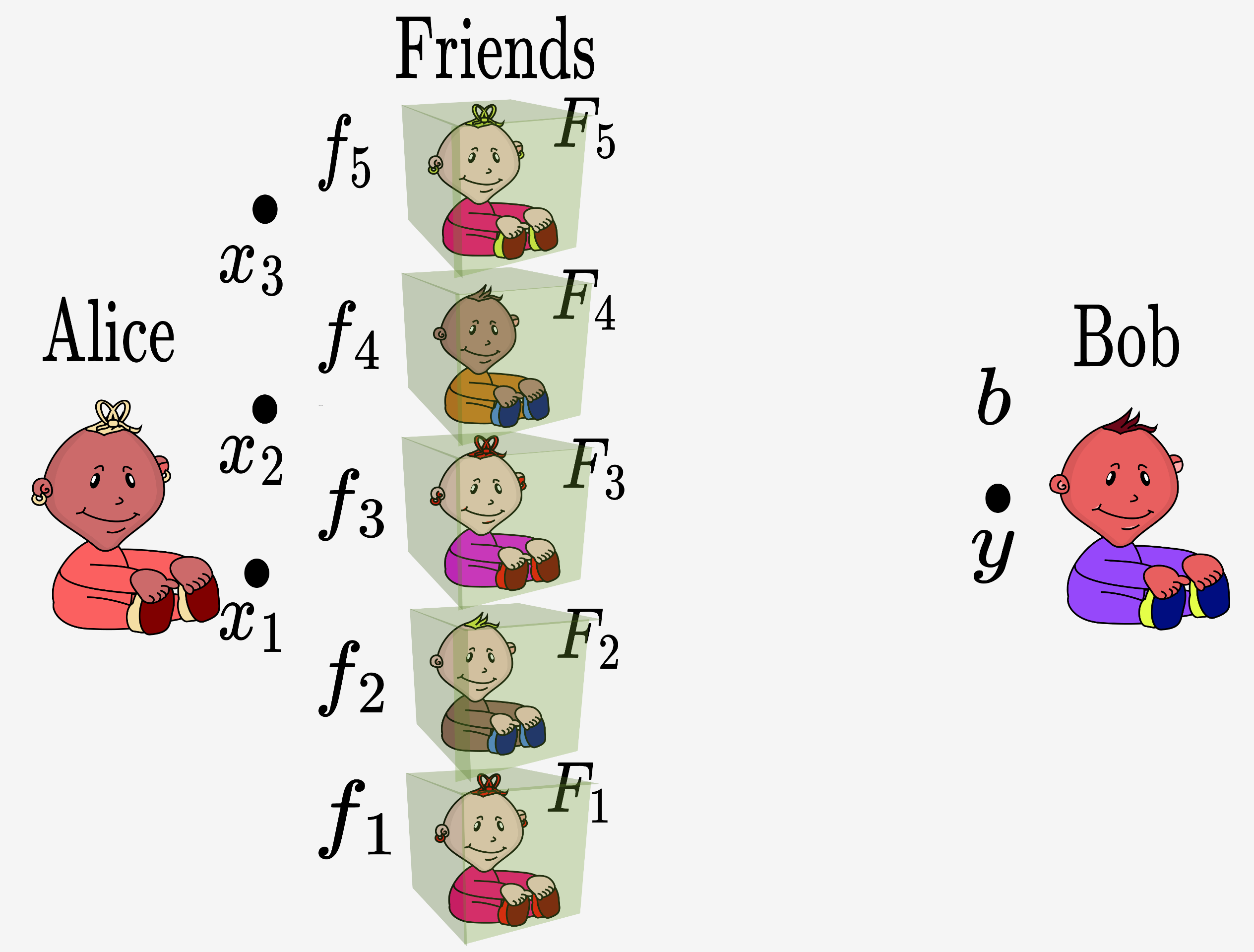}
\caption{Cartoon representation of the 5-cycle LF setup, with Alice a superobserver for $F_1,F_2,F_3,F_4
,F_5$, who perform measurements. Alice makes choices $x_1,x_2,x_3$ whether to reveal measurement outcomes and Bob, spacelike separated, makes measurement choice $y$ and obtains outcome $b$.} 
\label{fig_cartoon5cycle}
\end{figure}

We now leverage this 5-cycle KSNC argument to construct a novel scenario in which one can prove the LF no-go theorem.  Our presentation of this LF 5-cycle protocol highlights the differences from the protocol in Ref.~\cite{walleghem2023extended}.

This new LF scenario involves two groups of agents. The first group consists of Alice (denoted $A$) and five friends (denoted $F_1$, ..., $F_5$). The second group consists of a single agent Bob.  See \Cref{fig_cartoon5cycle} for a cartoon representation of the set-up.  Compared to the protocol in Ref.~\cite{walleghem2023extended}, we have one more agent here, namely, Bob; furthermore, the superobserver Alice here has measurement choices. These additions are crucial for deriving a no-go theorem based on LF assumptions without using the assumption of Commutation Irrelevance as in \cite{walleghem2023extended}.

In the following, we will describe the quantum protocol that will be used to prove the no-go theorem. 
 However, the scenario could also be described in a theory-independent way as discussed in Appendix~\ref{app:compare}.  

At the beginning of the experiment, $F_1$ and Bob share a bipartite system $ST$ where $S$ is a qutrit and $T$ is a qubit. The joint system $ST$ is prepared in the following entangled state:\footnote{\label{ft:mp}One way to construct this entangled state \eqref{eq_5cycleini} is to let $F_1$ perform measurement $A_1$ on $\ket{\eta}_S$ and prepare $\ket{0}_T$ if her measurement outcome corresponds to $\ket{v_1^{\perp}}$ or $\ket{1}_T$ if her measurement outcome corresponds to $\ket{v_1}$, and then $F_1$ sends system $T$ to Bob. In fact, instead of starting with $F_1$ and $B$ sharing the entangled state \eqref{eq_5cycleini}, the protocol could start with $F_1$ having system $S$ prepared in $\ket{\eta}_S$, followed by the above measure-prepare procedure. All subsequent operations by Alice, the friends $F_i$ and Bob remain the same. In this alternative protocol, $F_1$'s operation is in the past-light-cone of everyone else's operations, while Bob's operations are still space-like separated from all operations on Alice's side except for $F_1$'s. This is similar to how the measure-prepare version of the Frauchiger-Renner argument~\cite{frauchiger2018quantum} as it was originally presented, is equivalent to the simpler entanglement-based scenario (see, e.g., \cite{vilasini2022general,schmid2023review}).
}
\begin{align}
\label{eq_5cycleini}
    \ket{\Psi}_{ ST}
    = &\langle{v_1^{\perp}}|{\eta}\rangle\ket{v_1^{\perp}}_{ S}\otimes\ket{0}_{ T}+\langle{v_1}|{\eta}\rangle\ket{v_1}_{ S}\otimes\ket{1}_{ T} 
\end{align}
where $\ketbra{\textstyle v_1^{\perp}}:=\mathbf{1}-\ketbra{v_1}$. System $S$ is sent to Alice's side while system $T$ is sent to Bob. 

On Alice's side, the five friends, $F_1$, ..., $F_5$, perform the five measurements $A_i$ defined in \cref{eq:M} sequentially on the system $S$, with the superobserver's reversal of their measurements in between, as shown in Fig.~\ref{fig_setusp5cycle}. From AOE, each friend $F_i$ observes an absolute outcome, denoted $f_i$, during their measurement $A_i$. We label the $\ketbra{v_i}$ outcome by $f_i=1$ and the other outcome by $f_i=0$. In addition, the superobserver Alice has a binary choice $x_i$ right before she implements the inverse unitary $U^{\dagger}_i$ for $i=1,2,3$. If $x_i=1$, Alice opens the door to $F_i$'s and $F_{i+1}$'s labs and reveals their respective measurement outcome $f_i$ and $f_{i+1}$ to the world; if $x_i=2$, she simply continues the protocol.\footnote{Note that in the $x_i=1$ case, all records of the outcome $f_i$ are erased by the inverse unitaries applied by Alice; in contrast, in the $x_i=1$ case, the act of revealing the outcome $f_i$ to the world implies making and distributing many classical copies of it. In that case, the inverse unitaries performed by Alice do not have the effect of reverting the state of $S$ and the Friends to their initial states. They moreover will strongly impact outcomes observed by friends later in the protocol. However, all this is irrelevant, because in those cases the later outcomes of the friends are not used in the proof of the LF no-go theorem. Indeed, one could avoid these subtleties simply by stipulating that when $x_i=1$, Alice not only reveals the two immediately prior measurement outcomes to the world but also terminates the protocol.} When $x_1=x_2=x_3=2$, the $f_4$ and $f_5$ outcomes are revealed to the world at the end of the experiment.  For the full description of the protocol on Alice side, see Appendix~\ref{app:5Alice}. 

On Bob's side, Bob has a binary choice $y$. If $y=1$, he makes a measurement $B_{0/1}$ on system $T$, which is a computational basis measurement;
if $y=2$, he makes a measurement $B_{\pm}$ on system $T$, which is a $\pm$ basis measurement, i.e, $\{ \ketbra{+}, \ketbra{-}  \}$.
We denote Bob's absolute outcome by $b$.

We assume that all operations on Alice's side are done at space-like separation from Bob's measurement setting and outcome.

\begin{flushleft}  
\begin{figure}[]
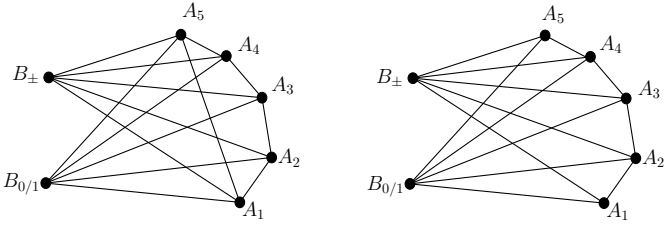

\includestandalone[width=0.5\textwidth]{figures/5-cycle_comp_graph_LF}  \caption{
Left: A compatibility graph for the measurements in the LF 5-cycle protocol of \cref{sec:5cycleprotocol}. Right: The 
 constraints from LF assumptions do not impose KSNC relative to the compatibility graph on the left, but they do effectively impose the same constraints as KSNC relative to the subgraph obtained when the compatibility relation between $A_1$ and $A_5$ is absent, as shown on the right. (They may impose further constraints in addition to these.)} \label{fig:LF_5-cycle_graphs} 
\end{figure}
\end{flushleft}

One can naturally associate a compatibility graph with this 5-cycle LF scenario: namely the one shown in the left subfigure of \cref{fig:LF_5-cycle_graphs}. The Local Friendliness assumptions on this LF 5-cycle scenario \emph{do not} imply KSNC on this compatibility graph, nor on the compatibility graph for the 5-cycle scenario (shown in \cref{fig:NC_5cycle}), since the joint distribution over $A_1$ and $A_5$ measurement outcomes is not observable.  Rather, the LF assumptions imply KSNC constraints on the graph in the right figure of \cref{fig:LF_5-cycle_graphs} (where the edge between $A_1$ and $A_5$ is absent). Nevertheless, in \cref{sec:5cycle_proof} we show that the possibilistic proof of the failure of KSNC in the 5-cycle scenario can still be mathematically turned into a proof of Local Friendliness no-go theorem. 

\subsubsection{The no-go theorem}
\label{sec:5cycle_proof}

\begin{figure}[htb!]
\centering
\includegraphics[width=0.4\textwidth]{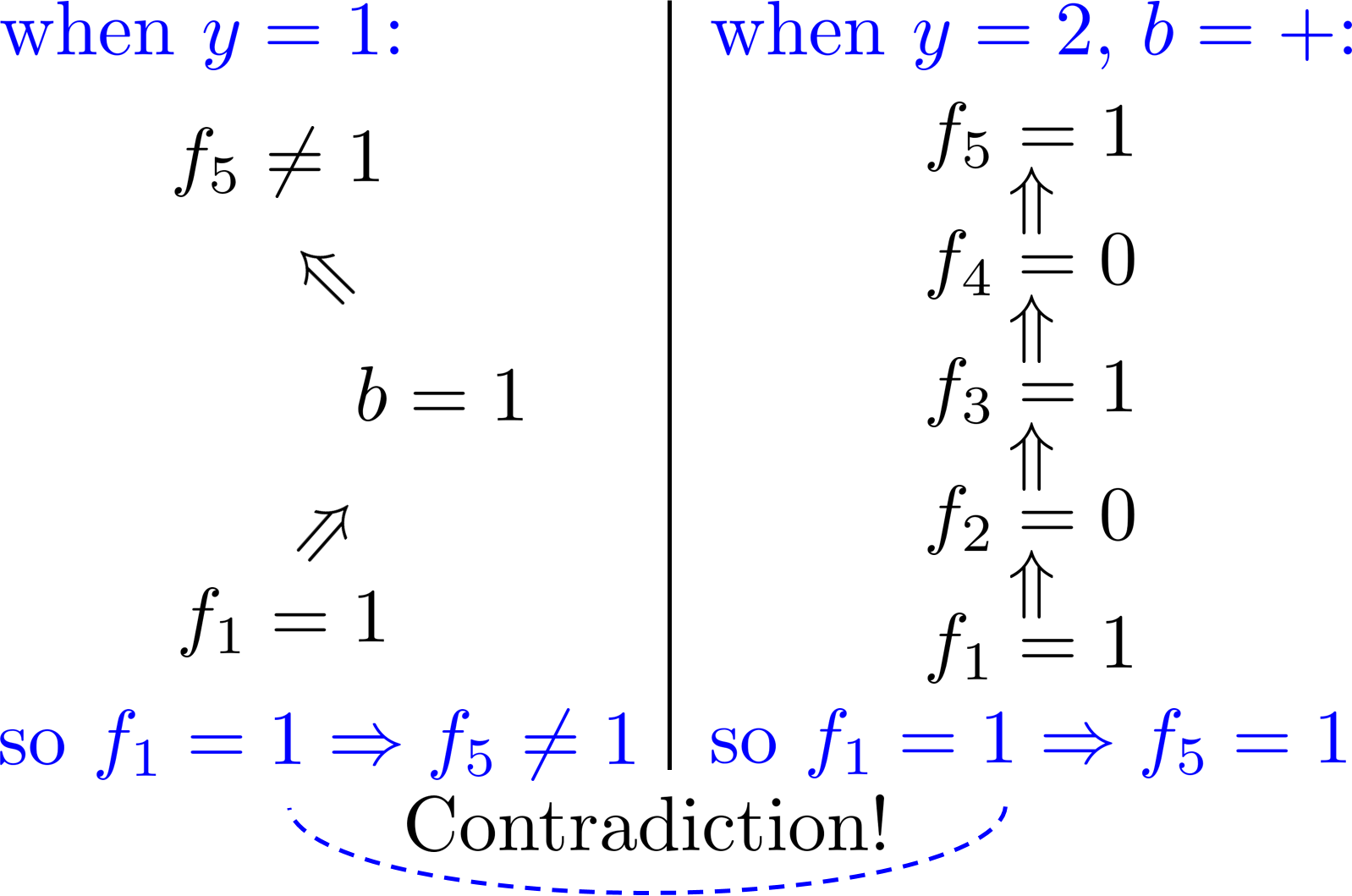}
\caption{The schematic proof using the LF 5-cycle protocol for the LF no-go theorem.  The key steps in the proof are as follows. When Bob measures his qubit in the computational basis (i.e., when $y=1$, as in the left side of the figure),  his outcome $b$ must be the same as the $f_1$ outcome, which in turn constrains the correlation over $f_1$ and $f_5$. When Bob measures his qubit in the $\pm$ basis (i.e., when $y=2$) and obtains outcome $b=+$ (as in the right side of the figure), the effective state on Alice and her friends' side becomes $\ket{\eta}_{S}$, i.e., the state needed in the 5-cycle proof of KSNC theorem, and one obtains a contradictory constraint on the correlation between $f_1$ and $f_5$. 
}
\label{fig_summary5cycle}
\end{figure}

A schematic of the proof is given in Figure~\ref{fig_summary5cycle}. The proof begins by noting that AOE demands the existence of the conditional probability distribution
\begin{align}
\label{eq:aoe5cycle}
P(f_1,f_2,f_3,f_4,f_5,b|x_1,x_2,x_3,y),
\end{align}
exists (although of course, it is not accessible to any agent). Next, one considers the predictions made by quantum theory for various \emph{empirical} correlations in the LF 5-cycle protocol, and then applies Local Agency to them to obtain that
\begin{subequations} 
\begin{align}
& P(f_1=1,f_2=1|  x_{1-3}=2, y=2,b=+) = 0, \label{eq:ma12}\\
&P(f_2=0,f_3=0| x_{1-3}=2, y=2,b=+) = 0, \label{eq:ma23}\\
&P (f_3=1,f_4=1|  x_{1-3}=2, y=2,b=+) = 0, \label{eq:ma34} \\
&P (f_4=0,f_5=0| x_{1-3}=2, y=2,b=+)=0. \label{eq:ma45}
\end{align}
\end{subequations} 
and 
\begin{align}
\label{eq:ma15}
P(f_1=1,f_5=1|x_{1-3}=2,y=2,b=+)=0.
\end{align}
These five equations constitute constraints on five marginals of the joint distribution 
\begin{align}
    P(f_1,f_2,f_3,f_4,f_5|x_{1-3}=2,y=2,b=+),
\end{align}
which is well-defined due to the existence of \cref{eq:aoe5cycle} and the Born rule prediction that 
\begin{align}\label{eq:b+}
    \wp (b=+|x_{1-3}=2,y=2)>0
\end{align}
(a prediction that is empirically accessible).

However, these five constraints, i.e., \cref{eq:ma12,eq:ma23,eq:ma34,eq:ma45,eq:ma15},  cannot be simultaneously satisfied. This fact is the basis of the KSNC proof in the 5-cycle scenario; one can see it most easily by noting that these equations imply (respectively) that when $x_{1-3}=2, y=2,b=+$, we have 
\begin{align}
\label{eq:imp5}
    f_1=1\Rightarrow f_2=0 \Rightarrow f_3=1 \Rightarrow f_4=0 \Rightarrow f_5=1.
\end{align} 
and
\begin{align}
    f_1=1 \Rightarrow f_5\neq 1,
\end{align}
which are in contradiction. Therefore there can be no joint distribution of the form in Eq.~\eqref{eq:ma15} consistent with the empirical predictions given by the Born rule and Local Agency. Hence the no-go theorem.

It remains only to establish Eqs.~\eqref{eq:ma12}-\eqref{eq:ma15}. 

We begin with \cref{eq:ma12,eq:ma23,eq:ma34,eq:ma45}.
Consider the following empirical predictions given by the Born rule for the cases where Bob measures in $\pm$ basis and obtains outcome $+$:
\begin{subequations} 
\begin{align}
&\wp (f_1=1,f_2=1| x_{1}=1,x_2,x_3, y=2,b=+) = 0, \label{eq:em12}\\
&\wp(f_2=0,f_3=0| x_{1}=2,x_2=1,x_3, y=2,b=+) = 0, \label{eq:em23}\\
&\wp (f_3=1,f_4=1| x_{1-2}=2,x_3=1, y=2,b=+) = 0, \label{eq:em34} \\
&\wp(f_4=0,f_5=0| x_{1-3}=2, y=2,b=+)=0. \label{eq:em45}
\end{align}
\end{subequations} 
These are analogous to \cref{eq:5_cycle_12,eq:5_cycle_23,eq:5_cycle_34,eq:5_cycle_45}; indeed, a quick way to see that these are the Born rule predictions in the present scenario is to realize that when $y=2$ and $b=+$, the state $S$ on Alice' side is effectively $\ket{\eta}$ defined in \cref{eq:eta}.

For \cref{eq:em12}, since both $f_1$ and $f_2$ are in the past light cone of all Alice's choices, i.e., they are not in the future light cone of $x_1,x_2,x_3$, according to Local Agency, we have
\begin{align}
   & \wp (f_1,f_2| x_{1}=1,x_2,x_3, y=2,b=+) \nonumber \\
   = & P(f_1,f_2| x_{1-3}=2,y=2,b=+) 
\end{align}

Similarly, for \cref{eq:em23,eq:em34}, Local Agency demands that 
\begin{align}
    & \wp (f_2,f_3| x_{1}=2,x_2=1,x_3, y=2,b=+) \nonumber \\
   = & P(f_2,f_3| x_{1-3}=2,y=2,b=+),\\
    & \wp (f_3,f_4| x_{1-2}=2,x_3=1, y=2,b=+) \nonumber \\
   = & P(f_3,f_4| x_{1-3}=2,y=2,b=+)
\end{align}

Finally, \cref{eq:em45} is identical to Eq.~\eqref{eq:ma45}.

Therefore, together with \cref{eq:em12,eq:em23,eq:em34,eq:em45}, one obtains \cref{eq:ma12,eq:ma23,eq:ma34,eq:ma45}.

Now, we prove Eq.~\eqref{eq:ma15}.
To do so, we start with the following empirical predictions by the Born rule
\begin{align}
   & \wp(f_1=1,b=0|x_{1}=1,x_2, x_3,y=1)=0, \label{eq:ema1b}\\
   &\wp(f_5=1,b=1|x_{1-3}=2, y=1)=0. \label{eq:ema5b}
\end{align}
\cref{eq:ema1b} can be seen by noticing the perfect correlation between the outcome of $A_1$ on system $S$ and the outcome of computational basis measurement on system $T$ given by the initial state $\ket{\Psi_{ST}}$. \cref{eq:ema1b} can be seen by noticing that when the outcome on $T$ corresponds to $\ket{1}_T$, the state of $S$ is effectively $\ket{v_1}_S$, which is orthogonal to $\ket{v_5}_S$.

To relate these two empirical correlations, we use Local Agency again: Since neither $f_1$ nor $b$ is in the future light cone of $x_1,x_2,x_3$, Local Agency demands that \cref{eq:ema1b} satisfies
\begin{align}
    &\wp(f_1=1,b=0|x_{1}=1,x_2, x_3,y=1) \nonumber \\
    =& P(f_1=1,b=0|x_{1-3}=2,y=1)=0. \label{eq:a1b}
\end{align}

Now notice that both $f_1$ and $f_5$ are space-like separated from $y$; thus, according to Local Agency,
\begin{align}
    &P(f_1=1,f_5=1|x_{1-3}=2,y=1) \nonumber\\
    = &P(f_1=1,f_5=1|x_{1-3}=2,y=2)=0.
\end{align}
This implies that 
\begin{align}
    P(f_1=1,f_5=1,b=+|x_{1-3}=2,y=2)=0.
\end{align}
Using \cref{eq:b+}, we arrive at
\begin{align}
P(f_1=1,f_5=1|x_{1-3}=2,y=2,b=+)=0,
\end{align}
as claimed.

\subsection{The Peres-Mermin example} \label{sec:Peres-Mermin}

\subsubsection{The Peres-Mermin noncontextuality argument}
\label{sec:recap_PM}

We briefly recap the Peres--Mermin magic square \cite{mermin1990simple,mermin1993hidden,peres1997quantum} proof of (state-independent) contextuality.
Consider the following square of quantum observables for a 4-dimensional qudit: \begin{equation} \label{eq:Peres_Mermin_array}
\left[\begin{array}{lll}
A & a & \alpha \\
B & b & \beta \\
C & c & \gamma
\end{array}\right]=\left[\begin{array}{ccc}
\sigma_z \otimes \mathbf{1} & \mathbf{1} \otimes \sigma_z & \sigma_z \otimes \sigma_z \\
1 \otimes \sigma_x & \sigma_x \otimes \mathbf{1} & \sigma_x \otimes \sigma_x \\
\sigma_z \otimes \sigma_x & \sigma_x \otimes \sigma_z & \sigma_y \otimes \sigma_y
\end{array}\right] .
\end{equation} 
Crucially, the triples of observables in a given column or row are all compatible with each other, and thus can be jointly measured. Thus the scenario consists of six possible joint measurements, called measurement contexts---one for each row and one for each column---and each observable appears in exactly two contexts (namely in one row and in one column). Moreover, the product of all three observables in any given context is  $\pm\mathbf{1}$:
\begin{equation} \label{eq:column_row_PM}
    \begin{split}
        A a \alpha = \mathbf{1}, \quad  B b \beta  = \mathbf{1}, \quad  C c \gamma = \mathbf{1}, \\
         ABC  = \mathbf{1}, \quad  abc  = \mathbf{1}, \quad \alpha \beta \gamma  = -\mathbf{1}.
    \end{split}
\end{equation} 

By KSNC, the values assigned to mutually commuting observables must obey equalities satisfied by the observables themselves, and so the product of the values assigned to the three observables in each row or column must be $\pm1$. More specifically, denoting the assignment to observable $O$ by $v_O\in \{-1,1\}$, one must have 
\begin{equation} 
    \begin{split}
         v_{A} v_{a} v_{\alpha}  = 1, \quad  v_{B} v_{b} v_{\beta}  = 1, \quad v_{C} v_{c} v_{\gamma} = 1, \\
          v_{A}v_{B}v_{C}  = 1, \quad  v_{a}v_{b}v_{c}  = 1, \quad v_{\alpha} v_{\beta} v_{\gamma} = -1.
    \end{split}
\end{equation} 
But these six equations cannot be satisfied by any nine $\pm1$-valued assignments, since the first three equalities imply that
\begin{equation}
v_{A}v_{a}v_{\alpha}v_{B}v_{b}v_{\beta}v_{C}v_{c}v_{\gamma} = 1,
\end{equation} 
while the last three imply that 
\begin{equation}
v_{A}v_{a}v_{\alpha}v_{B}v_{b}v_{\beta}v_{C}v_{c}v_{\gamma}=-1.
\end{equation}

The above arguments also establish a fact that we will utilize later, which is that the following 6 probability distributions 
\begin{subequations}
\label{eq:PMdistr}
\begin{align}
    P(v_Av_av_{\alpha}=-1)=0,\\ 
    P(v_Bv_bv_{\beta}=-1)=0, \\
    P(v_Cv_cv_{\gamma}=-1)=0, \\
    P(v_Av_Bv_C=-1)=0,\\
    P(v_av_bv_{c}=-1)=0, \\
    P(v_{\alpha}v_{\beta}v_{\gamma}=1)=0
\end{align}
\end{subequations}
cannot simultaneously be marginals of a joint probability distribution
\begin{align}
    P(v_A,v_a,v_{\alpha},v_B,v_b,v_{\beta},v_C,v_c,v_{\gamma}).
\end{align} 
(Indeed, every KSNC argument can be cast as a marginal problem of this sort.~\cite{abramsky2011sheaf}.)

\subsubsection{The LF Peres-Mermin protocol} \label{sec:PeresMerminprotocol}

We now leverage the Peres-Mermin KSNC argument to construct a novel scenario in which one can prove the LF no-go theorem.\footnote{Another extended Wigner's friend argument based on the Peres-Mermin KSNC argument was introduced in Ref.~\cite{szangolies2020quantum}; see also Ref.~\cite{walleghem2023extended}, where we argue that an additional assumption that we call Commutation Irrelevance is required there.    
Furthermore, unlike the LF Peres–Mermin protocol we develop here, the protocol in 
 \cite{szangolies2020quantum} cannot be used to prove a LF no-go theorem. The crucial difference is that our protocol includes an additional group of agents, namely, the Bobs, and allows the superobserver Alice to have measurement choices.} 
This LF Peres-Mermin scenario involves 8 agents. The first two agents are Alice (denoted $A$) and her Friend (denoted $F$), of which Alice is the superobserver of the Friend. The rest 6 agents are Bobs, denoted by $B_A$, $B_a$, $B_{\alpha}$, $B_{B}$, $B_{b}$, $B_{\beta}$ respectively. See \Cref{fig_cartoon_LF_PM} for a cartoon representation of the set-up. 
In the following, we will describe the quantum protocol that will be used to prove the no-go theorem and use the assumption of Universality of Unitarity and the Born rule for empirical correlations (cf. Appendix~\ref{app:compare}); however, the scenario could also be described in a theory-independent way if one wished to derive robust inequalities for empirical data rather than simply a no-go theorem.

\begin{figure}[]
\centering
\includegraphics[width=0.4\textwidth]{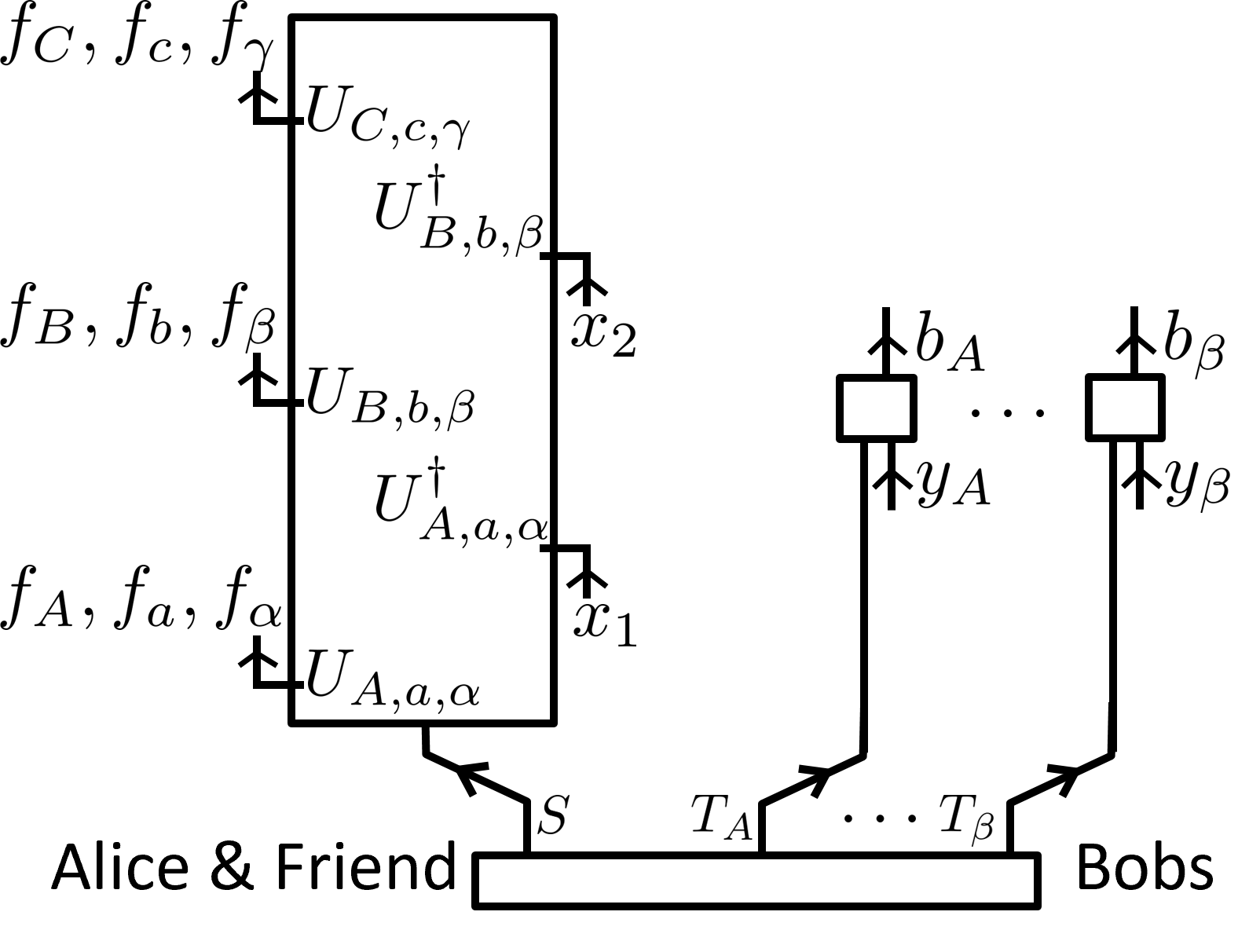}
\caption{Schematic diagram of the LF Peres-Mermin scenario, see also \Cref{fig_cartoon_LF_PM}. All operations performed by Alice and Friend are space-like separated from all operations performed by the six Bobs; all operations performed by the six Bobs are further mutually space-like separated.
}
\label{fig_setuspPM}
\end{figure}

At the beginning of the experiment, $F$ and the six Bobs share a 7-partite system $ST$ where $S$ is a qudit and $T$ consists of 6 subsystems (denoted $T_j$ for $j\in \mathbf{j}\coloneqq\{A,a,\alpha, B,b,\beta\}$),
each of which is a qubit. The joint system $ST$ is prepared in the following entangled state:
\footnote{This initial entangled state, i.e., \cref{eq:initialgiant}, can be constructed by
\begin{align}
\label{eb_construini0}
   \ket{\psi_0}_{ST} = \left( \prod_{j=1}^6 {\rm CNOT}_{ ST_j} \right) \ket{\psi} _{S} \bigotimes_{j=1}^{6} \ket{0}_{T_j}
\end{align}
where 
\begin{align}
\label{eb_creatST0}
   & {\rm CNOT}_{ ST_j} \coloneqq (\Pi_{j^{\perp}})_S\otimes\mathbf{1}_{T_j}+ (\Pi_j)_{ S}\otimes ( \ketbra{1}{0} + \ketbra{0}{1})_{T_j}. 
\end{align}
Similar to \cref{ft:mp}, this entangled state can be constructed in a measure-prepare fashion around which the thought experiment could alternatively be constructed (although some extra complexity is introduced by the fact that since, as illustrated in \cref{fig_setuspPM}, the $B,b,\beta$ measurements are not done at the initial step of the protocol, and thus, more care needs to be taken regarding the space-time arrangement of the procedures).
} 
\begin{align}
\label{eq:initialgiant}
   &\ket{\psi_0}_{ST} \coloneqq \\
   & \sum_{k_j= 0, 1, \forall j \in \mathbf{j}}
   \left( \prod_{j\in\mathbf{j}}\Pi_{{j}^{k_j}} 
\ket{\psi} \right)_{S} \bigotimes_{j\in\mathbf{j}} \ket{k_j}_{T_j}, \nonumber
\end{align}
where, $\Pi_{j^{0}}=\Pi_{j^{\perp}}$ is the projector into the subspace of the qudit $S$ that yields outcome $-1$ for the observable $j\in \mathbf{j}$, while $\Pi_{j^{1}}=\Pi_{j}=\mathbf{1}-\Pi_{j^{\perp}}$, and $\ket{\psi}$ can be \emph{any} 4-qubit quantum state. 

System $S$ is sent to the Friend while each $T_j$ is sent to the corresponding $B_j$. On Alice's side, the Friend will measure the 9 observables $\{i\}_{i\in\mathbf{i}\coloneqq\{A,a,\alpha, B,b,\beta\,C,c,\gamma\}}$ on system $S$. Specifically, she will first measure $\{A,a,\alpha\}$, then $\{B,b,\beta\}$, and finally $\{C,c,\gamma\}$. Whether she implements \emph{each} triple of observables using a joint measurement (since they commute) or in a minimally disturbing sequential manner is not important for our proof of the LF no-go theorem. In either case, from AOE, the Friend will obtain an outcome, denoted $f_i$, for each observable $i\in\mathbf{i}$. (In the case where each triple of observables is measured using a single joint measurement, the $f_i$s are the corresponding coarse-grained outcomes of the joint measurement.)  We label the $-1$ outcome for the observable $i$ by $f_i=-1$, and the other outcome by $f_i=1$. 

Under the assumption of Universality of Unitarity, we use $U_{A,a,\alpha}$, $U_{B,b,\beta}$, and $U_{C,c,\gamma}$ to denote the respective unitaries modeling the measurements for each triple of observables. The superobserver Alice is assumed to have perfect quantum control over the joint system of $F$ and $S$. She will implement the inverse unitaries $U_{A,a,\alpha}^{\dagger}$ and $U_{B,b,\beta}^{\dagger}$ at specific points in time during the protocol.

The exact sequence of these quantum operations is shown in the schematic representation of the protocol in Fig.~\ref{fig_setuspPM}.
Specifically, first, $F$ measures $A$, $a$, and $\alpha$ on $S$, after which the superobserver Alice undoes these measurements by applying $U_{A,a,\alpha}^{\dagger}$. Then $F$ measures $B$, $b$ and $\beta$ on $S$, followed by the undoing $U_{B,b,\beta}^{\dagger}$. Finally, $F$ measures $C$, $c$, and $\gamma$ on $S$.

\begin{figure}[]
\centering
\includegraphics[width=0.46\textwidth]{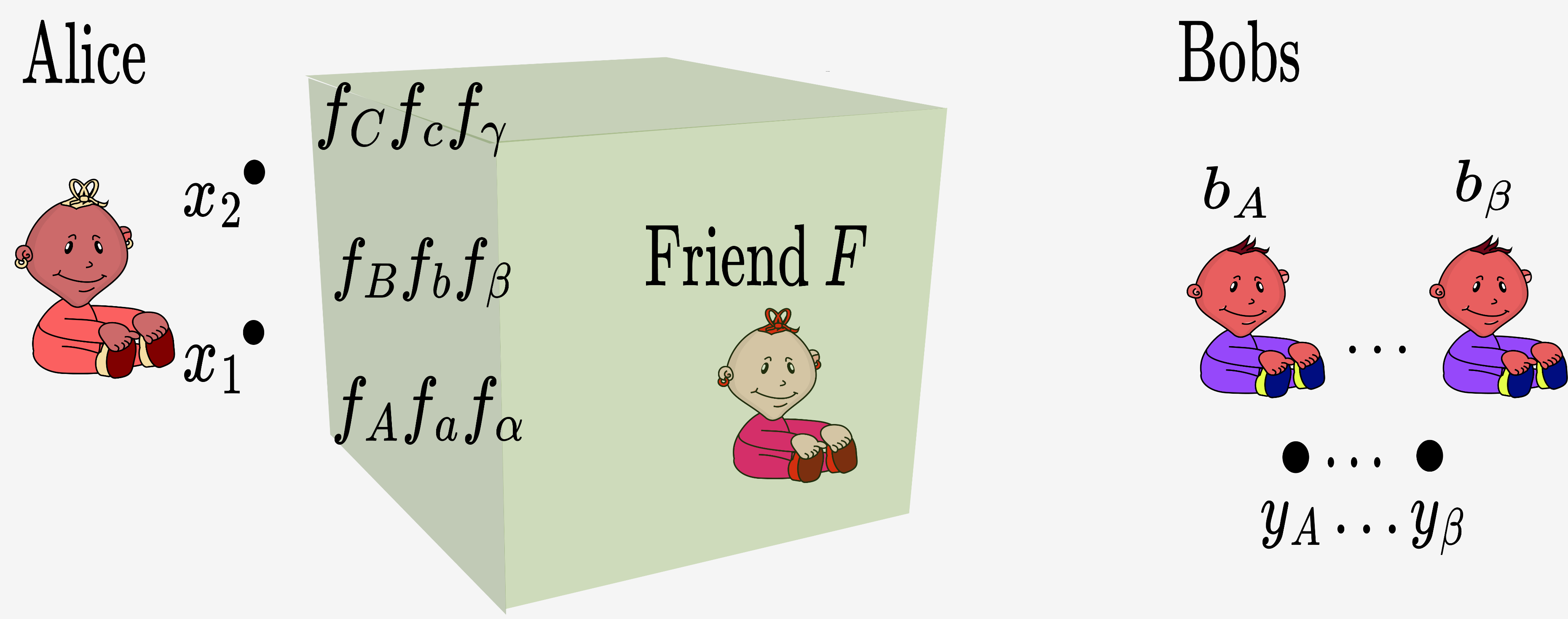}
\caption{Cartoon representation of the LF Peres-Mermin scenario, see also \Cref{fig_setuspPM}. Alice is a superobserver for her Friend, making choices $x_1,x_2$ whether to reveal the measurement outcomes obtained by the Friend, $f_A,f_a,f_\alpha$ and $f_B,f_b,f_\beta$, respectively. The spacelike separated Bobs make measurement choices $y_A,\ldots,y_\beta$ and obtain outcomes $b_A,\ldots,b_\beta$.}
\label{fig_cartoon_LF_PM}
\end{figure}

In addition, the superobserver Alice has two binary choices $x_1$ and $x_2$. The $x_1$ choice is made right before she implements the inverse unitary $U_{A,a,{\alpha}}^{\dagger}$, and the $x_2$ choice is made right before she implements the inverse unitary $U_{B,b,{\beta}}^{\dagger}$. If $x_1=1$, Alice reveals measurement outcomes $f_A$, $f_a$ and $f_{\alpha}$ to the world; if $x_1=2$, she simply continues the protocol.  Similarly, when $x_2=1$, Alice reveals measurement outcomes $f_B$, $f_b$ and $f_{\beta}$ to the world; if $x_2=2$, she simply continues the protocol. 
If $x_1=x_2=2$, the $f_C$, $f_c$ and $f_{\gamma}$ outcomes are revealed to the world at the end of the experiment.\footnote{Similar to the LF 5-cycle case, any measurement outcome obtained after $x_n=2$ is not used for the proof of the LF no-go theorem, so equivalently, whenever $x_n=2$, Alice could also terminate the protocol on her side.} 

For each Bob $B_j$ with any $j\in\mathbf{j}$, he has a binary measurement choice $y_j$. If $y_j=1$, he measures system $T_j$ in the computational basis; if $y_j=2$, he measures system $T$ in the $\pm$ basis, i.e.
\begin{align}
    \{ \ketbra{+}, \ketbra{-}  \}.
\end{align}
We denote $B_j$'s respective absolute outcome by $b_j$. 

We assume that all operations on Alice's side (including Alice and her Friend) are done at space-like separation from all Bobs' operations; furthermore, all of Bobs' operations are mutually space-like separated from each other.

Similar to the LF 5-cycle scenario, one can show that the Local Friendliness constraints on the LF Peres-Mermin scenario do not imply KSNC constraints on the compatibility graph associated with either the LF Peres-Mermin scenario or the KSNC Peres-Mermin scenario. Nevertheless, now we will show in \cref{sec:proof_LFPM} that the possibilistic proof of the failure of KSNC in the Peres-Mermin scenario can be mathematically turned
into a proof of Local Friendliness no-go theorem.

\subsubsection{The no-go theorem}
\label{sec:proof_LFPM}

In this protocol, AOE demands the existence of the following conditional probability distribution:
\begin{align}
\label{eq:PMaoe}
    P(&f_{\mathbf{i}}, b_{\mathbf{j}}|x_1,x_2, y_{\mathbf{j}}) \coloneqq \\
    P(&f_A, f_a, f_{\alpha}, f_B, f_b, f_{\beta}, f_C, f_c, f_{\gamma}, b_A, b_a, b_{\alpha}, b_B, b_b, b_{\beta} \nonumber \\
    &|x_1,x_2, y_A, y_a, y_{\alpha}, y_B, y_b, y_{\beta}) \nonumber
\end{align}
where $f_{\mathbf{i}}$,  $b_{\mathbf{j}}$, and $y_{\mathbf{j}}$ are the respective short-hand notations for \emph{all} elements in $\{f_i\}_{i\in\mathbf{i}}$, $\{b_j\}_{j\in\mathbf{j}}$, and that in $\{y_i\}_{j\in\mathbf{j}}$. 

Below, we will consider the predictions made by quantum theory for various \emph{empirical} correlations in the Peres-Mermin LF protocol, and we will then apply Local Agency to them to show that
\begin{subequations}
    \begin{align}
        P(f_A  f_{a} f_{\alpha} = -1  &|x_1=x_2=2, y_{\mathbf{j}}=2, b_{\mathbf{j}}=+)=0, \label{eq:Aaalpha}\\
        P(f_B  f_b f_{\beta} = -1  &|x_1=x_2=2, y_{\mathbf{j}}=2, b_{\mathbf{j}}=+)=0, \label{eq:Bbbeta} \\
        P(f_C  f_c f_{\gamma} = -1  &|x_1=x_2=2, y_{\mathbf{j}}=2, b_{\mathbf{j}}=+)=0, \label{eq:Ccgamma}\\
        P(f_A  f_B f_C = -1  &|x_1=x_2=2, y_{\mathbf{j}}=2, b_{\mathbf{j}}=+)=0, \label{eq:ABC} \\
        P(f_a  f_b f_c = -1 &|x_1=x_2=2, y_{\mathbf{j}}=2, b_{\mathbf{j}}=+)=0, \label{eq:abc}\\
        P(f_{\alpha}  f_{\beta} f_{\gamma} = 1  &|x_1=x_2=2, y_{\mathbf{j}}=2, b_{\mathbf{j}}=+)=0, \label{eq:alphabetagamma}
    \end{align}
\end{subequations}
where $y_{\mathbf{j}}=2$ is the short-hand notation for $y_j=2, \forall j \in \mathbf{j}$, and similarly for $b_{\mathbf{j}}=+$. 

These six equations constitute constraints on six marginals of the joint distribution
\begin{align}
\label{eq:x1y1qplus}
    P(&f_{\mathbf{i}}|x_1=x_2=2, y_{\mathbf{j}}=2, b_{\mathbf{j}}=+),
\end{align}
which is well-defined due to the existence of \cref{eq:PMaoe} together with the empirical predictions given by the Born rule that
\begin{align}
\label{eq:qplus}
    \wp(b_{\mathbf{j}}=+|x_1=x_2=2, y_{\mathbf{j}}=2 )>0.
\end{align}

However, from the Peres-Mermin KSNC proof in \cref{sec:recap_PM}, we know that these six constraints, i.e., \cref{eq:Aaalpha,eq:Bbbeta,eq:Ccgamma,eq:ABC,eq:abc,eq:alphabetagamma}, cannot be simultaneously satisfied (as they are analogous to Eqs~\eqref{eq:PMdistr}).
 Therefore, there can be no joint distribution of the form in \cref{eq:x1y1qplus} that is consistent with the empirical predictions given by the Born rule and Local Agency. Hence the no-go theorem.

It remains only to establish \cref{eq:Aaalpha,eq:Bbbeta,eq:Ccgamma,eq:ABC,eq:abc,eq:alphabetagamma}. The steps for proving \cref{eq:Aaalpha,eq:Bbbeta,eq:Ccgamma} are similar to those for proving \cref{eq:ma12,eq:ma23,eq:ma34,eq:ma45} in \cref{sec:5cycle_proof}, and the steps for proving \cref{eq:ABC,eq:ABC,eq:alphabetagamma} are similar to those for proving \cref{eq:ma15} in \cref{sec:5cycle_proof}. For completeness, a proof can be found in \Cref{app:proof_PeresMermin_eqs}.

\subsection{Translating any possibilistic KSNC argument into a LF argument } \label{sec:Commutation_Friendlinessmain}

We have now seen two examples of how possibilistic proofs of the failure of KSNC can be translated into LF arguments. More generally, one has the following:

\begin{restatable}{theorem}{LFpossKSNC}
    \label{th:LFpossKSNC}
Every quantum possibilistic proof of the failure of Kochen-Specker noncontextuality can be mathematically translated into a proof of the failure of Local Friendliness in a corresponding extended Wigner's friend scenario.
\end{restatable}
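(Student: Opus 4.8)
The plan is to give a general version of the constructions in \cref{sec:5-cycle,sec:Peres-Mermin}, showing that each ingredient of those two worked examples survives abstraction. First I would fix the abstract input. A quantum possibilistic proof of the failure of KSNC provides a Hilbert space $\mathcal{H}$, a state $|\psi\rangle\in\mathcal{H}$, and finitely many contexts $C_1,\dots,C_m$ (each a set of mutually commuting sharp measurements, equivalently a single projective joint measurement $\mathcal{C}_r$ with elements $\{\Pi^o_r\}_o$ labeled by joint outcomes $o$), together with a certificate---formulated purely at the level of the supports $S_r=\operatorname{supp}\wp_{C_r}$ and the compatibility relations---that no global valuation exists, i.e.\ no joint distribution over $\bigcup_r C_r$ restricting to something supported in $S_r$ for every $r$; for a state-independent proof one takes $|\psi\rangle$ arbitrary.

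Next I would build the extended Wigner's friend scenario. Introduce one superobserver Alice and a Friend (or, as in the $5$-cycle, a team of friends) who measures $\mathcal{C}_1,\dots,\mathcal{C}_m$ in sequence on a system $S=\mathcal{H}$, with Alice interleaving a binary choice $x_r\in\{1,2\}$ before each reversal: $x_r=1$ reveals the outcome of $\mathcal{C}_r$ to the world, while $x_r=2$ applies the undoing unitary and continues, the outcomes of $\mathcal{C}_m$ being revealed at the end if all $x_r=2$; by AOE each measurement the Friend performs yields an absolute outcome $f_M$. For every measurement $M$ appearing in more than one context, add a spacelike-separated Bob $B_M$ holding an ancilla $T_M$ with as many levels as $M$ has outcomes, and prepare the global state by a controlled-copy construction generalizing \cref{eq_5cycleini,eq:initialgiant}, $|\psi_0\rangle_{ST}=\bigl(\prod_M \mathrm{COPY}_{ST_M}\bigr)|\psi\rangle_S\bigotimes_M |0\rangle_{T_M}$, where $\mathrm{COPY}_{ST_M}$ writes the outcome of $M$ on $S$ into $T_M$. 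Each $B_M$ has a binary choice $y_M$: a computational-basis measurement (outcome $b_M$), or a generalized-Fourier-basis measurement (the $\pm$ basis when $T_M$ is a qubit); and, as in \cref{fig_setuspPM}, the spacetime layout is chosen so that all operations of Alice and the friends are spacelike from all of the Bobs' operations and the Bobs are mutually spacelike. The construction has two properties, verified exactly as in \cref{sec:5cycleprotocol,sec:PeresMerminprotocol}: (P1) a computational-basis measurement of $T_M$ returns $b_M$ equal to the outcome the Friend records for $M$ in every context containing $M$ (the copy structure collapses $T_M$ in agreement with the Friend's measurement, and $T_M$ survives the undoings, which act only on $S$ and the friends); and (P2) when every $B_M$ measures in the Fourier basis and all obtain the ``$+$'' outcome, the effective state on $S$ is $|\psi\rangle$---the algebraic input being the resolution of identity $\sum_o \Pi^o_M=\mathbf{1}$ inserted into the appropriate branch, which makes this step state-preserving---so in particular $\wp(\text{all }b_M=+\mid \text{all }x_r=2,\ \text{all }y_M=2)>0$.

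Then I would run the possibilistic argument inside the branch in which all $x_r=2$, all $y_M=2$, and all $b_M=+$. AOE supplies the full conditional distribution $P(f_{\mathrm{all}},b_{\mathrm{all}}\mid x_{\mathrm{all}},y_{\mathrm{all}})$ (the analog of \cref{eq:aoe5cycle,eq:PMaoe}), and together with (P2) this makes $P(f_{\mathrm{all}}\mid \text{target})$ well defined. I would show that this distribution (i) has, for each $r$, its marginal over $(f_M)_{M\in C_r}$ supported in $S_r$, and (ii) is consistent across contexts, $f_M$ taking a single value whenever two contexts share $M$. For (i), in the target branch the effective $S$-state is $|\psi\rangle$, so the Born-rule support of $\mathcal{C}_r$ is $S_r$, and Local Agency---applicable because no $f$ lies in the future light cone of any $x$ or $y$---transfers this from the branch where $x_r=1$ (where the outcome of $\mathcal{C}_r$ is empirically revealed) to the target branch, exactly as \cref{eq:em12,eq:em23,eq:em34,eq:em45} became \cref{eq:ma12,eq:ma23,eq:ma34,eq:ma45}. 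For (ii), (P1) gives $b_M=f_M$ in the $y_M=1$ branch together with an empirical zero-probability consistency event, and Local Agency transfers it to the $y_M=2$, $b_M=+$ branch, exactly as \cref{eq:ema1b,eq:ema5b} became \cref{eq:ma15}. Combining (i) and (ii) produces a joint distribution over $\bigcup_r C_r$ restricting correctly to every $S_r$, contradicting the hypothesis that the KSNC proof is possibilistic. Hence $P(f_{\mathrm{all}}\mid \text{target})$ cannot exist; since its existence followed from AOE, the Born rule, and the assumption that the superobserver acts unitarily on the friends, the conjunction of AOE and Local Agency must fail---this is the translated failure of Local Friendliness.

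The main obstacle---the part that cannot be done purely by analogy---is establishing (P1) and (P2) in full generality: one must verify that the controlled-copy-plus-Fourier-reset construction always reconstitutes the original KSNC correlations in the ``$+$'' branch while the computational-basis branches correctly lock in the Friend's outcomes, for an arbitrary overlap pattern among the contexts, measurements with arbitrarily many outcomes, and non-commuting elements belonging to different contexts. In the two worked examples this is explicit computation; the general claim amounts to showing that inserting $\sum_o \Pi^o_M=\mathbf{1}$ in the right place recovers each context's Born statistics and leaves the cross-context records intact. A secondary, bookkeeping obstacle is to arrange the causal structure so that every equality invoked only ever shifts a setting that is spacelike from the outcomes it relates---so that the weak assumption of Local Agency, rather than full no-signaling or Bell local causality, always suffices---and to confirm that every step uses only supports, so that the translation applies to \emph{any} possibilistic (not merely statistical) KSNC proof.
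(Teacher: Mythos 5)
Your proposal is correct and follows essentially the same route as the paper's proof in Appendix~\ref{app:proof_LFpossKSNC}: controlled-copy ancillas sent to spacelike-separated Bobs with a computational/Fourier basis choice, Alice's reveal choices, AOE to posit the global distribution in the all-``continue'' branch, and Local Agency to transfer the empirically certified support constraints (and the $f_M=b_M$ records for measurements undone before a later context is completed) into that branch, contradicting the possibilistic certificate. The properties you flag as the main obstacle, (P1) and (P2), are exactly the identities the paper verifies via Eqs.~\eqref{eb_+cnot}--\eqref{eb_1cnot0} in Appendix~\ref{app:PM}.
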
 

We prove this in Appendix~\ref{app:proof_LFpossKSNC}, using a construction analogous to the above constructions (where all Bobs only need to perform either the computational basis measurement or the complementary $\pm$ measurement). 

Note that in these LF constructions, one introduces extra measurements beyond those in the contextuality scenarios\footnote{It is commonplace for translations between different scenarios to involve slightly different states and measurements; see for example Ref.~\cite{KunjwalKS2015,PhysRevLett.129.240401,cabello2021converting} for other examples.};
in particular, the measurements made by Bobs on qubits $T_j$. 
However, one can avoid the addition of these additional measurements (and any choices of Alice or Bobs') if one appeals to the assumption of \textit{Commutation Irrelevance} \cite{walleghem2023extended} rather than Local Agency. 
We discuss this in Appendix~\ref{app:CI}. 

One could alternatively provide a different sort of translation from KSNC arguments to LF arguments by first translating the KS construction into a Bell scenario using known procedures~\cite{cabello2021converting,cabello2010proposal,aolita2012fully,cabello2011proposal}, 
and then using the mapping from Bell scenarios to LF scenarios given by reversing the translation given in Section~\ref{sec:generalisation_LF_Bell}. (Note that the translation from a KSNC argument to a Bell argument also involves adding additional measurements in the case where the initial KSNC argument is state-dependent~\cite{cabello2021converting}.)

\section{Discussion} \label{sec:discussion}

We close with a few further comments and open questions.

In Section~\ref{sec:LFasKSNC}, we found a wide class of scenarios wherein the LF and Bell polytopes coincide. 
Within the context of the initial KSNC scenario, the LF assumptions do not imply any notable constraints (for example, because KSNC scenarios often do not even involve more than one system). But after mapping the KSNC scenario to a LF scenario as in \cref{sec:KSNC_to_LF}, the LF assumptions are sufficient to obtain a proof of the LF theorem. 
In \cref{sec:KSNC_to_LF}, we give techniques for deriving LF inequalities; naturally, then it would be interesting in future work to apply the techniques we used in the first half of the paper to more thoroughly characterize the LF constraints in these new LF scenarios using the relation between the LF constraints and KSNC constraints on subgraphs of the compatibility graph for the LF scenario.

In this work, we have only explored the connection between Local Friendliness and Kochen-Specker noncontextuality. But the primary (and arguably only) motivation for Kochen-Specker noncontextuality is in fact the generalized notion of noncontextuality~\cite{spekkens2005contextuality,spekkens2019ontologicalidentityempiricalindiscernibles}, and it would be interesting in the future to explore the connection between generalized noncontextuality and EWF no-go theorems.

\section*{Acknowledgements}
Many of the initial ideas of this paper came from discussions LW had with Rui Soares Barbosa. We thank Howard Wiseman, Eric Cavalcanti, Marwan Haddara, Emmanuel Zambrini Cruzeiro, Yeong-Cherng Liang, Ravi Kunjwal, Ricardo Faleiro, Ernest Tan, 
Marina Maciel Ansanelli for useful discussions. 
LW also thanks the International Iberian Nanotechnology Laboratory--INL in Braga, Portugal and the Quantum and Linear-Optical Computation (QLOC) group for the kind hospitality. YY was supported by Perimeter Institute for Theoretical Physics. Research at Perimeter Institute is supported in part by the Government of Canada through the Department of Innovation, Science and Economic Development and by the Province of Ontario through the Ministry of Colleges and Universities. YY was also supported by the Natural Sciences and Engineering Research Council of Canada (Grant No. RGPIN-2024-04419). 
RW acknowledges support from FCT–-Fundação para a
Ciência e a Tecnologia (Portugal) through PhD Grant SFRH/BD/151199/2021 and from the European Research Council (ERC) under the European Union’s Horizon 2020 research and innovation programme (grant agreement No.856432, HyperQ). LW acknowledges support from the United Kingdom Engineering and Physical Sciences Research Council (EPSRC) DTP Studentship (grant number EP/W524657/1).
DS was supported by the Foundation for Polish Science (IRAP project, ICTQT, contract no. MAB/2018/5, co-financed by EU within Smart Growth Operational Programme). 
While finalizing this manuscript, we became aware of concurrent works in Ref.~\cite{haddara2024local,haddaraSecond} that also include investigations of the relationship between the LF and Bell polytopes. 
We thank the authors of these works, Eric Cavalcanti, Marwan Haddara and Howard Wiseman, for feedback on our manuscript.

\vspace{1cm}

\vspace{1cm}

\bibliographystyle{unsrtnat}

%apsrev4-2.bst 2019-01-14 (MD) hand-edited version of apsrev4-1.bst
%Control: key (0)
%Control: author (8) initials jnrlst
%Control: editor formatted (1) identically to author
%Control: production of article title (0) allowed
%Control: page (0) single
%Control: year (1) truncated
%Control: production of eprint (0) enabled
%

%\onecolumn
\newpage
\appendix

\section{More on the LF scenario and theorem}
\label{app:compare}

Compared to the usual presentations of the Local Friendliness scenario~\cite{bong2020strong,wiseman2023thoughtful,haddara2022possibilistic}, where superobservers reverse their friend’s measurement and perform new measurements themselves, the superobservers (namely, Alice and Bob) in our description do not perform these new measurements themselves; instead, the new measurements are carried out by Charlie and Debbie (after the reversals). These two descriptions are equivalent, since a measurement outcome does not depend on who performs the measurement. Furthermore, as noted in the literature (e.g., in~\cite[Sec. II.E]{schmid2023review}), the quantum predictions in such an experiment are the same as the predictions in experiments where superobservers measure the entire labs of their respective friends (such as in the Frauchiger–Renner scenario~\cite{frauchiger2018quantum}) in appropriate bases instead of reversing and remeasuring. \

Now we provide more information on the proposed quantum experiments~\cite{bong2020strong,wiseman2023thoughtful}  where the LF inequalities are violated. In such proposals, Charlie and Debbie are required to perform their measurements in their respective laboratories in a sufficiently isolated way so that each of their laboratories can be viewed as a closed system, while Alice and Bob are superobservers who possess the extreme technological abilities to perform (close enough to) arbitrary quantum operations on an observer's laboratory. This can be motivated by \emph{Universality of Unitarity}: 
that all possible dynamics of a closed system can be described as some unitary map acting on the Hilbert space associated to that closed system (even the dynamics of macroscopic systems including observers performing measurements).

Then, the evolution of Charlie's lab during Charlie's measurement can be modeled by a unitary $U_C$ and the evolution of Debbie's lab during Debbie's measurement can be modeled by a unitary $U_D$. Consequently, Alice can reverse the evolution of Charlie's lab by applying $U_C^{\dagger}$, and Bob can reverse the evolution of Debbie's lab by applying $U_D^{\dagger}$. The Born rule predicts that the LF inequalities can be violated in such quantum LF experiments, and hence one can prove the following no-go theorem~\cite{bong2020strong}. (Here and as usual, we assume the operational adequacy of quantum theory. That is, any operationally \emph{accessible} correlation obeys the Born rule.) We refer the readers to Ref.~\cite{schmid2023review} for a simple proof of the LF theorem.

Note that the Local Friendliness scenario can be presented in an entirely theory-independent manner, without any mention of quantum theory. Similarly, the derivation of LF inequalities is also theory-independent, relying solely on the Local Friendliness assumptions applied to the LF scenarios.
Violations of the LF inequalities in an LF scenario then signal the failure of the LF assumptions in and of themselves. Quantum theory predicts the violation of these inequalities, provided one assumes that the friends' measurements can be modeled approximately as unitaries and that advanced enough observers can perform quantum operations on their friends and their environment.
For simplicity, our generalizations of LF scenarios and the new LF scenarios inspired by KSNC arguments are not presented in a theory-independent way, but they can nevertheless be rephrased to be theory-independent. 

\

\section{More on the gluing trick}
\label{app:glu}

First, we supplement Sec.~\ref{sec:32} with the proof for $P_{22,32}$ to recover all six pairwise correlations.
\begin{itemize}
    \item Marginalizing over $a_3$ gives
    \begin{align}
    \label{eq:321}
    &\sum_{a_3} P_{22,32} \nonumber\\
    = &\frac{P_{22}(a_1,a_2,b_1,b_2)\sum_{a_3} P_{32}(a_1,a_3,b_1,b_2)}{\sum_{a_3} P_{32}(a_1,a_3,b_1,b_2)}\nonumber\\
    = &P_{22}(a_1,a_2,b_1,b_2),
    \end{align}
    which recovers $\wp(a_2,b_2)$, $\wp(a_1,b_2)$, $\wp(a_2,b_1)$ and $\wp(a_1,b_1)$ as marginals;
    \item From \cref{eq:bcd2232}, the denominator in $ P_{22,32}$ is the same as $\sum_{a_2} P_{22}(a_1,a_2,b_1,b_2)$ and thus,
    \begin{align}
    \label{eq:322}
    & P_{22,32} \\
    =  & \frac{P_{22}(a_1,a_2,b_1,b_2)P_{32}(a_1,a_3,b_1,b_2)}{\sum_{a_2} P_{22}(a_1,a_2,b_1,b_2)}.\nonumber
    \end{align}
    Marginalizing over $a_2$ gives
    \begin{align}
    \label{eq:323}
    &\sum_{a_2} P_{22,32} \nonumber\\
    = &\frac{\bigl(\sum_{a_2} P_{22}(a_1,a_2,b_1,b_2)\bigr)P_{32}(a_1,a_3,b_1,b_2)}{\sum_{a_2} P_{22}(a_1,a_2,b_1,b_2)}\nonumber\\
    =&P_{32}(a_1,a_3,b_1,b_2),
    \end{align}
    which recovers $\wp(a_3,b_2)$ and $\wp(a_3,b_1)$ (and $\wp(a_1,b_2)$, and $\wp(a_1,b_1)$) as marginals.
\end{itemize}

Secondly, we supplement Sec.~\ref{sec:minimal} with the explicit proof of how the LF polytope coincides with the corresponding Bell polytope.

Analogous to \cref{eq:LAij}, we have
\begin{align}
    \sum_{a_2} P(a_1a_2b_1|x=2,y=1)= \wp(a_1,b_1),\label{eq:22em1}\\
    \sum_{a_1} P(a_1a_2b_1|x=2,y=1)= \wp(a_2,b_1),\label{eq:22em2}
\end{align}
and
\begin{align}
    \sum_{a_2} P(a_1a_2b_2|x=2,y=2)= \wp(a_1,b_2),\label{eq:22em3}\\
    \sum_{a_1} P(a_1a_2b_2|x=2,y=2)= \wp(a_2,b_2).\label{eq:22em4}
\end{align}
Analogous to \cref{eq:bcd2232}, $P(abc|x=2,y=1)$ and $P(abc|x=2,y=2)$ are further related by 
\begin{equation}
\label{eq:2122}
    \sum_{b_1} P(a_1a_2b_1|x=2,y=1)= \sum_{b_2} P(a_1a_2b_2|x=2,y=2),
\end{equation}
due to Local Agency. Using \cref{eq:2122}, one can compute that the following joint distribution (which is the analog of \cref{eq:2232}) has all four pairwise correlations $\wp(a_x,b_y)$ for $x=1,2$ and $y=1,2$ as its marginals:
\begin{equation}
\frac{P(a_1a_2b_1|x=2,y=1)P(a_1a_2b_2|x=2,y=2)}{\sum_{b_2} P(a_1a_2b_2|x=2,y=2)}.
\end{equation}
Thus, these four pairwise correlations satisfy the KSNC constraints (or equivalently, the CHSH inequalities) on the compatibility graph in \cref{fig:a1aib1bj} where $i=2$ and $j=2$. 

Furthermore, similar to \cref{sec:equi}, we can show that the existence of $P(a_1a_2b_1|x=2,y=1)$ and $P(a_1a_2b_2|x=2,y=2)$ and that they reproduce empirical correlations as in \cref{eq:22em1,eq:22em2,eq:22em3,eq:22em4} while agreeing on their overlaps as in \cref{eq:2122} are all the LF constraints in this minimal scenario. 
So the Local Friendliness polytope coincides with the corresponding Bell polytope in the minimal scenario.

Thirdly, we show how one can gain some intuition for why the LF polytope for the 3,3-setting LF scenario is strictly bigger than the corresponding Bell polytope (cf. Sec.~\ref{sec:33}) by considering how the gluing trick introduced in \cref{sec:glue} works. The building blocks of our gluing trick are distributions over outcomes of four measurements that are vertices of a compatibility graph in the form of \cref{fig:NaBa}, where each vertex has two edges. As such, when we glue two of these building-blocks together, such as when gluing $P_{22}$ and $P_{32}$ in \cref{eq:2232}, the new distribution $P_{22,32}$ has one more variable (compared to $P_{22}$), namely $a_3$ from $P_{32}$; moreover, it has as its marginals two more empirical pairwise correlations, namely the ones that are marginals of $P_{32}$ involving $a_3$, i.e., $\wp(a_3,b_1)$ and $\wp(a_3,b_2)$. Equivalently, compared to the compatibility graph \cref{fig:a12b120} for $P_{22}$, the compatibility graph \cref{fig:a123b12m} for $P_{22,32}$ has one more vertex, namely $A_3$, coming from the compatibility graph \cref{fig:a13b120} for $P_{32}$, and two more edges, namely the ones connected with $A_3$ in \cref{fig:a13b120}---i.e., the $A_3$-$B_1$ edge and the $A_2$-$B_2$ edge. That is, when we glue one building block to another, adding one more vertex comes with exactly two more edges. Furthermore, the gluing operation cannot add any edges without adding vertices. Therefore, it is impossible to obtain the full compatibility graph of the 3,3-setting LF scenario via gluing since the difference between that and any building block in \cref{fig:4pair} includes 2 vertices but 5 edges, while the gluing operations can only get 4 more edges when adding 2 vertices. Therefore, we could only get $P_{\lnot a_3b_3}$, $P_{\lnot a_2b_2}$, $P_{\lnot a_2b_3}$ and $P_{\lnot a_3b_2}$ and we cannot get KSNC constraints on the full compatibility graphs of the 3,3-setting LF scenario via gluing. As mentioned earlier, the Local Friendliness assumptions in fact do not require the KSNC constraints on the compatibility graph for this scenario, and thus, the LF constraints in this scenario are strictly weaker than the KSNC constraints.

\section{Connecting our results in Sec.~\ref{sec:33} to Ref.~\cite{bong2020strong}}

\label{app:connect}
Table 1 of Ref.~\cite{bong2020strong} listed all of the LF inequalities for the 3,3-setting case with binary outcomes; these are tight LF inequalities and were obtained directly via a linear program. We now compare our results of \Cref{sec:33} to these inequalities.

The first row of \cite[Table 1]{bong2020strong} indicates that the CHSH inequalities on any set of four pairwise correlations, $\bigl\{\wp(a_x,b_y)|x\in\{1,i\}, y\in\{1,j\}\bigr\}$, with $i,j\in\{2,3\}$,
are also LF inequalities. This is a special case of what we proved in item I.1-I.4 when one focuses on binary outcomes (since the KSNC constraints are known to be CHSH inequalities in that case). 

The second row of \cite[Table 1]{bong2020strong} indicates that the CHSH inequalities on any set of four pairwise correlations $\bigl\{\wp(a_x,b_y)|x\in\{1,i\}, y\in\{2,3\}\bigr\}$ or $\bigl\{\wp(a_x,b_y)|x\in\{2,3\}, y\in\{1,i\}\bigr\}$ with $i\in\{2,3\}$, are also LF inequalities. This is a special case of what we proved in II.1-II.2 when one focuses on binary outcomes, since \cref{fig:noa3b3} has \cref{fig:a123b23} and \cref{fig:a23b12} as subgraphs, and \cref{fig:noa2b2} has \cref{fig:a13b23} and \cref{fig:a23b13} as subgraphs. Consequently, II.1 implies that the four pairwise correlations $\wp(a_x,b_y)$ with $x=1,2$ and $y=2,3$ (or with  $x=2,3$ and $y=1,2$) must obey KSNC constraints on the compatibility graph in \cref{fig:a123b23} (or respectively, \cref{fig:a23b12}), while II.2 implies that the four pairwise correlations $\wp(a_x,b_y)$ with $x=1,3$ and $y=2,3$ (or with  $x=2,3$ and $y=1,3$) must obey KSNC constraints on the compatibility graph in \cref{fig:a13b23} (or respectively, \cref{fig:a23b13}). And for each of these graphs, restricted to measurements with binary outcomes, the KSNC inequalities are known to be CHSH inequalities. 

\begin{figure}[!h]
\captionsetup[subfigure]{aboveskip=-2pt,belowskip=-1pt}
    \centering
    \begin{subfigure}[b]{0.11\textwidth}
         \centering
         \includegraphics[width=\textwidth]{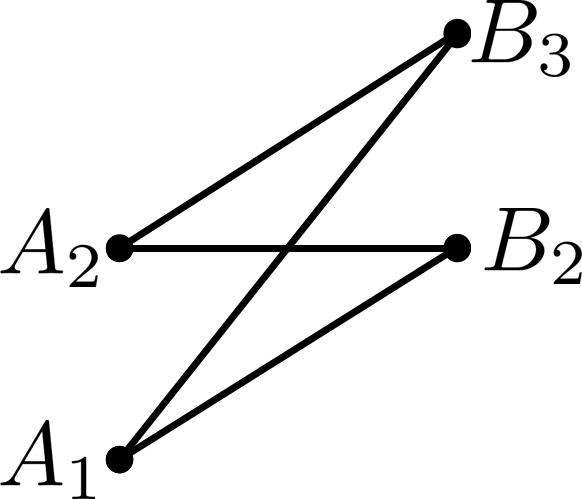}
         \caption{}
    \label{fig:a123b23}
     \end{subfigure}
    \hspace{0.4mm}
    \begin{subfigure}[b]{0.11\textwidth}
         \centering
         \includegraphics[width=\textwidth]{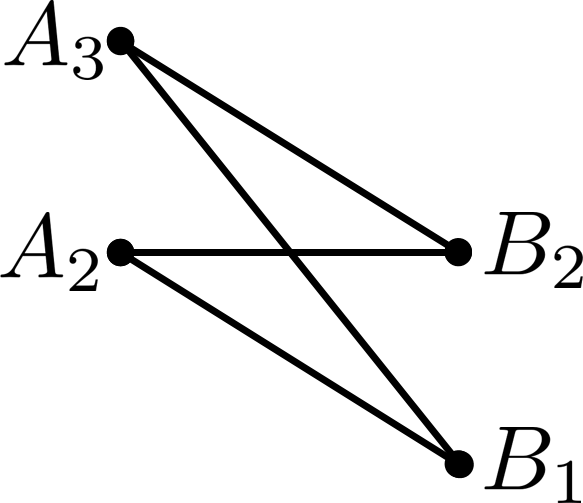}
         \caption{}
         \label{fig:a23b12}
     \end{subfigure}
     \hspace{0.4mm}
    \begin{subfigure}[b]{0.11\textwidth}
         \centering
         \includegraphics[width=\textwidth]{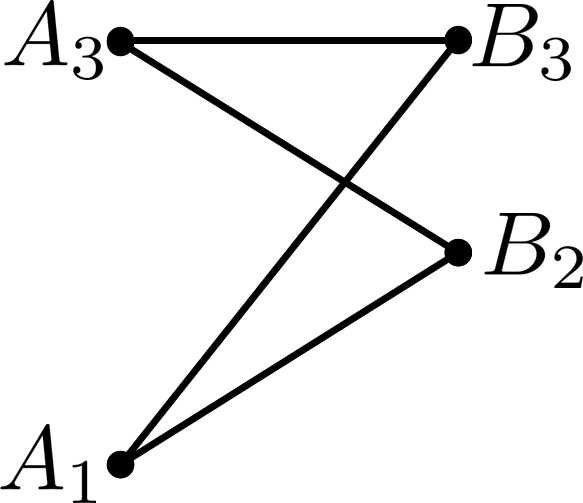}
         \caption{}
         \label{fig:a13b23}
     \end{subfigure}
    \hspace{0.4mm}
    \begin{subfigure}[b]{0.11\textwidth}
         \centering
         \includegraphics[width=\textwidth]{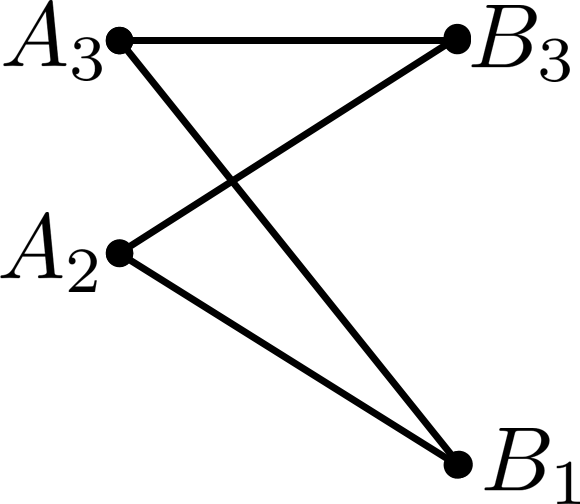}
         \caption{}
         \label{fig:a23b13}
     \end{subfigure}
    \caption{Subgraphs of the compatibility graphs in \cref{fig:8pair} with four edges that are different from the ones in \cref{fig:4pair}.}
    \label{fig:new4pair}
\end{figure}

The third row of Ref.~\cite[Table 1]{bong2020strong} indicates that the CHSH inequalities on the four pairwise correlations $\wp(a_x,b_y)$ for $x=2,3$ and $y=2,3$ are \emph{not} LF inequalities.  This can be seen, for example, by noting that the PR box correlations~\cite{Popescu1994} (which maximally violates CHSH inequalities) do not violate \cref{eq:LA}, and hence do not violate Local Agency.  Note that none of the compatibility graphs in \cref{fig:8pair} (corresponding to our item II.1-II.4) includes a subgraph representing the Bell scenario for $\{A_2,A_3,B_2,B_3\}$, thus, we have not yet derived CHSH inequalities on  $\wp(a_x,b_y)$ for $x=2,3$ and $y=2,3$.

Having seen the pattern so far, one might have expected that every Local Friendliness inequality can be recovered as a KSNC inequality for {\em some} compatibility graph---e.g., for some particular subgraph of the full compatibility graph for the measurements in the scenario. However, this is not true. 
One can see this by considering the facets of the LF polytope in Equations~(13), (14) and (16) of the Appendix of Ref.~\cite{bong2020strong}; they contain the four pairwise correlations $P(a_2, b_2)$, $P(a_2, b_3)$, $P(a_3, b_2)$, and $P(a_3, b_3)$. These four correlations could only appear together in a KSNC inequality in which all four of them are marginals of a single joint probability distribution~\cite{Kunjwal2015}. But if so,
by this fact, Fine's theorem implies that they must satisfy the CHSH inequalities, and, as discussed just above, these are {\em not} Local Friendliness inequalities. 

Each of the KSNC compatibility graphs in \cref{fig:8pair} gives rise to a KSNC polytope that is defined as the set of correlations satisfying the KSNC constraints on the corresponding graph. By taking the intersection\footnote{ More precisely, we first express each of the four KSNC polytopes in the full 36 dimensional behavior space via an H-representation (meaning that the polytope is expressed in terms of its facet-defining inequalities), and then take their intersections by considering all facet-defining inequalities as part of a single H-representation.} of the four KSNC polytopes from the four subfigures in \cref{fig:8pair}, we obtain a polytope of LF inequalities. From the argument above, this polytope strictly contains the LF polytope defined in Ref.~\cite{bong2020strong} for the 3,3-setting LF scenario. That is, our approach reproduces some, but not all, of the LF inequalities characterizing the LF polytope, computed explicitly in Ref.~\cite{bong2020strong}. We have also explicitly checked this fact using the ``traf'' command of the PORTA package. We found that the LF polytope for the 3,3-setting LF scenario is indeed strictly contained in the intersection of the four KSNC polytopes from the four subfigures in \cref{fig:8pair}. Both polytopes are defined in a $36$ dimensional space of behaviors, but the dimensions of their convex polytopes (i.e., the dimensions of their respective affine hull of points) are both $15$.  We also find that every extremal point of the LF polytope is \emph{also} an extremal point of the intersection of KSNC polytopes (but not vice versa). Furthermore, we found that the LF inequality in E   q.~(15) in Ref.~\cite{bong2020strong}, a so-called  $I_{3322}$ Bell inequality that does not involve $A_3$-$B_3$ correlations, is also a facet inequality of the intersection of the four KSNC polytopes; in fact, it is a facet inequality of the KSNC polytope for the compatibility subgraph in \cref{fig:noa3b3}. 

However, our approach can generate LF inequalities beyond those in prior work. For instance, it generates an infinite family of novel LF inequalities arising when one takes the cardinality of the outcomes to be greater than 2, and generates LF inequalities in scenarios with higher cardinality setting variables. Our approach can also be straightforwardly generalized to LF scenarios with sequential measurements as in Ref.~\cite{utreras2023allowing}, but with arbitrary cardinality of setting and outcome variables.

\section{Proof for the quantum
violations of LF inequalities derived in Sec. \ref{sec:LFasKSNC}}
\label{app:violate}

Every LF inequality we derive by connecting with KSNC in \cref{sec:LFasKSNC} is also a Bell inequality (which is not necessarily tight). Here, we prove that, if such a Bell inequality can be violated in a quantum realization of a corresponding Bell scenario, then it can also be violated in the respective LF scenario.

Consider a quantum realization of the $N_A,N_B$-setting Bell scenario that violates this Bell inequality, where Alice's measurements are denoted $A'_{x'}$ for her respective setting $x'=1,2,\dots, N_A $, and Bob's measurements are denoted $B'_{y'}$ for his respective setting $y'=1,2,\dots, N_B$. We further denote the bipartite quantum state shared between Alice and Bob as $\ket{\psi}$. 

Consider the corresponding $N_A,N_B$-setting LF scenario defined in \cref{sec:recap_LF_scenario}, in which we now construct a quantum violation of the same Bell inequality (now viewed as a LF inequality).
We take the bipartite system being measured to also be a quantum system in the state $\ket{\psi}$. 
We let the measurements performed by Charlie satisfy  $A_i=A'_i$, $\forall i\in \{1,2,\dots,N_A\}$ and similarly measurements performed by Debbie satisfy $B_j=B'_j$ for any $j\in\{1,2,\dots,N_B\}$. 

Moreover, we let Charlie perform his measurement in a closed lab such that according to unitary quantum theory, any evolution of his labs, including his measurement process, is unitary. Then, if Alice further has the extreme technological capability to perform unitary operations on Charlie's lab, she could, in principle, apply the inverse of the unitary corresponding to Charlie's measurement so that the state of the lab, including Charlie and his share of the bipartite system, is restored to the initial state prior to his measurement. In particular, we let Alice reverse Charlie's measurement process by applying this inverse unitary when $x\neq 1$. 

Similarly, we let Debbie perform her measurement in a closed lab, and we assume Bob has the extreme technological capability to reverse Debbie's measurement process and he does so by applying the inverse of the corresponding unitary when $y\neq1$.

As such, 
the pairwise correlations $\wp(a_x,b_y)$ for $x=1,2,\dots,N_A$ and $y=1,2,\dots,N_B$ in this LF scenario are all correlations between outcomes of measurements $A'_x$ and $B'_y$ performed on $\ket{\psi}$, and consequently are the same as what they were in the Bell scenario, and so violate the same inequality.

\section{General classes of LF scenarios where all LF constraints are also KSNC constraints}
\label{app:generalisation_LF_Bell}

In this section, we consider a variety of LF scenarios that are generalizations of the ones we have discussed so far, and we show the condition under which all the LF constraints in a given scenario can also be viewed as KSNC constraints, and in particular, Bell inequalities.

\vspace{0.5cm}

\paragraph{Multipartite sequential LF scenario with binary settings} \label{par:LF_construction}

In the bipartite LF scenario defined in \cref{sec:recap_LF_scenario}, there are only two superobserver-friend pairs (namely, the Alice-Charlie pair and the Bob-Debbie pair). We now consider a class of multipartite LF scenarios where there are $\chi$ superobserver-friend pairs for any integer $\chi>1$. The superobservers are called $A,B,C,\dots$, and their respective friends are called $F_A, F_B, F_C,\dots$. For each superobserver $\Omega$ with $\Omega\in \{A,B,C,\dots\}$, their friend $F_{\Omega}$ will make various measurements from the set $\{\Omega_1,\Omega_2,\dots,\Omega_{N_{\Omega}}\}$, one after another in sequence, but with the superobserver $\Omega$ undoing each of their measurements in between.  Furthermore, after each of the friend's measurements, the superobserver has a binary choice to decide whether to reveal the friend's outcome (and stop their portion of the experiment) or reverse their measurement. In total, each superobserver has $N_{\Omega}-1$ binary measurement choices.  
Such a multipartite sequential LF scenario is a generalization (to the multipartite case) of the sequential LF scenario introduced in Ref.~\cite{utreras2023allowing}. Now we will describe the protocol in detail.

At the start of the experiment, a $\chi$-partite state is distributed between the $\chi$ friends, who perform their first respective measurement, $A_1,B_1,\ldots$, on their share of the system, obtaining outcomes $a_1,b_1,c_1,\dots$, respectively. 

Then, each superobserver makes a binary choice. 
These choices are denoted as $x_{A_1},x_{B_1},x_{C_1},\dots\in \{1,2\}$. When $x_{A_1}=1$, the superobserver $A$ reveals the outcome $a_1$ of her friend $F_A$ to the world, and agents $A$ and $F_A$ halt their portion of the experiment.
When $x_{A_1}=2$, the superobserver $A$ reverses the evolution of $F_A$'s lab due to $F_A$'s measurement, and then 
instructs $F_A$ to do another measurement $A_2$, the outcome of which is denoted $a_2$. Similarly, for the other superobservers, when their respective setting choice is 1, they reveal the corresponding friend's outcome, and that superobserver and their friend halt their portion of the experiment; when their respective setting choice is 2, they reverse their friend's measurement and ask them to make another measurement, denoted $B_2,C_2,\dots$, respectively, with the corresponding outcome $b_2,c_2,\dots$.

For each superobserver whose portion of the experiment has not stopped, another binary choice is then made, denoted $x_{A_2},x_{B_2},x_{C_2},\dots\in \{1,2\}$. When a superobserver's choice is 1, they reveal their friend's outcome (which is $a_2, b_2$ or $c_2$, etc.)  and their portion of the experiment stops; when their choice is 2, they reverse their friend's measurement and ask them to make another measurement, denoted $A_3,B_3,C_3,\dots$, respectively, with the corresponding outcome $a_3,b_3,c_3,\dots$.

For each superobserver-friend $\Omega$-$F_{\Omega}$ pair (for any $\Omega=A,B,C,\dots$), this process is repeated until either the friend's measurement outcome is revealed, or until the superobserver $\Omega$  reaches their $(N_{\Omega}-1)$th binary choice, i.e., the final choice. In the latter case, when, $x_{{\Omega}_{N_{\Omega}-1}}=2$, the superobserver $\Omega$ reverses her friend $F_{\Omega}$'s measurement, asking them to perform measurement $\Omega_{N_{\Omega}}$ and then reveals the corresponding outcome. 

The operations done by each superobserver-friend pair are mutually space-like separated.

At the end of the experiment, we collect statistics for whichever outcomes were revealed given the setting choices.

For a given multipartite sequential LF scenario of this sort, one can consider the corresponding multipartite Bell scenario with $\chi$ agents $A,B,\ldots$ whose respective measurement choices are $x,y,\dots$, corresponding to measurements $A_x,B_y,\ldots$  (so that when  $x=1$, $A$ performs the measurement $A_1$, and so on.) The range of their choices are $x\in \{1,2,\dots,N_A\}$, $y\in \{1,2,\dots,N_B\}$ and etcetera.

\begin{restatable}{lemma}{lembinary}\label{lem:LF_binary}
   The LF constraints on the empirical correlations obtained in the multipartite sequential LF scenario with binary choices (as defined just above) are equivalent to the constraints from KSNC in the corresponding multipartite Bell scenario for any integer $\chi,N_A,N_B,\dots,>1$. That is, the LF polytope equals the Bell polytope in these scenarios.
\end{restatable}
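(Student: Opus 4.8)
The plan is to generalize the "gluing" argument of \cref{sec:glue,sec:32} from the bipartite binary-setting case to the $\chi$-partite case, and to combine it with the observation (already established in \cref{sec:Bell_inside_LF}) that the LF polytope always contains the corresponding Bell polytope. Since the reverse containment is the trivial half, the real content is to show that the LF constraints \emph{imply} the existence of a single global joint distribution $P(a_1,\dots,a_{N_A},b_1,\dots,b_{N_B},\dots)$ over all outcomes that reproduces every empirically accessible correlation; this global distribution is exactly what characterizes the Bell polytope for the corresponding multipartite Bell scenario (by the standard equivalence between the Bell polytope and the KSNC polytope on a complete multipartite compatibility graph, cited in \cref{sec:recap_KSNC}). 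So the lemma reduces to: \emph{LF constraints $\Rightarrow$ existence of such a global $P$}.

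First I would unpack what AOE and Local Agency give in this scenario. In a single run with setting choices $\vec{x}=(x_{A_1},x_{A_2},\dots;x_{B_1},\dots;\dots)$, the friends that actually get measured are determined by the first "reveal" choice for each superobserver; because each superobserver's choice string $x_{\Omega_1},x_{\Omega_2},\dots$ is a sequence of binary decisions ("reveal now" vs. "reverse and continue"), for each $\Omega$ the revealed outcome is $\Omega_{k}$ where $k-1$ is the position of the first $1$ in the string (or $k=N_\Omega$ if the string is all $2$'s). Thus the empirically accessible correlations are exactly the $\chi$-fold products of single-outcome data: for each choice of $(k_A,k_B,\dots)$ with $k_\Omega\in\{1,\dots,N_\Omega\}$, one observes $\wp(a_{k_A},b_{k_B},\dots)$. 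Exactly as in \cref{sec:equi}, AOE applied to the run where every superobserver chooses to reverse maximally (reaching $\Omega_{N_\Omega}$) gives a joint distribution over \emph{all} $\sum_\Omega N_\Omega$ outcomes $P(a_1,\dots,a_{N_A},b_1,\dots,b_{N_B},\dots\,|\,\vec x_{\max})$, and AOE for runs with earlier reveals gives its marginals; Local Agency then lets us identify the various conditionals across different $\vec x$, so that in fact there is a \emph{single} distribution $P(a_1,\dots,b_1,\dots)$ independent of $\vec x$ whose marginals are all the empirical correlations. The key point — and this is where the binary-choice structure is essential — is that the "reveal at step $k$" structure means the set of outcomes revealed in a run is always a \emph{down-set} of sequential measurements on each side, so the all-reverse run contains every possible revealed outcome as one of its outcomes; no gluing of incompatible pieces is needed because one AOE distribution already dominates all the others. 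I would then note that Local Agency (no correlation between a freely chosen setting and spacelike-or-earlier events) is what licenses dropping the $\vec x$-dependence, exactly as in \cref{eq:LA} and its use in \cref{sec:equi}.

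With the global $P$ in hand, its existence is precisely the statement that all empirical correlations lie in the KSNC polytope on the complete multipartite compatibility graph — which, since that graph is the compatibility graph of the corresponding multipartite Bell scenario, coincides with the Bell polytope. Combined with the containment proved in \cref{sec:Bell_inside_LF} (which applies verbatim: any global Bell-model distribution marginalizes to furnish all the AOE distributions and satisfies Local Agency since everything stems from one distribution), the two polytopes coincide. I expect the main obstacle to be purely bookkeeping: carefully verifying that the binary "reveal-or-continue" choice structure really does force the revealed-outcome sets to be nested (down-closed) on each side, so that one AOE distribution — the all-reverse one for each superobserver — suffices and no genuine gluing (with its attendant incompatibility obstructions, as in the $3,3$ case of \cref{sec:33}) ever arises. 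If instead superobservers could, at a single choice point, pick among more than two options in a way that branched into incomparable measurement subsets, the argument would break, which is exactly why the lemma is stated for binary choices. A secondary point to handle with care is the spacetime/light-cone arrangement: one must confirm that each outcome used is spacelike separated from (or in the past of) every setting it needs to be decoupled from, so that Local Agency applies to each identification — but this is guaranteed by the stipulation that all superobserver–friend pairs' operations are mutually spacelike separated.
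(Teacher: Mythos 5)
Your proposal is correct and follows essentially the same route as the paper's own proof in \cref{app:LF_binary}: take the AOE distribution for the all-reverse runs as the single global joint distribution, use Local Agency (each revealed outcome being outside the future light cone of all of that party's later choices and of every other party's choices) to identify each empirical correlation $\wp(a_{k_A},b_{k_B},\dots)$ with a marginal of that one distribution, invoke Fine's theorem to conclude membership in the Bell polytope, and combine with the containment from \cref{sec:Bell_inside_LF}. The only blemish is a trivial indexing slip: if the first ``reveal'' choice in $\Omega$'s string occurs at position $j$, the revealed outcome is $\Omega_j$, not $\Omega_{j+1}$.
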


The proof of \cref{lem:LF_binary} is in Appendix~\ref{app:LF_binary}. 
It is analogous to that in \cref{sec:22} for the 2,2-setting bipartite LF scenario, and we sketch the proof now.
There are, in total, $\sum_{\Omega}N_{\Omega}$ measurements involved in such a multipartite sequential LF scenario with binary choices, namely $A_1, A_2, \dots, A_{N_A}$, $B_1, B_2, \dots, B_{N_B}$, etcetera. In the runs where all the superobservers have all of their respective choices being 1 except that their final choice is 2, all the $\sum_{\Omega}N_{\Omega}$ measurements are performed by the friends. According to AOE, for those runs, there exists a  distribution
\begin{align}
P&(a_1,\dots,a_{N_A},b_1,\dots,b_{N_B},\dots \\
&|x_{A_1}=\dots =x_{A_{N_A-1}}=x_{B_1}=\dots=x_{{N_B-1}}=\dots=2). \nonumber
\end{align}
Then, using Local Agency, one can show that this joint distribution has as its marginals all $\chi$-way empirical correlations such as $\wp(a_1,b_1,\dots|x_{A_1}=x_{B_1}=\dots=1)$, $\wp(a_2,b_1,\dots|x_{A_2}=x_{B_1}=\dots=1,x_{A_1}=2)$, etcetera.
Thus, LF demands the KSNC constraints on the multipartite sequential LF scenario with binary choices, which are, equivalently, Bell inequalities in the corresponding Bell scenario. Following an analogous proof as in \Cref{sec:Bell_inside_LF}, one can further show that the Bell polytope formed by these Bell inequalities must be contained inside the LF polytope for the LF scenario. As such, the LF polytope here equals the corresponding Bell polytope.

\vspace{0.5cm}

\paragraph{Further generalization with one superobserver having a high-cardinality final choice and/or losing their friend}
\label{par:further}

Now, recall that earlier in \cref{sec:32} for the 3,2-setting bipartite LF scenario and for the minimal LF scenario, all the LF constraints are {\em also} equivalent to the constraints from KSNC in a corresponding Bell scenario. While the multipartite sequential LF scenario with binary choices is analogous to (and can be seen as a generalization of) the 2,2-setting bipartite LF scenario, can we further generalize it to obtain an analog to the 3,2-setting bipartite LF scenario or the minimal scenario, and in this case, are the LF constraints equivalent to KSNC constraints from some corresponding Bell scenario? 
The answer is affirmative.

First, let us define the scenario as a generalization of that given in the previous section. We now allow one of the superobservers to have their \emph{final} choice of measurement to have arbitrary cardinality (as opposed to being binary). 
Without loss of generality, let us assume that it is the superobserver $A$ who has the high-cardinality final choice, i.e., $x_{A_{N_A}}\in \{1,2,\dots,\xi\}$ for some integer $\xi>2$. 
As before, if for this final choice the superobserver $A$ chooses $x_{A_{N_A}}=1$, the superobserver $A$ reveals the outcome of her friend $F_A$'s last measurement. Now if for this final choice $x_{A_{N_A}}$ the superobserver chooses $\alpha \in \{2,\ldots,\xi\}$, the superobserver $A$ reverses her friend $F_A$'s last measurement and tells $F_A$ to perform another measurement corresponding to the choice $\alpha$, after which this outcome is revealed.

Such an LF scenario can be viewed as containing $\xi-1$ subscenarios, each of which is a multipartite sequential LF scenario where all superobservers only have binary choices as described in Sec. \ref{par:LF_construction} and where the final choice of the superobserver $A$ is either $1$ or $\alpha$. According to \cref{lem:LF_binary}, the LF constraints in each of the subscenarios are equivalent to KSNC constraints in the corresponding multipartite Bell scenario. In each of the sub-scenarios, the joint distribution that recovers the corresponding empirically observable $\chi$-way correlations is
\begin{align}
P_{\alpha}&\coloneqq  \nonumber\\
P(&a_1,\dots,a_{N_A-1}, a_{N_{A}-2+\alpha},b_1,\dots,b_{N_B},\dots \nonumber \\
&|x_{A_1}=\dots =x_{A_{N_A-1}}=x_{B_1}=\dots=x_{{N_B-1}}=\dots=1, \nonumber\\
&x_{A_{N_A}}=\alpha, x_{{N_B}}=\dots=2).
\end{align}
Just like in the proof of \cref{sec:32} for the (3,2)-setting bipartite LF scenario, we can here define a large joint distribution (by gluing all the $P_\alpha$'s together)
\begin{align}
    P_{\rm all}\coloneqq 
    \frac{P_{2}P_3\dots P_{\xi}}{\sum_{a_{N_A+1}}P_3\sum_{a_{N_A+2}}P_4\dots\sum_{a_{N_{A}-2+\xi}}P_{\xi}}.
\end{align}
To prove that $P_{\rm all}$ recovers all empirical $\chi$-way correlations in this scenario, it suffices to notice that, except for $a_{N_A-2+\alpha}$, none of the outcomes obtained by any friend is in the future light cone of $x_{A_{N_A}}$ and thus, according to Local Agency, all $P_{\alpha}$s agree on their overlap, namely
\begin{align}
\sum_{a_{N_A}}P_2=\sum_{a_{N_A+1}}P_3=\dots=\sum_{a_{N_{A}-2+\xi}}P_{\xi},
\end{align}
which is analogous to \cref{eq:bcd2232} in \cref{sec:32}. Then the rest of the proof is analogous to \cref{eq:321,eq:322,eq:323} in \cref{sec:32}. 

Furthermore, following an analogous proof as in \cref{sec:equi}, we can show that the existence of the $P_{\alpha}$s and that they recover empirically observable $\chi$-way correlations as their marginals while agreeing on their overlaps are all the LF constraints on this multipartite sequential LF scenario. For such a multipartite sequential LF scenario where one of the superobservers, e.g., $\Omega$, has a final choice with cardinality $\xi$, one can again define a corresponding multipartite Bell scenario with $\chi$ agents $A,B,\ldots$ whose respective setting choices are $x,y,\dots$, corresponding to all measurements performed on the respective superobservers $A,B,\ldots$ in the LF scenario. Then, we reach the conclusion that the LF polytope is the same as the Bell polytope in the corresponding Bell scenario here.

Analogously to the minimal LF scenario discussed in \cref{sec:minimal}, we can also consider the special case where one superobserver has no friend and only has a choice (not necessarily binary) out of several measurements that she herself can perform. This scenario is equivalent to the special case of the scenario described above, where the friend of superobserver $A$ performs no measurements at all before superobserver $A$ makes her final choice (with potentially many options).
The proof that the LF constraints in this scenario where one of the superobservers has no friend are equivalent to the Bell inequalities on the corresponding Bell scenario is analogous to the one in \cref{sec:minimal} for the minimal scenario, and we will not repeat here.

Thus, we can have one of the superobservers in the multipartite sequential LF scenario defined in Sec. \ref{par:LF_construction} having a high-cardinality choice for their final choice and potentially losing their friend, and still have the equivalence between the LF constraints and the KSNC constraints in the corresponding Bell scenario. Let us call these generalizations, where at most one of the superobservers has a high-cardinality final choice and/or has no friend the \emph{multipartite sequential LF scenario with binary choices and one maverick}. We arrive at the following theorem:

\begin{theorem}
\label{thm:maverick}
    The LF constraints on the empirical correlations obtained in a multipartite sequential LF scenario with binary choices and one maverick, as defined above, are equivalent to the constraints from KSNC in the corresponding multipartite Bell scenario for any integer $\chi,N_A,N_B,\dots,>1$. That is, the LF polytope equals the Bell polytope in these scenarios.
\end{theorem}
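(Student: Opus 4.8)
The plan is to reduce \cref{thm:maverick} to \cref{lem:LF_binary} together with the gluing construction sketched in \cref{par:further}, and then to run the same equivalence argument as in \cref{sec:equi,sec:Bell_inside_LF}. Without loss of generality let $A$ be the maverick, with final choice $x_{A_{N_A}}\in\{1,2,\dots,\xi\}$ (the ``no friend'' case is the degenerate subcase in which $F_A$ performs no measurement before $A$'s final choice, and is handled verbatim as in \cref{sec:minimal}). For each $\alpha\in\{2,\dots,\xi\}$, restricting attention to the runs in which every superobserver other than $A$ uses only binary choices and $A$'s final choice lies in $\{1,\alpha\}$ yields a multipartite sequential LF scenario with binary choices, so \cref{lem:LF_binary} applies: AOE plus Local Agency for that subscenario produce the joint distribution $P_\alpha$ of \cref{par:further} whose marginals are exactly the $\chi$-way empirical correlations accessible in that subscenario.

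First I would establish the forward inclusion, that the LF constraints on the full scenario imply (i) the existence of all the $P_\alpha$'s with the stated marginals, and (ii) that any two of them agree on their overlap, i.e.\ $\sum_{a_{N_A}}P_2=\dots=\sum_{a_{N_A-2+\xi}}P_\xi$. Point (i) is immediate from AOE restricted to each subscenario, and the marginal identities follow by the same chain of Local-Agency rewritings used to prove \cref{eq:LAij}. Point (ii) is where Local Agency does the real work: since none of the friends' outcomes other than $a_{N_A-2+\alpha}$ is in the future light-cone of $x_{A_{N_A}}$, Local Agency forces the corresponding marginals of the $P_\alpha$'s to be independent of $\alpha$, which is precisely the overlap-agreement condition (the analogue of \cref{eq:bcd2232}). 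Gluing exactly as in \cref{eq:2232}, iterating the construction of $P_{\rm all}=P_2P_3\cdots P_\xi/\bigl(\sum_{a_{N_A+1}}P_3\cdots\sum_{a_{N_A-2+\xi}}P_\xi\bigr)$, then produces a single joint distribution over the outcomes of \emph{all} measurements in the scenario whose marginals recover every empirical $\chi$-way correlation; the verification is the multi-factor analogue of \cref{eq:321,eq:322,eq:323} and uses overlap-agreement to rewrite each denominator. Since $P_{\rm all}$ is a global distribution reproducing all empirical correlations, these correlations satisfy the KSNC constraints on the compatibility (hyper)graph of the scenario, which, because that graph is the complete compatibility graph of the corresponding multipartite Bell scenario, are exactly its Bell inequalities; hence the LF polytope is contained in the Bell polytope.

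For the converse I would mirror \cref{sec:equi}: show that the existence of distributions $\{P_\alpha\}$ with the two properties (correct marginals; pairwise overlap-agreement) is \emph{equivalent} to the full list of LF constraints, by explicitly reconstructing every AOE distribution the scenario requires as an appropriate marginal of some $P_\alpha$ (choosing, for each tuple of settings, the $P_\alpha$ with $\alpha$ matching $A$'s final choice, or $P_2$ when that choice is $1$), and checking that the Local-Agency identities hold because the relevant marginals coincide across different $P_\alpha$'s. The reverse inclusion of polytopes is inherited directly from \cref{sec:Bell_inside_LF}: a global distribution for the corresponding Bell scenario furnishes, by marginalization, all the $P_\alpha$'s with both properties, hence all LF constraints. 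Combining the two inclusions gives equality of the polytopes.

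The main obstacle I anticipate is this converse/equivalence step, i.e.\ showing that nothing is lost in passing from the full family of AOE distributions and Local-Agency conditions to the compact data $\{P_\alpha\}$ plus overlap-agreement: one must verify that \emph{all} Local-Agency conditions relating distributions with different numbers of revealed outcomes (the analogues of \cref{eq:LA}) are recovered, and that the gluing denominators are nonvanishing wherever needed, which follows from no-disturbance (itself a consequence of Local Agency), so the glued objects are genuine normalized distributions. A secondary, purely bookkeeping obstacle is the notation for arbitrary $\chi$ and arbitrary sequence lengths $N_\Omega$; structurally, however, the argument is just the $\chi$-party sequential lift of the $3,2$-setting proof of \cref{sec:32}, so no new idea beyond \cref{lem:LF_binary} and the gluing trick of \cref{sec:glue} is required.
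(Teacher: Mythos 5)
Your proposal is correct and follows essentially the same route as the paper's proof in Appendix~\ref{app:generalisation_LF_Bell}: decompose the scenario into the $\xi-1$ binary-choice subscenarios indexed by $\alpha$, invoke \cref{lem:LF_binary} to obtain the $P_\alpha$'s, use Local Agency to establish overlap-agreement (since only $a_{N_A-2+\alpha}$ lies in the future light-cone of $x_{A_{N_A}}$), glue them into $P_{\rm all}$ exactly as in \cref{sec:32}, and close the argument with the converse inclusion from \cref{sec:equi,sec:Bell_inside_LF}. Your explicit attention to the nonvanishing of the gluing denominators and to verifying that all Local-Agency identities are recovered in the equivalence step is a point the paper treats only implicitly, but it introduces no new ideas beyond the paper's argument.
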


However, we \emph{cannot} have more than one maverick, i.e., more than one superobserver losing their friends or having high-cardinality final choices. In such a scenario, one will arrive at an analogous situation to the 3,3-setting bipartite LF scenario, where the LF polytope is strictly bigger than the corresponding Bell polytope. 

A special case of Theorem~\ref{thm:maverick} (where one considers multipartite but not sequential LF scenarios) was previously proven in Ref.~\cite{haddara2024local}.

\section{Details for the operations on Alice's side in the LF 5-cycle protocol} 
\label{app:5Alice}

Here we provide the details for the operations on Alice's side in the LF 5-cycle protocol, which is the same as the one in Ref.~\cite{walleghem2023extended}.

By the assumption of Universality of Unitarity (cf. Appendix~\ref{app:compare}), each friend of Alice can be modeled as a quantum system and indeed, as a qubit, since only two orthogonal (coarse-grained) states of the friend are relevant to the argument. If each measurement $A_i$ is performed in a sufficiently isolated environment, Universality of Unitarity further implies that it can be described by a unitary. In particular, if the measurements are performed in a manner that has minimal disturbance on system $S$, then each measurement $A_i$ can be modeled by a CNOT gate $U_{i}$ between the system $S$ and the agent $F_i$, namely,
\begin{align}
\label{eb_uni}
   &U_{i}:={\rm CNOT}_{ SF_i} \\ \nonumber
   =& \ketbra{v_i^{\perp}}_{ S}\otimes\mathbf{1}_{ F_i}+ \ketbra{v_i}_{ S}\otimes ( \ketbra{1}{0} + \ketbra{0}{1})_{ F_i},
\end{align}
where $\ket{0}_{F_i}$ is the coarse-grained state of $F_i$ having observed outcome 0 and $\ket{1}_{F_i}$ is the coarse-grained state of $F_i$ having observed outcome 1. We follow a standard abuse of notation by letting $\ket{0}_{F_i}$ also denote the coarse-grained `ready' state of the observer---that is, the initial state of each observer (prior to the measurement).

From AOE, each friend $F_i$ observes an absolute outcome, denoted $f_i$, during their measurement $A_i$. We label the $\ketbra{v_i}$ outcome by $f_i=1$ and the other outcome by $f_i=0$. 

The superobserver Alice is assumed to have perfect quantum control over the joint system of $F_1$, $F_2$, $F_3$ and $S$. In particular, we assume that she has the (extreme) technological capabilities to implement the inverse of the unitary operations constituting the first three friends' measurements. 

The exact sequence of these quantum operations is shown in the schematic representation of the protocol in Fig.~\ref{fig_setusp5cycle}.
Specifically, $A_1$ is the first measurement on $S$, followed by $A_2$, after which the superobserver undoes the first measurement by applying the unitary $U_{{1}}^{\dagger}$. Then $A_3$ is performed, followed by the undoing $U_{{2}}^{\dagger}$, followed by $A_4$, followed by the undoing $U_{{3}}^{\dagger}$, followed by $A_5$. (So far, the protocol on Alice's side is exactly the same as the protocol in Ref.~\cite{walleghem2023extended}.) 

In addition, the superobserver Alice has a binary choice $x_i$ right before she implements the inverse unitary $U^{\dagger}_i$ for $i=1,2,3$. If $x_i=1$, Alice opens the door to $F_i$'s and $F_{i+1}$'s labs and reveals their respective measurement outcome $f_i$ and $f_{i+1}$ to the world; if $x_i=2$, she simply continues the protocol. When $x_1=x_2=x_3=2$, the $f_4$ and $f_5$ outcomes are revealed to the world at the end of the experiment.

\section{Proof for the empirical correlations in the LF Peres-Mermin protocol}
\label{app:PM}

The initial entangled state defined in \cref{eq:initialgiant} can be constructed by
\begin{align}
\label{eb_construini}
   \ket{\psi_0}_{ST} = \left( \prod_{j=1}^6 {\rm CNOT}_{ ST_j} \right) \ket{\psi} _{S} \bigotimes_{j=1}^{6} \ket{0}_{T_j}
\end{align}
where 
\begin{align}
\label{eb_creatST}
   & {\rm CNOT}_{ ST_j} \coloneqq  \\&(\Pi_{j^{\perp}})_S\otimes\mathbf{1}_{T_j}+ (\Pi_j)_{ S}\otimes ( \ketbra{1}{0} + \ketbra{0}{1})_{T_j}, \nonumber
\end{align}
such that 
\begin{align}
    & {\rm CNOT}_{ ST_j} \ket{\psi} _{S} \ket{0}_{T_j} \\
    =& (\Pi_{j^{\perp}})_S \ket{\psi} _{S}\otimes\ket{0}_{T_j}+ (\Pi_{j})_S\ket{\psi} _{S}\otimes \ket{1}_{T_j} \nonumber \\
    = & \sum_{k_j=0,1} \left(\Pi_{j^{k_j}}  \ket{\psi} \right)_{S}\otimes\ket{k_j}_{T_j},
\end{align}
where, $\Pi_{j^{0}}=\Pi_{j^{\perp}}$ and $\Pi_{j^{1}}=\Pi_{j}$.

We have the following useful mathematical facts: for any state $\ket{\psi}$,
\begin{subequations}
    \begin{align}
        &\bra{+}_{T_j}{\rm CNOT}_{ ST_j}\ket{\psi} _{S}  \ket{0}_{T_j} = \frac{1}{\sqrt{2}} \ket{\psi} _{S}, \label{eb_+cnot}\\
        &\bra{k}_{T_j}{\rm CNOT}_{ ST_j}\ket{\psi} _{S} \ket{0}_{T_j}=\left(\Pi_{j^{k}}\right)_{ S}\ket{\psi} _{S}, \forall k=0,1 \label{eb_1cnot}
    \end{align}
\end{subequations}
which leads to  
    \begin{align}
        &\bra{+}_{T_j}\ket{\psi_0}_{ST} \label{eb_+cnot0} 
        =\frac{1}{\sqrt{2}} \left( \prod_{i\neq j} {\rm CNOT}_{ ST_i} \right) \ket{\psi} _{S} \bigotimes_{i\neq j} \ket{0}_{T_i}, 
\end{align}
and
\begin{align}
        \bra{k}_{T_j}\ket{\psi_0}_{ST}   \label{eb_1cnot0} 
        =  \left( \prod_{i\neq j} {\rm CNOT}_{ ST_i} \right)  \left(\Pi_{j} \right.&\left.\ket{\psi} \right)_{S} \bigotimes_{i\neq j} \ket{0}_{T_i}, \nonumber\\ 
        &\forall k=0,1 
\end{align}
where we used the property that any $\bra{\phi}_{T_j}$ commutes with any ${\rm CNOT}_{ ST_i},\forall i\neq j$.

\medskip

Now, let us prove \cref{eq:empAaalpha,eq:empBbbeta,eq:empCcgamma}. Use \cref{eq:empAaalpha} as an example,  i.e., 
\begin{align}
    \wp(f_A  f_{a} f_{\alpha} = -1  &|x_1=1, x_2=2, y_{\mathbf{j}}=2, b_{\mathbf{j}}=+)=0. \nonumber
\end{align}

Define
\begin{align}
    &\Pi_{-1}\coloneqq \\
    &\Pi_{A^{\perp}}\Pi_{a}\Pi_{\alpha}+ \Pi_{A}\Pi_{a^{\perp}}\Pi_{\alpha} +\Pi_{A}\Pi_{a}\Pi_{\alpha^{\perp}} +\Pi_{A^{\perp}}\Pi_{a^{\perp}}\Pi_{\alpha^{\perp}}. \nonumber
\end{align}
From \cref{sec:recap_PM}, we know that the Born rule predicts that 
\begin{align}
    \Pi_{-1} \ket{\phi}=0, \forall \ket{\phi}. \label{eq:minus1}
\end{align}

The left-hand side of \cref{eq:empAaalpha} is
\begin{align}
    \Tr [ \left( \Pi_{-1}\right) _S \bigotimes_{j=1}^{6} \ketbra{+}_{T_j}  \ketbra{\psi_0}_{ST}].
\end{align}
Using the cyclicity of trace, and that $\ket{+}_{T_j}$ commutes with $(\Pi_{-1})_S$, this equals 
\begin{align}
    \Tr [    \left( \Pi_{-1}\right)_S \bigotimes_{j=1}^{6}  \bra{+}_{T_j}  \ket{\psi_0}_{ST}\bra{\psi_0}_{ST} \ket{+}_{T_j}].
\end{align}
Using \cref{eb_+cnot0}, this further equals
\begin{align}
    \frac{1}{2^6} \Tr [ \left( \Pi_{-1} \right) _S   \ketbra{\psi} _{S} ] = 0,
\end{align}
according to \cref{eq:minus1}. Thus, we arrive at \cref{eq:empAaalpha}. The proof for \cref{eq:empBbbeta,eq:empCcgamma} follows analogously.

\medskip

Next, we will prove \cref{eq:ABC_step1,eq:empabc,eq:empalphabetagamma}. Use \cref{eq:ABC_step1} as an example, i.e., 
\begin{align}
    \wp(b_A  b_B f_C = -1 ,&b_{\mathbf{j}-\{A,B\}}=+ \nonumber \\
    |x_1=x_2=2, &y_A=y_B=1, y_{\mathbf{j}-\{A,B\}}=2)=0.     \nonumber
\end{align} 

Define 
\begin{align}
    k_{-1}=\{\{0,0,0\},\{1,1,0\},\{0,1,1\},\{1,0,1\}\}.
\end{align}
Then the left hand side of \cref{eq:ABC_step1} is
\begin{align}
   \sum_{\{k_C, k_B, k_A\} \in k_{-1}} \mkern-25mu & \mathrm{Tr}   \Big[ \big(\Pi_{C^{k_C}} 
     V_{\beta}^{\dagger}V_{b}^{\dagger}V_{B}^{\dagger} V_{\beta}V_{b}V_{B} V_{\alpha}^{\dagger}V_{a}^{\dagger}V_{A}^{\dagger} \nonumber \\ &V_{\alpha}V_{a}V_{A}
    \big)_S \ketbra{k_A}_{T_A} \ketbra{k_B}_{T_B}  \nonumber \\
    &\bigotimes_{j\neq A,B} \ketbra{+}_{T_j}  \ketbra{\psi_0}_{ST} \Big]  \nonumber \\
    = \sum_{\{k_C, k_B, k_A\} \in k_{-1}} \mkern-25mu  &\mathrm{Tr}  \Big[ \left(\Pi_{C^{k_C}} \right)_S 
    \ketbra{k_B}_{T_B} \ketbra{k_A}_{T_A} \nonumber \\
    & \bigotimes_{j\neq A,B} \ketbra{+}_{T_j}  \ketbra{\psi_0}_{ST} \Big] 
\end{align}
where $V_i$ is the isometry modelling friend $i$'s measurement, which corresponds to the $U_i$ defined in the main text, while $V_i^{\dagger}$ is defined to satisfy $V_i^{\dagger}V_i=\mathbf{1}$. 

Using the cyclicity of trace, and that for any $j$, we have $\ket{+}_{T_j}$ and $\ket{k_j}_{T_j}$ commute with $(\Pi_C)_S$, the above expression equals
\begin{align}
\sum_{\{k_C, k_B, k_A\} \in k_{-1}} \mkern-25mu & \mathrm{Tr}
     \Big[  \left(\Pi_{C^{k_C}} \right)_S 
     \bigotimes_{j\neq A,B}  \bra{k_B}_{T_B} \bra{k_A}_{T_A} \bra{+}_{T_j}  \nonumber \\
     & \ketbra{\psi_0}_{ST} \ket{k_B}_{T_B} \ket{k_A}_{T_A} \ket{+}_{T_j}\Big] 
\end{align}

Using \cref{eb_+cnot0} and \cref{eb_1cnot0}, this further equals
\begin{align}
     \!\!\!\!\!\!\!\!\!\!  \sum_{\{k_C, k_B, k_A\}\in k_{-1}} \!\!\!\!\!\!\! \frac{1}{2^4}  \Tr \big[ \left( \Pi_{C^{k_C}} \Pi_{B^{k_B}} \Pi_{A^{k_A}}\ketbra{\psi}\right) _S \big] = 0, 
\end{align}
according to the Born rule prediction given in \cref{sec:recap_PM}. Thus, we arrive at \cref{eq:empAaalpha}. The proof for \cref{eq:empabc,eq:empalphabetagamma} follows analogously. 

\medskip

Finally, we prove \cref{eq:equal1} and \cref{eq:equal2}. Use \cref{eq:equal1} as an example, i.e., 
\begin{align}
    \text{for } j\in\{A,a,\alpha\}, \, \wp(f_j \neq b_j| x_1=1,y_j=1,x_2, y_{\mathbf{j}-\{j\}})=0. \nonumber
\end{align}

Consider the case where $j=A$ in the above equation. The left-hand side of the equation is then
\begin{align}
    \Tr [ \left( \Pi_{A^k}\right) _S \ketbra{k'}_{T_A}\bigotimes_{j \neq A} \ketbra{+}_{T_j}  \ketbra{\psi_0}_{ST}],
\end{align}
where $k\neq k'$.

Using the cyclicity of trace, and that for any $j$, we have $\ket{+}_{T_j}$ and $\ket{k'}_{T_j}$ commute with $(\Pi_A)_S$, the above expression equals
\begin{align}
    \Tr [    \left( \Pi_{A^k}\right)_S \bra{k'}_{T_A} \bigotimes_{j\neq A}  \bra{+}_{T_j}  \ket{\psi_0}_{ST}\bra{\psi_0}_{ST} \ket{+}_{T_j}\ket{k'}_{T_A}].
\end{align}
Using \cref{eb_+cnot0} and \cref{eb_1cnot0}, this further equals
\begin{align}
    \Tr \big[    \left( \Pi_{A^k}\Pi_{A^{k'}}\ketbra{\psi}\right)_S \big] = 0,
\end{align}
since $k\neq k'$. The proof for \cref{eq:equal1} with $j=a$ or $j={\alpha}$ follows analogously, Similarly, we can proof \cref{eq:equal2}.

\section{Proof of Lemma \ref{lem:LF_binary}}
\label{app:LF_binary}

We restate the lemma here:
\lembinary*

\begin{proof}
As mentioned in Appendix \ref{par:LF_construction},
in the runs of the multipartite sequential LF scenario with
binary choices where all the superobservers have all of their choices being 1 except that their final choice is 2, all the $\sum_{\Omega}N_{\Omega}$ measurements in this scenario are performed by the friends. According to AOE, for those runs, there exists a distribution
\begin{align}
P_{\rm all} \coloneqq P(&a_1,\dots,a_{N_A},b_1,\dots,b_{N_B},\dots \\
&|x_{A_1}{=}\dots {=}x_{A_{N_A}-1}{=}x_{B_1}{=}\dots{=}x_{{N_B}-1}{=}2). \nonumber 
\end{align}

For different choices of the superobservers, one can obtain different $\chi$-way empirical correlations. In particular, a 
$\chi$-way empirical correlations $\wp(a_{k_A},b_{k_B},\dots)$ (where $k_{\Omega}\in \{1,2,\dots,k_{\Omega}\}$ for $\Omega=A,B,\dots$)  is obtained during the runs when for any superobserver $\Omega$, all of their choices before the $k_{\Omega}$-th choice is 2 (i.e., $x_{\Omega_1}=x_{\Omega_2}=\dots=x_{\Omega_{k_{\Omega}-1}}=2$) while their $k_{\Omega}$-th choice is 1 if $k_{\Omega}<N_{\Omega}$ (else, they don't have the $k_{\Omega}$-th choice). Thus, in terms of choices, the only differences between the runs for $\wp(a_{k_A},b_{k_B},\dots)$ and the runs for $P_{\rm all}$ are the ones after the $(k_{\Omega}-1)$-th choice of any superobserver $\Omega$ when $k_{\Omega}<N_{\Omega}$.

The outcome $\omega_{k_{\Omega}}$ (for the $\Omega_{k_{\Omega}}$ measurement) is not in the future light cone of any choice of $\Omega$ occurring after the $(k_\Omega-1)$-th choice; moreover, it is not in the future light cone of any other superobserver's any choice (since they are space-like separated). Therefore, according to Local Agency, we have
\begin{align}
    &\wp(a_{k_A},b_{k_B},\dots)  \\
    =& 
    P(a_{k_A},b_{k_B},\dots \nonumber \\
    &| x_{A_1}=\dots =x_{A_{N_A}-1}=x_{B_1}=\dots=x_{{N_B}-1}=2), \nonumber 
\end{align}
meaning that $\wp(a_{k_A},b_{k_B},\dots)$ is a marginal of $P_{\rm all}$.

Thus, according to Fine's theorem (and the fact that no-signaling is satisfied in this scenario)~\cite{fine1982hidden,abramsky2011sheaf}, the set of empirical correlations that can be obtained in this scenario, i.e.,$\{\wp(a_{k_A},b_{k_B},\dots)\}_{k_{\Omega}\in \{1,2,\dots,N_\Omega\}, \forall \Omega=A,B,\dots}$ must satisfy all the KSNC constraints on the corresponding Bell scenario. Furthermore, following \cref{sec:Bell_inside_LF}, we can show that the LF polytope here must also contain the Bell polytope in the corresponding Bell scenario.  Thus, the LF polytope coincides with the Bell polytope here.
\end{proof}

\section{Proof of \cref{eq:Aaalpha,eq:Bbbeta,eq:Ccgamma,eq:ABC,eq:abc,eq:alphabetagamma} in Section \ref{sec:Peres-Mermin}} \label{app:proof_PeresMermin_eqs}
We begin with \cref{eq:Aaalpha,eq:Bbbeta,eq:Ccgamma}.
Consider the following empirical predictions given by the Born rule for the cases where Bob measures all $T_j$ systems in the $\pm$ basis and obtains outcomes $+$:
\begin{subequations}
\begin{align}
    \wp(f_A  f_{a} f_{\alpha} = -1  &|x_1=1, \label{eq:empAaalpha} \\
    &x_2=2, y_{\mathbf{j}}=2, b_{\mathbf{j}}=+)=0, \nonumber\\
    \wp(f_B  f_b f_{\beta} = -1  &|x_2=1, \label{eq:empBbbeta} \\
    & x_1=2, y_{\mathbf{j}}=2, b_{\mathbf{j}}=+)=0,  \nonumber\\
    \wp(f_C  f_c f_{\gamma} = -1  &|x_1=x_2=2, y_{\mathbf{j}}=2, b_{\mathbf{j}}=+)=0.\label{eq:empCcgamma}
\end{align}
\end{subequations} 
The detailed calculation using the Born rule showing these three empirical correlations are provided in Appendix~\ref{app:PM}.

\cref{eq:empCcgamma} gives \cref{eq:Ccgamma} immediately. Furthermore, since $\{f_A  f_{a} f_{\alpha}\}$ is not in the future light cone of $x_1$, $\{f_B  f_b f_{\beta}\}$ applying Local Agency to \cref{eq:empAaalpha}, we arrive at \cref{eq:Aaalpha}; similarly, since $\{f_B  f_b f_{\beta}\}$ is not in the future light cone of $x_2$, applying Local Agency to \cref{eq:empBbbeta}, we arrive at \cref{eq:Bbbeta}.

\medskip

Now we turn to prove \cref{eq:ABC,eq:ABC,eq:alphabetagamma}.

First, we note that for any $j\in\mathbf{j}$, whenever Alice reveals $f_j$ and Bob measures $T_j$ in the computational basis, we must empirically have $f_j=b_j$ according to the Born rule. That is, 
\begin{subequations}
   \begin{align}
    \text{for } &j\in\{A,a,\alpha\}, \nonumber \\
    &\wp(f_j \neq b_j| x_1=1,y_j=1,x_2, y_{\mathbf{j}-\{j\}})=0,  \label{eq:equal1}\\
    \text{for } &j\in\{B,b,\beta\}, \nonumber \\
    &\wp(f_j \neq b_j| x_2=1,y_j=1,x_1=2, y_{\mathbf{j}-\{j\}})=0,  \label{eq:equal2}
\end{align} 
\end{subequations}
where $\mathbf{j}-\{j\}$ denotes set subtraction of $\{j\}$ from $\mathbf{j}$. The proof is given in Appendix~\ref{app:PM}.

Since $x_1$ is in the future light cone of $f_A,f_a,f_{\alpha}$,  $x_2$ is in the future light cone of $f_B,f_b,f_{\beta}$, both $x_1$ and $x_2$ are space-like separated from $b_j$, according to Local Agency, we further have
\begin{align}
\label{eb_encode}
    P(f_j \neq b_j| x_1=x_2=2, y_j=1,y_{\mathbf{j}-\{j\}})=0.
\end{align}

Next, consider the runs of the experiment where the superobserver Alice has $x_1=x_2=2$, while Bob measures $T_A$ and $T_B$ in the computational basis and measures the rest in the $\pm$ basis, obtaining outcome $+$, i.e., $y_A=y_B=1$ while $y_{\mathbf{j}-\{A,B\}}=2$ and $b_{\mathbf{j}-\{A,B\}}=+$.

As shown in Appendix~\ref{app:PM}, the Born rule gives the empirical prediction
\begin{align}
        \wp(b_A  b_B f_C = -1 ,&b_{\mathbf{j}-\{A,B\}}=+ |x_1=x_2=2, \nonumber \\ &y_A=y_B=1, y_{\mathbf{j}-\{A,B\}}=2)=0.     \label{eq:ABC_step1}
\end{align} 
Together with \cref{eb_encode}, we can replace $b_A,b_B$ in the above equation with $f_A,f_B$ and have
\begin{align}
        \!\! P(f_A  f_B f_C = -1, &b_{\mathbf{j}-\{A,B\}}=+ |x_1=x_2=2, \nonumber \\ &y_A=y_B=1, y_{\mathbf{j}-\{A,B\}}=2)=0.     \label{eq:ABC_step2}
\end{align} 

Since none of $f_A,f_B,f_C$ and $b_{I-A-B}$ is in the future light cone of $y_A,y_B$, according to Local Agency, \cref{eq:ABC_step2} gives
\begin{equation} \begin{split}
    \label{eq:ABC_step3}
      \!\!  P(f_A  f_B f_C = -1, b_{\mathbf{j}-\{A,B\}}=+  |x_1=x_2=2, y_{\mathbf{j}}=2)=0,
\end{split} 
\end{equation}
which implies 
\begin{equation} \begin{split}
    \label{eq:ABC_step4}
       P(f_A  f_B f_C = -1, b_{\mathbf{j}}=+  |x_1=x_2=2, y_{\mathbf{j}}=2)=0.
\end{split} 
\end{equation}
Together with \cref{eq:x1y1qplus}, we then arrive at \cref{eq:ABC}.

Similarly,  by starting from the following empirical predictions given by the Born rule prediction (shown in Appendix~\ref{app:PM}).
\begin{subequations}
\begin{align}
        \wp(b_a  b_b f_c = -1 ,&b_{\mathbf{j}-\{a,b\}}=+ |x_1=x_2=2, \nonumber \\ 
        &y_a=y_b=1, y_{\mathbf{j}-\{a,b\}}=2)=0, \label{eq:empabc}\\
        \wp(b_{\alpha}  b_{\beta} f_{\gamma} = 1 ,&b_{\mathbf{j}-\{\alpha,\beta\}}=+ |x_1=x_2=2,\nonumber\\ &y_{\alpha}=y_{\beta}=1, y_{\mathbf{j}-\{\alpha,\beta\}}=2)=0, \label{eq:empalphabetagamma}
\end{align}
\end{subequations}
together with \cref{eb_encode}, \cref{eq:x1y1qplus} and Local Agency, we arrive at \cref{eq:abc,eq:alphabetagamma}, and thus complete the proof.

\section{Proof of Theorem \ref{th:LFpossKSNC}} \label{app:proof_LFpossKSNC}

We restate the theorem here:
\LFpossKSNC*

We prove the theorem by explicitly constructing an LF scenario for any given quantum possibilistic KS contextual model and then use that scenario to prove the LF no-go theorem.

\vspace{0.3cm}
\begin{center}
\textbf{The LF protocol from a possibilistic KS contextual model} \label{app:poss_KSC_to_LF}
\end{center}
\medskip

Consider any given quantum possibilistic KS contextual model with $n$ measurements $M_1,\ldots,M_n$ on system $S$ (potentially in a specific state $\ket{\psi}_{\!S}$) and $K$ maximal contexts $C_1,\ldots,C_K$.\footnote{The system $S$ would be the joint system of all systems that may be measured by some measurements in the model. For example, in Hardy's proof of KSNC~\cite{hardy1993nonlocality}, we take the system $S$ to be a 4-dimensional qudit, representing a bipartite qubit state.} 
The empirical correlations for this KS model are $\{\wp(m_{C_k}|C_k)\}_{k\in\{1,2,\dots,K\}}$ where $m_{C_k}$ is an outcome vector of all measurements in context $C_k$.  Being a possibilistic KS contextual model, we can assume that they satisfy
\begin{align}
        \wp(m_{C_1} \in \alpha_{C_1} | C_1) &> 0, \label{eq:app_KSC_postselection} \\
        \wp(m_{C_2} \in \alpha_{C_2} | C_2) &= 0, \label{eq:app_KSC_impossible_C2} \\
        \wp(m_{C_3} \in \alpha_{C_3} | C_3) &= 0, \label{eq:app_KSC_impossible_C3} \\
        &\vdots \\
        \wp(m_{C_K} \in \alpha_{C_K} | C_K) &= 0 \label{eq:app_KSC_impossible_CN},
    \end{align} for some outcome vectors \emph{sets} $\alpha_{C_k}$ for each context $C_k$. For example, in the Peres-Mermin model, $\alpha_{A,B,C}$ could be the set of outcome vectors where the product of the outcomes of each vector is -1.

The LF protocol that we will introduce is analogous to the LF 5-cycle protocol in \cref{sec:5cycleprotocol} and the LF Peres-Mermin protocol in \cref{sec:PeresMerminprotocol}. For simplicity, we assume that all measurements $M_1,\dots,M_n$ in the KS contextual model have binary outcomes 0 and 1. The protocol can be generalized to have measurements with higher cardinality outcomes by replacing the qubits for Bobs in the protocol below to higher-dimensional qudits.

In the LF protocol for the possibilistic KS contextual model, we have $n$ Bobs, denoted $B_1,B_2,\dots,B_n$, respectively,\footnote{Some Bobs are in fact redundant, but for the simplicity in describing the protocol, we keep all of them.} and one Alice, who has $n$ friends. They share a joint system $ST$ where $S$ is the system in the KS model and $T$ includes $n$ qubits (labeled $T_1,T_2,\dots,T_n$, respectively). Alice and her friends have $S$, while each Bob has one of the qubits in $T$.

We let friends $F_1,\ldots,F_n$ perform measurements $M_1,\ldots,M_n$ sequentially on $S$ at certain points in the protocol, obtaining outcomes $f_1,\dots,f_n$, respectively. We denote the set of friends performing the measurements of context $C_k$ by $F_{C_k}$, who obtain the outcome vector $f_{C_k}$ from their measurements. We assume that each measurement is done in a minimally disturbing way in an isolated lab. Under the assumption of the Universality of Unitary Dynamics, these measurements can be modeled as unitaries $U_1,U_2,\dots,U_n$, respectively.

The superobserver Alice undoes some of the friends' measurements at certain points. She also has binary choices at various points to decide whether or not to reveal the friends' outcomes; they are denoted $x_1,\ldots,x_n$ for deciding whether to reveal the respective outcomes $f_1,\dots,f_n$. 
The vector of Alice's choices to be set to 1 to reveal outcomes $f_{C_k}$ on Alice's side is denoted by $x_{C_k}$, and the vector $x_{\rm rest}$ denotes all other choices on Alice's side. The set of all choices (o.e., the choice vector) on Alice's side is denoted by $x_\text{All}$.

Each Bob $B_1,\ldots,B_n$ has a binary measurement choice. Their choices are labeled as $y_1,y_2,\dots,y_n$ and their outcomes are labeled as $b_1,b_2,\dots,b_n$, respectively. For each Bob, when his choice is 1, he measures his qubit in the computational basis; when his choice is 2, he measures his qubit in the $\pm$ basis. The set of all choices (i.e., the choice vector) on Bob's side is denoted by $y_\text{All}$. Similarly, $b_\text{All}$ denotes the outcome vector of all measurements on the Bobs side. 

The detailed steps of the protocol are as follows.

\vspace{0.5cm}

\textit{1. Initialization.} Prepare the joint system $ST$ in the following state\footnote{Alternatively, this entangled state can be constructed in a measure-prepare fashion as in Ref.~\cite{frauchiger2018quantum} around which the thought experiment could alternatively be constructed. }
\begin{align}
\label{eb_construinige}
   \ket{\psi_0}_{ST} = \left( \prod_{j=1}^n {\rm CNOT}_{ ST_j} \right) \ket{\psi} _{S} \bigotimes_{j=1}^{n} \ket{0}_{T_j}
\end{align}
with 
\begin{align}
   & {\rm CNOT}_{ ST_j} \coloneqq (\Pi_{j^{\perp}})_S\otimes\mathbf{1}_{T_j}+ (\Pi_j)_{ S}\otimes ( \ketbra{1}{0} + \ketbra{0}{1})_{T_j}, \nonumber
\end{align}
where $\Pi_{j^{\perp}}$ is the projector corresponding to the outcome of measurement $M_j$ being 0 and $\Pi_j$ is the one for outcome 1.

When the KS contextual model is state-independent, $\ket{\psi}$ can be any arbitrary state; else $\ket{\psi}$ is the initial state of the quantum realization of the KS contextual model we use. 

The system $S$ is sent to Alice and her friends while $T_1,T_2,\dots,T_n$ are sent to the Bobs $B_1,B_2,\dots,B_n$ respectively. 

\textit{2. Context $C_1$.}
We assume without loss of generality that $C_1,C_2,\ldots,C_K$ are ordered such that $C_k$ and $C_{k+1}$ have maximal number of overlap measurements $C_k \cap C_{k+1}$ for any $k\in \{1,2,\dots,K\}$. (By ordering the contexts in this way, the LF protocol we construct here is more analogous to the LF 5-cycle protocol than the LF Peres-Mermin protocol. If we order the contexts in the opposite way such that $C_k$ and $C_{k+1}$ have the minimal number of overlap measurements, then the protocol becomes more analogous to the LF Peres-Mermin protocol.) 

The friends in $F_{C_1}$ perform the measurements on $S$ in $C_1$, obtaining outcome $f_{C_1}$. 

\textit{3. Context $C_2$.} 
Next, we consider the maximal context $C_2$. If a measurement in $C_2$ equals a measurement in $C_1$, this measurement is not performed anymore. If not, it will be performed. 

Before performing those measurements, the superobserver Alice first makes the choices $x_{C_1-(C_1\cap C_2)}$. That is, for each measurement that is in $C_1$ but not in $C_2$ (which is any measurement in $C_1$ not compatible with at least one measurement in $C_2$ since $C_1$ and $C_2$ are maximum contexts), she makes a binary choice. If the choice is 1, she reveals the outcome while otherwise, she does not reveal it. 

After Alice has made all the choices, she reverses all the measurements in $C_1-(C_1\cap C_2)$  by applying the respective inverse unitaries. 

Then, those friends in $F_{C_2}$ but not in $F_{C_{1}}$ perform the measurements in $C_2-(C_1\cap C_2)$, i.e., the ones in $C_1$ that have not been performed earlier.

\textit{4. Contexts $C_3,\ldots,C_K$.} 
Next, context $C_3$ is considered. The procedure for the subsequent contexts $C_4,\ldots,C_K$ is analogous, so assume now we have just followed the procedure for context $C_{k-1}$.

Before proceeding to context $C_k$ for $k\in\{3,4,\dots,K\}$, for each measurement that has been performed so far but has not been reversed yet, if it is not in $C_k$, Alice will make a binary choice for it. We denote the set of all these measurements as $C'_{k-1}$ since it must be a subset of $C_{k-1}$.
For each measurement in $C'_{k-1}$, if her choice is 0, she reveals the outcome of that measurement; otherwise, she does not reveal it. 
Then, for any measurements in $C'_{k-1}$, she reverses it by applying the inverse of the respective unitary.\footnote{Note that if Alice chooses to reveal a measurement $i$ and then applies $U_i^\dagger$, she will actually not reverse the measurement, i.e. not reverse the states of the involved systems and labs back to their initial state.}

Afterward, for any measurement in $C_k$ that has not been performed already, the respective friend will perform that measurement and obtain the corresponding outcome.

Such a procedure continues until all measurements in all contexts are performed. Then, Alice reveals all outcomes of the measurements that have not been reversed.

\textit{5. Bob's measurements} 
As mentioned earlier, each Bob has a binary measurement choice to measure their qubit in either the computational basis or $\pm$ basis. 

Each Bob's operation is spacelike separated from all operations of Alice and her friends, and is space-like separated from any other Bob's operation.

\vspace{0.3cm}
\begin{center}
\textbf{Proving the no-go theorem}
\end{center}

\medskip

We now prove that using the LF scenario defined above, we can translate each quantum possibilistic proof of the KSNC theorem to a proof of LF no-go theorem. The essential idea of the proof is that the correlations for all contexts $\{C_1,C_2,\dots,C_K\}$ will be encoded in a single distribution that is demanded to exist by AOE in the runs where all choices of Alice's are 1 (denoted as $x_{\rm All}=1$) so that all measurements in the KS contextual models are performed and all choices of the Bobs are 1 (denoted as $y_{\rm All}=1$) with all outcomes being + (denoted as $b_{\rm All}=1$). We use a similar notation below (which has also been used in \cref{sec:proof_LFPM}) 
 where ``a vector = a number'' is a shorthand notation for each component in the vector being equal to that number.

\vspace{0.5cm}

First, note that by AOE, the following joint probability distribution exists in the runs where all Alice's choices are 2 and all Bob's choices are also 2:
\begin{align}
    P(f_{\rm All}, b_{\rm All} |x_\text{All}=2,y_\text{All}=2).
\end{align}
Furthermore, since 
\begin{align}
    \wp(b_{\rm All}=+ |x_\text{All}=2,y_\text{All}=2)\neq 0,
\end{align}
the following distribution is also well-defined.
\begin{align}
    P(f_{\rm All}|x_\text{All}=2,y_\text{All}=2,b_\text{All}{=}+)
\end{align}

    Now, let us consider each context in the KS contextual model.
    
    Starting with $C_1$, we know that all measurement outcomes in $C_1$ can be revealed by Alice when $x_{C_1}=0$. By \cref{eb_+cnot0} and \cref{eq:app_KSC_postselection}, we have     \begin{equation}
        \wp(f_{C_1} \in \alpha_{C_1} | x_{C_1}{=}1,x_{\rm rest}{=}2, y_{\text{All}}=2,b_\text{All}{=}+)>  0,
    \end{equation}
    where $x_{\rm rest}$ refers to all choices of Alice's that have not been specified to be 0.   Since none of the measurements in $C_1$ performed by agents in $F_{C_1}$ takes place in the future light cone of any choices in $x_{C_1}$, we obtain by Local Agency that
    \begin{align}
        &P(f_{C_1}\in \alpha_{C_1}| x_{{\text{All}}}{=}2, y_\text{All}=2,b_\text{All}{=}+) \nonumber \\
        =&\wp(f_{C_1} \in \alpha_{C_1} | x_{C_1}{=}1,x_{\rm rest}{=}2, y_{\text{All}}=2,b_\text{All}{=}+)  >  0. \label{eq:app_LF_postselection}
    \end{align}
    
    Next, we turn to $C_2$. 
    If a measurement in $C_2$ is already performed by an agent in $F_{C_1}$, then it is in $C_1 \cap C_2$, and is not reversed until all measurements in $C_2$ have been performed.
    Else, it is performed by its respective agent in $F_{C_2}$, and is also available (i.e. not reversed yet) when all of $C_2$ have been performed. 
    Therefore, in both cases, the choice to reveal is made by Alice \emph{after} all of $C_2$ have been performed.
    Thus, none of the measurements in $C_2$ performed by agents in $F_{C_2}$ takes place in the future light cone of any choices in $x_{C_2}$. As such, from  \cref{eb_+cnot0} the empirical KS correlation, \cref{eq:app_KSC_impossible_C2} is obtained and by Local Agency, we obtain
    \begin{align}
        &P(f_{C_2}\in \alpha_{C_2}| x_{{\text{All}}}{=}2,y_\text{All}=2,b_\text{All} {=} +) \nonumber \\
        =& \wp(f_{C_2} \in \alpha_{C_2} | x_{C_2}{=}1,x_{\rm rest}{=}2,y_{\text{All}}=2)
        =   0. \label{eq:app_LF_C2.2}
        \end{align}
    
    Next, we consider context $C_3$. 
    
    If all measurements in $C_3$ are performed by friends in either $F_{C_2}$ or $F_{C_3}$, i.e., if $ (C_3 \cap C_1)\subseteq C_2$, we can use the same arguments as the ones used for context $C_2$ and derive
    \begin{align}
    \label{eq:c3_1}
        P(f_{C_3}\in \alpha_3| x_{{\text{All}}}{=}2,y_\text{All}=2,b_\text{All}{=}+)=0,
    \end{align}
    by \cref{eq:app_KSC_impossible_C3} and Local Agency because no outcome in $f_{C_3}$ is in the future light cone of any choice in $x_{C_3}$. 

    However, if instead there exists a nonempty set of measurements $\mathbf{M'}$ in $C_3$ such that $\mathbf{M'}\subseteq C_1$ but $\mathbf{M'} \cap C_2=\emptyset$, i.e., each measurement in $\mathbf{M'}$ is also in $C_1$ but not in $C_2$, then these measurements in $\mathbf{M'}$ have been undone before the rest of measurements in $C_3$ are performed. Nevertheless, we can still prove \cref{eq:c3_1}.
    In particular, consider the empirical correlation in the runs where Alice reveals the outcomes of measurements in $C_3-\mathbf{M'}$  and does not reveal any outcomes for the measurements in $\mathbf{M'}$, while any Bob $B_i$ whose index $i$ corresponds to the index of measurement in $\mathbf{M'}$ chooses to measure his qubit in the computational basis (we denote the set of choices of these Bobs as $y_{\mathbf{M'}}$ and the set of their outcomes as $b_{\mathbf{M'}}$). By \eqref{eb_1cnot}, the outcome vector $(f_{C_3-\mathbf{M'}},b_{\mathbf{M'}})$ follows the empirical KS correlation \eqref{eq:app_KSC_impossible_C3}:
    \begin{align}
        \wp(&(f_{C_3-\mathbf{M'}},b_{\mathbf{M'}})\in \alpha_3  \\
        |& x_{C_3-{\mathbf{M'}}}{=}1, y_{\mathbf{M'}}=1,x_{\text{rest}}{=}2, y_\text{rest}=2,b_{\text{rest}}{=}+)=0. \nonumber
    \end{align} 
    None of $f_{C_3}$ and $b_{\mathbf{M'}}$ is in the future light cone of any choice in $x_{C_3-\mathbf{M'}}$, so by Local Agency 
    \begin{align}
        P\big(&(f_{C_3-\mathbf{M'}},b_{\mathbf{M'}})\in \alpha_3 \\ & | x_\text{All}{=}2,y_{\mathbf{M'}}{=}1,y_\text{rest}{=}2,b_{\rm rest}{=}+\big)  \nonumber\\                
        =\wp(&(f_{C_3-\mathbf{M'}},b_{\mathbf{M'}})\in \alpha_3 \nonumber \\
        &| x_{C_3-{\mathbf{M'}}}{=}1, y_{\mathbf{M'}}=1,x_{\text{rest}}{=}2, y_\text{rest}=2,b_{\text{rest}}{=}+)
\nonumber \\ =0 \nonumber
    \end{align}\label{eq:app_proof_C3_1}
    
    Following the proof earlier for \cref{eq:equal1}, we know that for any $i$,
    \begin{equation}
        \wp(f_i \neq b_i | x_i {=} 1 ,y_i=1) = 0, 
    \end{equation} 
    and by Local Agency, we then have 
    \begin{equation}
        P(f_i \neq b_i |x_i{=}2,y_i{=}1) = 0.
    \end{equation} 
    Using this in \cref{eq:app_proof_C3_1} we obtain \begin{equation}
        \label{eq:app_proof_C3_2}
         P(f_{C_3} \in \alpha_{C_3} | x_\text{All}{=}2,y_{\mathbf{M'}}{=}1,y_{\rm rest}{=}2,b_{\rm rest}{=}+) =0, 
    \end{equation} for which using Local Agency again we obtain \begin{equation}
        \label{eq:app_proof_C3_3}
            P(f_{C_3} \in \alpha_{C_3} | x_\text{All}{=}2,y_\text{All}{=}2,b_{\rm rest}{=}+) =0, 
    \end{equation} and thus also (as $b_{\mathbf{M'}}=+$ has nonzero (empirical) probability for the choices on the right hand side) \begin{equation}
        \label{eq:app_proof_C3_4} 
           P(f_{C_3} \in \alpha_{C_3} | x_\text{All}{=}2,y_\text{All}{=}2,b_\text{All}{=}+) =0
    \end{equation} as desired.
    
    Analogous reasoning yields \begin{equation}
        P(f_{C_k} \in \alpha_k | x_\text{All}=2,y_\text{All}=2,b_\text{All}{=}+) = 0
    \end{equation} for any $k\in\{2,3,\ldots,K\}$. Thus, we obtain indeed that \begin{equation}
    \begin{split}
         P(f_{C_1}\in \alpha_{C_1} | x_\text{All}=2,y_\text{All}=2,b_\text{All}{=}+) &> 0, \\
        P(f_{C_2}\in\alpha_{C_2} |x_\text{All}=2,y_\text{All}=2,b_\text{All}{=}+) &= 0, \\
        P(f_{C_3}\in\alpha_{C_3} |x_\text{All}=2,y_\text{All}=2,b_\text{All}{=}+) &= 0, \\
        &\vdots \\
        P(f_{C_K}\in\alpha_{C_K} |x_\text{All}=2,y_\text{All}=2,b_\text{All}{=}+) &= 0,
    \end{split}
    \end{equation} which cannot occur as these correlations constitute a possibilistic contextual model for which no global distribution $P(f_{\text{All}}|x_\text{All}=2,y_\text{All}=2,b_\text{All}{=}+)$ can have all of them as its marginals.

\section{No-go theorems with Commutation Irrelevance} \label{app:CI}

In \cref{sec:5-cycle,sec:Peres-Mermin}, we have turned the $5$-cycle proof and the Peres--Mermin proof of KSNC theorem to proofs of LF theorem with new LF scenarios. 
Compared to the measurements present in the KS contextuality scenarios, our corresponding LF scenarios have additional measurements (the ones by Bobs), and additional settings (Alice's choices and Bob's choices), complicating the scenario somewhat. 
If we did not have these additional measurements and settings, and only the measurements of the contextuality scenarios, one cannot construct a proof of LF theorem, but can one obtain an extended Wigner's friend no-go theorem with different assumptions?
The answer is affirmative. Specifically, instead of Local Agency, we can prove the no-go theorem with the assumption of \textit{Commutation Irrelevance} introduced in Ref.~\cite{walleghem2023extended}. This is an assumption arguably as well-motivated as the analogous assumption used in the Frauchiger-Renner no-go theorem~\cite{walleghem2023extended,schmid2023review}. 

\begin{assump}[Commutation Irrelevance] \label{assumption:Commutation_Irrelevance}
Let $M_1,M_2$ denote compatible measurements with outcomes $m_1,m_2$, where the measurements are modeled unitarily as $U_1,U_2$. If the commutation relation $U_{2}UU_{1}=U U_{2}U_{1}$ is satisfied, then Commutation Irrelevance requires that $P_{U_{2}UU_{1}}(m_1,m_2)=P_{UU_2 U_1}(m_1,m_2)$, where the former denotes the joint distribution when $U$ is performed between $M_1$ and $M_2$, and the latter denotes the joint distribution when $U$ is performed after $M_1$ and $M_2$. Similarly, for any set of measurements that are performed sequentially with unitary operations in between, the correlation over the measurement outcomes is the same as if there were no such unitaries, provided that the operations commute with the measurements in such a way that they can be moved to the future of all the measurements. For example, for three measurements $M_1,M_2,M_3$, represented by three respective unitaries $U_1,U_2,U_3$, and two operations $U$ and $U'$, if $U_1UU_2U'U_3=UU'U_1U_2U_3$, then
$P_{U_1UU_2U'U_3}(m_1,m_2,m_3)=P_{UU'U_1U_2U_3}(m_1,m_2,m_3)$.
\end{assump}

This assumption is motivated by the fact that when $P_{U_{2}UU_{1}}(m_1,m_2)$ and $P_{UU_2 U_1}(m_1,m_2)$ are both empirical correlations, operational quantum theory do predict that they are the same.

\begin{theorem}[A no-go theorem with Commutation Irrelevance ]
If a superobserver can perform arbitrary quantum operations on an observer's laboratory, quantum theory with Universality of Unitarity predicts that the assumption of Absoluteness of Observed Events and the assumption of Commutation Irrelevance cannot be simultaneously true.
\end{theorem}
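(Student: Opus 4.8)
The plan is to mimic the LF constructions of \cref{sec:5-cycle,sec:Peres-Mermin,app:proof_LFpossKSNC} but to strip away all the machinery—the Bobs, Alice's choices, and the post-selection—that was only needed to invoke Local Agency, replacing its role with Commutation Irrelevance. Concretely, given any quantum possibilistic KS contextual model with measurements $M_1,\dots,M_n$ on a system $S$ prepared in $\ket{\psi}_S$, with maximal contexts $C_1,\dots,C_K$ satisfying the possibility/impossibility relations in \cref{eq:app_KSC_postselection,eq:app_KSC_impossible_C2,eq:app_KSC_impossible_C3,eq:app_KSC_impossible_CN}, I would set up a single-superobserver scenario: Alice has $n$ friends $F_1,\dots,F_n$, each modeled (by Universality of Unitarity) as a qubit via a CNOT-type unitary $U_i$ implementing $M_i$ as in \cref{eb_uni}. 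The friends perform the measurements $M_1,\dots,M_n$ in the context-by-context order fixed in step 2 of \cref{app:proof_LFpossKSNC} (ordering the $C_k$ so consecutive contexts overlap maximally), with Alice applying the inverse unitary $U_i^\dagger$ to reset a friend's lab whenever a measurement $M_i$ appears in an earlier context but not the current one. Unlike in \cref{app:proof_LFpossKSNC}, here Alice always reveals every outcome at the end and makes no choices at all.

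Next I would invoke AOE to assert the existence of a single joint distribution $P(f_1,\dots,f_n)$ over all the friends' outcomes in a run of this fixed protocol. The heart of the argument is then to show that, under Commutation Irrelevance, the marginal of this distribution on the friends $F_{C_k}$ of each context $C_k$ equals the empirical Born-rule correlation $\wp(m_{C_k}\,|\,C_k)$ of the original KS model. The key observation is that within a single context the unitaries $U_i$ for $i \in C_k$ commute (since the $M_i$ are compatible and modeled minimally disturbingly), and all the reversal unitaries $U_j^\dagger$ that Alice inserts between the measurements of $C_k$ and a later context commute with the measurements of $C_k$ in exactly the way required by \cref{assumption:Commutation_Irrelevance}: they can be pushed to the future of the $F_{C_k}$ measurements. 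Thus Commutation Irrelevance gives $P_{\text{protocol}}(f_{C_k}) = P_{\text{no-reversals}}(f_{C_k})$, and the latter is just $\wp(m_{C_k}|C_k)$ computed from $\ket{\psi}_S$ by the Born rule—one would verify this with the same CNOT-commutation facts used in \cref{app:PM} (e.g.\ \cref{eb_+cnot0,eb_1cnot0}), now without the $\bra{+}_{T_j}$ post-selection since the Bobs are gone. Combining these, $P(f_1,\dots,f_n)$ reproduces all $K$ empirical context-correlations; but \cref{eq:app_KSC_postselection,eq:app_KSC_impossible_C2,eq:app_KSC_impossible_C3,eq:app_KSC_impossible_CN} say these correlations form a possibilistic KS-contextual model, so no such joint distribution can exist—contradiction. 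Hence AOE and Commutation Irrelevance cannot both hold, given the superobserver's ability to apply arbitrary unitaries and Universality of Unitarity.

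I expect the main obstacle to be the careful bookkeeping needed to check that the relevant commutation relation $U_{i_1}U_{i_2}\cdots U\cdots = U\cdots U_{i_1}U_{i_2}\cdots$ actually holds in the precise operator-ordering that \cref{assumption:Commutation_Irrelevance} demands, for every context simultaneously and with the specific interleaving of measurement unitaries and reversal unitaries produced by the ordered protocol. One has to confirm that every reversal $U_j^\dagger$ inserted after $F_{C_k}$'s measurements genuinely commutes past all of $F_{C_k}$'s measurement unitaries (because $M_j$ is either identical to some $M_i$ in $C_k$, or—if it is not in $C_k$—still acts on $S$ in a way that commutes with the $C_k$ projectors only when $M_j$ is jointly measurable with all of $C_k$, which fails in general). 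Handling the case where a measurement of $C_k$ was undone before a later context is reached is the delicate point; in \cref{app:proof_LFpossKSNC} this was patched using extra Bobs, and here I would instead need to argue that the relevant outcomes were all revealed (hence recorded) before any problematic reversal, or else restrict the protocol ordering further so that each context's outcomes are extracted before incompatible operations act. Making this tight—ideally by exhibiting, for each $C_k$, an explicit operator identity of the form required by Commutation Irrelevance—is where the real work lies; the rest is a direct transcription of the existing constructions.
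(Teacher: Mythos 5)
Your construction and argument are essentially the paper's own proof (Theorem~\ref{th:CF_KSNC_equiv} in Appendix~\ref{app:CI}): same Bob-free, choice-free protocol, same use of AOE to posit $P(f_1,\dots,f_n)$, same use of Commutation Irrelevance to force each context marginal to match the Born-rule correlation, same contradiction with the possibilistic KS relations. The one point you flag as unresolved---that a reversal $U_j^\dagger$ with $M_j\notin C_k$ need not commute with the $C_k$ unitaries---is handled in the paper not by revealing outcomes early or further restricting the ordering, but by noting that every such incompatible measurement is both performed and reversed strictly between the $C_k$ measurements, so those pairs cancel identically via $U_j^\dagger U_j=\mathbf{1}$; the net intervening operation is only $\prod_{j\in\mathbf{M'}}U_j^\dagger$ for $\mathbf{M'}\subseteq C_k$, and these do commute with all of $C_k$'s unitaries, which is exactly the form of operator identity Commutation Irrelevance requires.
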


In Ref.~\cite{walleghem2023extended}, we have already shown how to construct a proof of the no-go theorem with Commutation Irrelevance based on the 5-cycle scenario. In the following, we generalize the constructions there such that one can translate any possibilistic KS contextual model into a proof of the no-go theorem, which can be straightforwardly further generalized to any probabilistic KS contextual model as well. 

\begin{theorem} \label{th:CF_KSNC_equiv}
Every possibilistic proof of the failure of Kochen-Specker noncontextuality can be mathematically translated into a proof of the failure of the conjunction of Absoluteness of Observed Events and Commutation Irrelevance in a corresponding Extended Wigner's friend scenario.  
\end{theorem}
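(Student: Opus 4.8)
The plan is to run the construction of Appendix~\ref{app:proof_LFpossKSNC} almost verbatim, but with the entire ``Bob'' apparatus removed---no auxiliary qubits $T_j$, no computational/$\pm$ basis measurements, and no setting choices for Alice---and with every appeal to Local Agency replaced by an appeal to Commutation Irrelevance (Assumption~\ref{assumption:Commutation_Irrelevance}). Concretely, starting from a quantum possibilistic KS contextual model with measurements $M_1,\dots,M_n$ on a system $S$ in state $\ket\psi$ ($\ket\psi$ arbitrary in the state-independent case, the realization's state otherwise) and maximal contexts $C_1,\dots,C_K$ satisfying relations of the form \eqref{eq:app_KSC_postselection}--\eqref{eq:app_KSC_impossible_CN}, I would introduce a single superobserver Alice with $n$ friends $F_1,\dots,F_n$ sharing $S$; under Universality of Unitarity each $M_i$ is modeled by a unitary $U_i$ (e.g.\ a CNOT writing the outcome onto $F_i$'s register). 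Ordering the contexts so that consecutive ones overlap maximally, the friends carry out $M_1,\dots,M_n$ in the same context-by-context schedule as in Appendix~\ref{app:proof_LFpossKSNC}: before passing from $C_k$ to $C_{k+1}$, Alice applies the inverse unitaries $U_j^\dagger$ that undo whichever live measurements are absent from $C_{k+1}$, and at the end she simply reveals whatever outcomes survive. This is exactly the structure of the $5$-cycle protocol of Ref.~\cite{walleghem2023extended} and of the LF protocols of Sections~\ref{sec:5cycleprotocol}--\ref{sec:PeresMerminprotocol}, minus the Bobs and choices.

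The argument then has three moves. First, by AOE there is a single joint distribution $P(f_1,\dots,f_n)$ over all friends' outcomes in this one fixed run. Second, I would show that for each context $C_k$ the marginal $P(\{f_i\}_{i\in C_k})$ coincides with the empirical correlation $\wp(m_{C_k}\mid C_k)$ that the Born rule predicts for a joint measurement of $C_k$ on $\ket\psi$. Granting this, the possibilistic relations of the model become
\begin{align*}
P\big(\{f_i\}_{i\in C_1}\in\alpha_{C_1}\big)&>0,\\
P\big(\{f_i\}_{i\in C_k}\in\alpha_{C_k}\big)&=0\qquad(k=2,\dots,K),
\end{align*}
and, third, since these are all marginals of the single distribution $P(f_1,\dots,f_n)$ while the KS model is possibilistically contextual, no such distribution can exist---contradicting AOE. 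Replacing ``possibilistic'' by ``probabilistic'' throughout (so that the marginals already fail to be jointly realizable at the level of probabilities) yields the probabilistic strengthening mentioned before the statement.

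The content is therefore in the second move, and this is where I expect the main obstacle to lie. Fix $C_k$ and list the positions $p_1<\dots<p_\ell$ at which its measurements occur in the run. Between consecutive such positions the circuit contains only (i) measurement unitaries $U_j$ with $j\notin C_k$ and (ii) inverse unitaries $U_j^\dagger$; because all measurements inside a maximal context commute, and because of the maximal-overlap ordering and the prescribed reversal schedule, the product of unitaries in each gap can be rewritten \emph{as an operator} into a product of factors each of which commutes with every $C_k$-measurement occurring later in the run (matched $U_j,U_j^\dagger$ pairs cancelling, surviving factors involving only measurements compatible with the remaining elements of $C_k$). At that point Commutation Irrelevance applies gap by gap and lets us push all this machinery past the $C_k$-measurements, leaving an effective circuit in which the $U_i$, $i\in C_k$, act consecutively on $\ket\psi$, so their joint outcome distribution is the Born-rule value $\wp(m_{C_k}\mid C_k)$. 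The delicate points---verified for the $5$-cycle in Ref.~\cite{walleghem2023extended} and to be checked here in general---are precisely that the schedule can always be arranged so that every gap unitary is one that Commutation Irrelevance is licensed to discard, and that, in scenarios where some $M_i$ must be measured, reversed, and later re-measured (the general analog of what the Bobs handled in Appendix~\ref{app:proof_LFpossKSNC}), the outcome it feeds into the AOE joint distribution is unambiguous---which again follows from compatibility together with Commutation Irrelevance. The remaining steps, the translation of the possibilistic relations and the closing marginal-problem argument, are routine and mirror Appendix~\ref{app:proof_LFpossKSNC}.
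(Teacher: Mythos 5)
Your proposal is correct and follows essentially the same route as the paper's proof: strip the Bobs and setting choices from the Appendix~\ref{app:proof_LFpossKSNC} construction, invoke AOE for a single joint distribution $P(f_{\rm All})$, and use Commutation Irrelevance to argue that the intervening (inverse) unitaries—which either cancel in matched pairs or commute past the remaining context measurements—can be pushed to the end, so each context marginal equals the Born-rule correlation and the possibilistic KS relations yield an unrealizable marginal problem. The "delicate point" you flag is handled in the paper exactly as you anticipate, by noting that between the measurements of a context $C_k$ the surviving gap operators reduce to the inverses $U_j^\dagger$ for $j\in\mathbf{M'}\subseteq C_1\setminus C_2$, to which Commutation Irrelevance directly applies.
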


\begin{proof}
    
Given a quantum possibilistic KS contextual model, the corresponding CI protocol of it only has the following differences compared to the LF protocol in Appendix~\ref{app:proof_LFpossKSNC}: 
\begin{enumerate}
    \item There are no Bobs and system $T$;
    \item The system $S$ is prepared in the state $\ket{\psi}$ as in the KS contextual model (or an arbitrary state in case of a state-independent KS contextual model);
    \item The superobserver Alice has no choices and does not reveal any outcomes until the end of the experiment.\footnote{
One disadvantage of this protocol compared to the LF protocol in Appendix~\ref{app:proof_LFpossKSNC} is that one will not be able to derive constraints on empirical correlations that are analogous to LF inequalities. This is because the only empirical correlation in the CI protocol is over some of the outcomes in the final context $C_K$. If one wishes to derive inequalities, one could keep the superobserver Alice's choices and use the assumption of No-Superdeterminism defined in Ref.~\cite{bong2020strong} in addition to Commutation Irrelevance and Absoluteness of Observed Events. } 
\end{enumerate}

First, note that by AOE, the joint probability distribution $P(f_{\rm All})$ exists for all runs of the experiment.

    Now, let us consider each context in the KS contextual model.
    
    Starting with $C_1$. Since all measurement outcomes coexist before any of Alice's reversal operations, they are accessible to all friends in $F_{C_1}$ and need to agree with the Born rule prediction. (We here and as usual, always assume the operational adequacy of quantum theory. That is, any operationally accessible correlation can be predicted by the Born rule.)
    Thus, by
    \cref{eq:app_KSC_postselection}, $P(f_{C_1})$, which is a marginal of $P(f_{\rm All})$, satisfies 
    \begin{equation}
        P(f_{C_1} \in \alpha_{C_1})>  0.
    \end{equation}

    Next, we turn to $C_2$. 
    A measurement in $C_2$ performed by an agent in $F_{C_1}$ is not reversed by Alice when the rest of the measurements in $C_2$ are performed. As such all outcomes in $C_2$ coexist at that point right before any measurements in $C_2$ is reversed and hence, the correlation over them is accessible. Then, by \cref{eq:app_KSC_impossible_C2}, $ P(f_{C_2})$, which is a marginal of $P(f_{\rm All})$, satisfies
    \begin{equation}
        P(f_{C_2}\in \alpha_{C_2})=  0. \label{eq:app_CI_C2.2}
        \end{equation}
    
    Next, for context $C_3$. 
    
    If all measurements in $C_3$ are performed by friends in either $F_{C_2}$ or $F_{C_3}$, i.e., if $ (C_3 \cap C_1)\subseteq C_2$, we can use the same arguments as the ones used for context $C_2$ and derive
    \begin{align}
    \label{eq:c3}
        P(f_{C_3}\in \alpha_3)=0,
    \end{align}
    since all of their outcomes coexist at the point right before any measurement in $C_3$ is reversed.

    If instead, there exists a nonempty set of measurements $\mathbf{M'}$ in $C_3$ such that $\mathbf{M'}\subseteq C_1$ but $\mathbf{M'} \cap C_2=\emptyset$, i.e., each measurement in $\mathbf{M'}$ is also in $C_1$ but not in $C_2$, then these measurements in $\mathbf{M'}$ has been undone before the rest of measurements in $C_3$ are performed. Nevertheless, we can still prove \cref{eq:c3} with Commutation Irrelevance.
    This is because all operations after in between all $C_3$ measurements are effectively only the inverses of the unitaries for measurements in $\mathbf{M'}$ (all the rest cancels out each other due to $U^{\dagger}U=\mathbf{1}$). 
    Thus by Commutation Irrelevance, 
    $P(f_{C_3})$ (which is a marginal or $P(f_{\rm All})$) must be the same as the correlation over $f_{C_3}$ in an alternative protocol where $f_{C_3}$ are sequentially measured without any operations in-between and the inverses of the unitaries for measurements in $\mathbf{M'}$ are done only after all measurements in $f_{C_3}$ are performed. Then, by \eqref{eq:app_KSC_impossible_C3}, we obtain \cref{eq:c3}.

    \begin{figure}[h]
    \centering
    \includestandalone[width=0.5\textwidth]{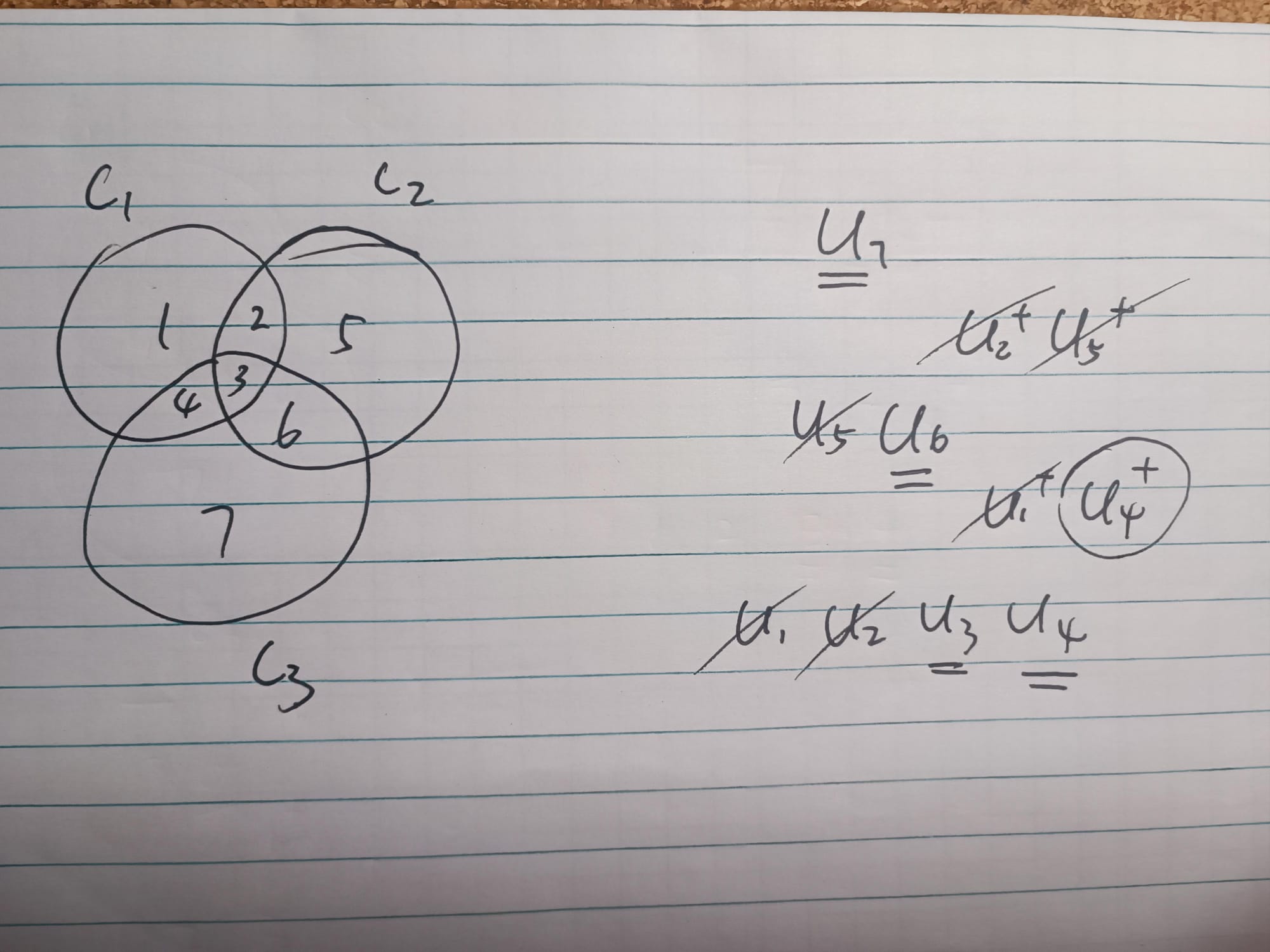}
    \caption{Left: Venn diagrams of three contexts; Right: the corresponding CI protocol.}
    \label{fig_vennCI}
    \end{figure}
    (To see the above use of Commutation Irrelevance more explicitly, it is sufficient to consider the Venn diagram in the left  of \cref{fig_vennCI} and the corresponding protocol in the right of \cref{fig_vennCI}. Here $\mathbf{M'}$ has a single element $M_4$. Since $U_7 U_2^{\dagger} U_5^{\dagger}U_5 U_6 U_1^{\dagger}U_4^{\dagger} U_1U_2U_3U_4=U_4^{\dagger}U_7 U_6U_3U_4$, the correlation over all outcomes in $f_{C_3}$ is the same as the correlation in the alternative protocol where $U_4^{\dagger}$ is performed after all measurements in $C_3$.  

    Analogous reasoning yields \begin{equation}
        P(f_{C_k} \in \alpha_k) = 0
    \end{equation} for any $k\in\{2,3,\ldots,K\}$. Thus, we obtain indeed that \begin{equation}
    \begin{split}
         P(f_{C_1}\in \alpha_{C_1} ) &> 0, \\
        P(f_{C_2}\in\alpha_{C_2} ) &= 0, \\
        P(f_{C_3}\in\alpha_{C_3}) &= 0, \\
        &\vdots \\
        P(f_{C_K}\in\alpha_{C_K}) &= 0,
    \end{split}
    \end{equation} which cannot occur as these correlations constitute a possibilistic contextual model for which no global distribution $P(f_{\rm All})$ can have all of them as its marginals.
    
\end{proof}

%\end{appendices}

\end{document}